\newtheorem{definition}{Definition}
\newtheorem{remark}{Remark}
\newtheorem{proposition}{Proposition}
\newtheorem{corollary}{Corollary}
\newtheorem{theorem}{Theorem}  
\newtheorem{lemma}{Lemma}
\DeclareMathOperator*{\argmax}{argmax}
\begin{document}

\begin{center}

{\Large Maximal Cost-Bounded Reachability Probability} \\
{\Large on Continuous-Time Markov Decision Processes} \\
{~}\\
{\large Hongfei Fu}\\
{~}\\
{Lehrstuhl f{\"u}r Informatik II, RWTH Aachen University, Germany}\\

\end{center}

\begin{abstract}
In this paper, we consider multi-dimensional maximal cost-bounded reachability probability over continuous-time Markov decision processes 
(CTMDPs). Our major contributions are as follows. Firstly, we derive an integral characterization which states that the maximal 
cost-bounded reachability probability function is the least fixed-point of a system of integral equations. Secondly, we prove that the 
maximal cost-bounded reachability probability can be attained by a measurable deterministic cost-positional scheduler. Thirdly, we provide 
a numerical approximation algorithm for maximal cost-bounded reachability probability. We present these results under the setting of both 
early and late schedulers. Besides, we correct a fundamental proof error in the PhD Thesis by Martin Neuh\"{a}u\ss{}er on maximal 
time-bounded reachability probability by completely new proofs for the more general case of multi-dimensional maximal cost-bounded 
reachability probability.
\end{abstract}

\section{Introduction}

The class of continuous-time Markov decision processes (CTMDPs) (or controlled Markov chains)~\cite{Puterman:1994:MDP:528623, controlledCTMG} is a stochastic model that incorporates
both features from continuous-time Markov chains (CTMCs)~\cite{feller.william.aiptia} and discrete-time Markov decision processes 
(MDPs)~\cite{Puterman:1994:MDP:528623}. A CTMDP extends a CTMC in the 
sense that it allows non-deterministic choices, and it extends an MDP in the sense that it incorporates negative exponential time-delays. 
Due to its modelling capability of real-time probabilistic behaviour and non-determinism, CTMDPs are widely used in 
dependability analysis and performance evaluation~\cite{DBLP:journals/cacm/BaierHHK10}.

In a CTMDP, non-determinism is resolved by \emph{schedulers}~\cite{DBLP:conf/formats/WolovickJ06}. Informally, a scheduler determines the 
non-deterministic choices depending on the finite trajectory of the CTMDP so far and possibly the sojourn time of the current state. A 
scheduler is assumed to be \emph{measurable} so that it induces a well-defined probability space over the infinite trajectories of the 
underlying CTMDP. Measurable schedulers are further divided into categories of \emph{early schedulers} and 
\emph{late schedulers}~\cite{DBLP:conf/fossacs/NeuhausserSK09,DBLP:conf/formats/WolovickJ06}. A scheduler that makes the choice solely by the trajectory 
so far is called an early scheduler, while a scheduler utilizes both the trajectory and the sojourn time (at the current state) is 
called a late scheduler. With schedulers, one can reason about quantitative information such as the maximal/minimal 
probability/expectation of certain property.

In this paper, we focus on the problem to compute \emph{max/min resource-bounded reachability probability} on a CTMDP. Typical 
resource types considered here are time and cost, where a time bound can be deemed as a special cost bound with unit-cost 1. 
In general, the task is to compute or approximate the optimal (max/min) reachability probability to certain target states within a given 
resource bound (e.g., a time bound). 

Optimal time-bounded reachability probability over CTMDPs has been widely studied in recent years. 
Neuh\"{a}u\ss{}er~\emph{et al.}~\cite{DBLP:conf/qest/NeuhausserZ10} proved that the maximal time-bounded reachability probability function is the least 
fixed point of a system of integral equations. Rabe and Schewe~\cite{DBLP:journals/acta/RabeS11} showed that the max/min time-bounded 
reachability probability can be attained by a deterministic piecewise-constant time-positional scheduler. Efficient approximation algorithms are also 
developed by, e.g., Neuh\"{a}u\ss{}er \emph{et al.}~\cite{DBLP:conf/qest/NeuhausserZ10}, Br{\'a}zdil 
\emph{et al.}~\cite{DBLP:journals/iandc/BrazdilFKKK13}, Hatefi \emph{et al.}~\cite{DBLP:conf/fsen/HatefiH13} and 
Rabe~\emph{et al.}~\cite{DBLP:conf/fsttcs/FearnleyRSZ11}. 

As to optimal cost-bounded reachability probability, much less is known. To the best of the author's knowledge, the only prominent result is 
by Baier~\emph{et al.}~\cite{DBLP:conf/birthday/BaierHHK08}, which establishes a certain duality property between time and cost bound. 
Their result is restrictive in the sense that (i) it assumes that the CTMDP have everywhere positive unit-cost values, (ii) it only 
takes into account one-dimensional cost-bound aside the time-bound, and (iii) it does not really provide an approximation algorithm
when both time- and cost-bounds are present. 

Besides resource-bounded reachability probability, we would like to mention another research field on CTMDPs with costs (or dually, 
rewards), which is (discounted) accumulated reward over finite/infinite horizon (cf.~\cite{DBLP:journals/cor/BuchholzS11,controlledCTMG}, just to mention a little). 

\textbf{Our Contribution.} We consider multi-dimensional maximal cost-bounded reachability probability (\emph{abbr.} MMCRP) over CTMDPs 
under the setting of both early and late schedulers, for which the unit-cost is constant. We first prove that the 
MMCRP function is the least fixed-point of a system of integral equations. Then we prove that deterministic 
cost-positional measurable schedulers suffice to achieve the MMCRP value. Finally, we describe a numerical algorithm which 
approximates the MMCRP value with an error bound. The approximation algorithm relies on a differential characterization 
which in turn is derived from the least fixed-point characterization. 
The complexity of the approximation algorithm is polynomial in the size of the CTMDP and the reciprocal of the error bound, and exponential 
in the dimension of cost vectors. 

Besides, we point out a fundamental proof error in the treatment of maximal time-bounded reachability probability on continuous-time Markov 
decision processes~\cite{DBLP:phd/de/Neuhausser2010,DBLP:conf/qest/NeuhausserZ10}. 
We fix this error in the more general setting of maximal cost-bounded reachability probability by completely new proofs. 

\textbf{Structure of The Paper.} Section 2 introduces some preliminaries of CTMDPs. Section 3 illustrates the definition of schedulers and 
the probability spaces they induce. Section 4 establishes a general fixed-point theorem for the probability measures induced by 
schedulers. In Section 5, we define the notion of maximal cost-bounded reachability probability and derive the least-fixed-point 
characterization, while we also point out the proof error in~\cite{DBLP:phd/de/Neuhausser2010,DBLP:conf/qest/NeuhausserZ10}. 
In Section 6, we prove that the maximal cost-bounded reachability probability can be reached by a measurable 
deterministic cost-positional scheduler. In Section 7, we derive a differential characterization which is crucial to our approximation 
algorithm. In Section 8, we present our approximation algorithm. Finally, Section 9 concludes the paper. 

\section{Continuous-Time Markov Decision Processes}

In the whole paper, we will use the following convention for notations. We will denote by $\mathbb{R}_{\ge 0}$ the set of non-negative 
real numbers and by $\mathbb{N}_0$ the set of non-negative integers. We use $x,c,d,t,\tau$ to range over real numbers, $l,m,n,i,j$ to 
range over $\mathbb{N}_0$, and bold-face letters $\mathbf{x},\mathbf{c},\mathbf{d}$ to range over (column) real vectors. Given 
$\mathbf{c}\in\mathbb{R}^k$ ($k\in\mathbb{N}$), we denote by $\mathbf{c}_i$ ($1\le i\le k$) the $i$-th coordinate of $\mathbf{c}$. We denote by $\vec{x}$ the 
real vector whose coordinates are all equal to $x\in\mathbb{R}$ (with the implicitly known dimension). We extend $\{\le,<,\ge,>\}$ to real 
vectors and functions in a pointwise fashion: for two real vectors $\mathbf{c},\mathbf{d}$, 
$\mathbf{c}\le\mathbf{d}$ iff $\mathbf{c}_i\le\mathbf{d}_i$ for all $i$; 
for two real-valued functions $g,h$, $g\le h$ iff $g(y)\le h(y)$ for all $y$.  
Given a set $Y$, we let 
$\mathbf{1}_Y$ be the indicator function of $Y$, i.e,  $\mathbf{1}_Y(y)=1$ if $y\in Y$ and $\mathbf{1}_Y(y)=0$ for $y\in X-Y$, where 
$X\supseteq Y$ is an implicitly known set. Given a positive real number $\lambda>0$, let 
${f}_\lambda(t):=\lambda\cdot e^{-\lambda\cdot t}~(t\ge 0)$ be the probability density function of the negative exponential distribution 
with rate $\lambda$. Besides, we will use $g,h$ to range over general functions.

\subsection{The Model}

\begin{definition}
A \emph{Continuous-Time Markov Decision Process} (CTMDP) is a tuple $\left(L,Act,\mathbf{R},\{\mathbf{w}_i\}_{1\le i\le k}\right)$ where 
\begin{itemize} \itemsep1pt \parskip0pt \parsep0pt
\item $L$ is a finite set of \emph{states} (or \emph{locations});
\item $Act$ is a finite set of \emph{actions};
\item $\mathbf{R}:L\times Act\times L\rightarrow\mathbb{R}_{\ge 0}$ is the \emph{rate matrix};
\item $\{\mathbf{w}_i:L\times Act\rightarrow\mathbb{R}_{\ge 0}\}_{1\le i\le k}$ is the family of $k$ \emph{unit-cost functions} ($k\in\mathbb{N}$); 
\end{itemize}
An action $a\in Act$ is \emph{enabled} at state $s\in L$ if $\mathbf{E}(s,a):=\sum_{u\in L}\mathbf{R}(s,a,u)$ is non-zero. The set of enabled 
actions at $s\in L$ is denoted by $\mathrm{En}(s)$. We assume that for each state $s\in L$, $\mathrm{En}(s)\ne\emptyset$.
\end{definition}
Let $\left(L,Act,\mathbf{R},\{\mathbf{w}_i\}_{1\le i\le k}\right)$ be a CTMDP. For each $s,s'\in L$ and $a\in\mathrm{En}(s)$, we define 
\[
\mathbf{P}(s,a,s'):=\frac{\mathbf{R}(s,a,s')}{\mathbf{E}(s,a)}
\]
to be the discrete transition probability from $s$ to $s'$ via $a$. We denote by $\mathbf{w}(s,a)$ the real vector 
$\{\mathbf{w}_i(s,a)\}_{1\le i\le k}$ for each $(s,a)\in L\times Act$\enskip. Given $s\in L$ and $a\in Act$, we denote by $\mathcal{D}[s]$ 
the Dirac distribution (over $L$) at $s$ (i.e., $\mathcal{D}[s](s)=1$ and $\mathcal{D}[s](s')=0$ for $s'\in L-\{s\}$) and by 
$\mathcal{D}[a]$ the Dirac distribution (over $Act$) at $a$. Moreover, we define (with $\min\emptyset:=1$):
\begin{itemize}\itemsep1pt \parskip0pt \parsep0pt
\item $\mathbf{w}_{\min}:=\min\{\mathbf{w}_i(s,a)\mid 1\le i\le k, s\in L, a\in\mathrm{En}(s),\mathbf{w}_i(s,a)>0\}$~;
\item $\mathbf{w}_{\max}:=\max\{\mathbf{w}_i(s,a)\mid 1\le i\le k, s\in L, a\in\mathrm{En}(s)\}$~;
\item $\mathbf{E}_{\max}:=\max\{\mathbf{E}(s,a)\mid s\in L, a\in\mathrm{En}(s)\}$~;
\item $\mathbf{P}_{\min}:=\min\{\mathbf{P}(s,a,s')\mid s,s'\in L, a\in\mathrm{En}(s),\mathbf{P}(s,a,s')>0\}$\enskip.
\end{itemize}
We will use $s,s'$ (resp. $a,b$) to range over states (resp. actions) of a CTMDP. 

Often, a CTMDP is accompanied with an initial distribution which specifies the initial stochastic environment (for the CTMDP).

\begin{definition}
Let $\mathcal{M}=\left(L,Act,\mathbf{R},\{\mathbf{w}_i\}_{1\le i\le k}\right)$ be a CTMDP. An \emph{initial distribution} 
(for $\mathcal{M}$) is a function $\alpha:L\rightarrow [0,1]$ such that $\sum_{s\in L}\alpha(s)=1$\enskip.
\end{definition}

Intuitively, the execution of a CTMDP $\left(L,Act,\mathbf{R},\{\mathbf{w}_i\}_{1\le i\le k}\right)$ with a  
\emph{scheduler} is as follows. At the beginning, an initial state $s$ is chosen (as the current state) w.r.t the initial distribution 
$\alpha$. Then the scheduler chooses an action $a$ enabled at $s$. Afterwards, a time-delay occurs at the state $s$ whose 
duration observes the negative exponential distribution with rate $\mathbf{E}(s,a)$. After the time-delay, the current state is 
switched to an arbitrary state $s'\in L$ with probability $\mathbf{P}(s,a,s')$, and so forth. Besides, each cost function $\mathbf{w}_i$ 
assigns to each state-action pair $(s,a)$ the $i$-th constant unit-cost $\mathbf{w}_i(s,a)$ (per time unit) when the CTMDP dwells at state 
$s$. Basically, the scheduler makes the decision of the action to be chosen when entering a new state, and has two distinct 
objectives: either to maximize a certain property or (in contrast) to minimize a certain property. In this paper, we will focus on the objective to 
maximize a cost-bounded reachability probability for a certain target set of states. 

In this paper, we also study an important subclass of CTMDPs, called \emph{locally-uniform} CTMDPs 
(cf.~\cite{DBLP:conf/fossacs/NeuhausserSK09}).

\begin{definition}
A CTMDP $\left(L,Act,\mathbf{R},\{\mathbf{w}_i\}_{1\le i\le k}\right)$ is \emph{locally-uniform} iff $\mathbf{E}(s,a)=\mathbf{E}(s,b)$ and 
$\mathbf{w}_i(s,a)=\mathbf{w}_i(s,b)$ for all $1\le i\le k$, $s\in L$ and $a,b\in\mathrm{En}(s)$. 
\end{definition}
Intuitively, a locally uniform CTMDP has the property that the time-delay and the cost is independent of the action chosen at each state.
For locally-uniform CTMDPs, we simply use $\mathbf{E}(s)$ to denote $\mathbf{E}(s,a)$ ($a\in\mathrm{En}(s)$ is arbitrary), and 
$\mathbf{w}(s),\mathbf{w}_i(s)$ for $\mathbf{w}(s,a),\mathbf{w}_i(s,a)$ likewise.

\subsection{Paths and Histories}

In this part, we introduce the notion of paths and histories. Intuitively, paths reflect infinite executions of a CTMDP, whereas histories
reflect finite executions of a CTMDP. Below we fix a CTMDP 
$\mathcal{M}=\left(L,Act,\mathbf{R},\{\mathbf{w}_i\}_{1\le i\le k}\right)$\enskip. 

\begin{definition}
A(n infinite) \emph{path} $\pi$ is an infinite sequence 
\[
\pi=\left\langle s_0\xrightarrow{a_0,t_0}s_1\xrightarrow{a_1,t_1}s_2\dots\right\rangle
\]
such that $s_i\in L$, $t_i\in\mathbb{R}_{\ge 0}$ and $a_i\in Act$ for all $i\ge 0$; We denote $s_i,t_i$ and $a_i$ by 
$\pi[i],\pi\langle i\rangle$ and $\pi(i)$, respectively. A (finite) \emph{history} $\xi$ is a finite sequence 
\[
\xi=\left\langle s_0\xrightarrow{a_0,t_0}s_1\xrightarrow{a_1,t_1}s_2\dots s_m\right\rangle\quad(m\ge 0) 
\]
such that  $s_i\in L$, $t_i\in\mathbb{R}_{\ge 0}$ and $a_i\in Act$ for all $0\le i\le m-1$, and $s_m\in L$; We denote $s_i,t_i,a_i$ and $m$ by 
$\xi[i],\xi\langle i\rangle,\xi(i)$ and $|\xi|$, respectively. Moreover, we define $\xi\downarrow:=\xi\left[|\xi|\right]$ to be the last 
state of the history $\xi$\enskip.
\end{definition}
Below we introduce more notations on paths and histories. We denote the set of paths and histories (of $\mathcal{M}$) by 
$Paths(\mathcal{M})$ and $Hists(\mathcal{M})$, respectively. We define 
$Hists^n(\mathcal{M}):=\left\{\xi\in Hists(\mathcal{M})\mid |\xi|=n\right\}$ to be the set of all histories with length $n$ ($n\ge 0$). 
For each $n\in\mathbb{N}_0$ and $\pi\in Paths(\mathcal{M})$, we define the history $\pi[0..n]$ to be the finite prefix of $\pi$ up to $n$; Formally,
\[
\pi[0..n]:=\left\langle \pi[0]\xrightarrow{\pi(0),\pi\langle 0\rangle}\dots\pi[n]\right\rangle\enskip.
\]
Given $\pi\in Paths(\mathcal{M})$ and $(s,a,t)\in L\times Act \times\mathbb{R}_{\ge 0}$, we denote by $s\xrightarrow{a,t}\pi$ the path 
obtained by ``putting'' the prefix ``$s\xrightarrow{a,t}$'' before $\pi$; Formally, 
\[
s\xrightarrow{a,t}\pi:=\left\langle s\xrightarrow{a,t}\pi[0]\xrightarrow{\pi(0),\pi\langle 0\rangle}\pi[1]\xrightarrow{\pi(1),\pi\langle 1\rangle}\dots\right\rangle\enskip. 
\]
Analogously, we define $s\xrightarrow{a,t}\xi$ (for $\xi\in Hists(\mathcal{M})$) to be the history obtained by ``putting'' 
``$s\xrightarrow{a,t}$'' before the history $\xi$. 

Intuitively, a path $\pi$ reflects a whole execution (trajectory) of the CTMDP where $\pi[i]$ is the current state at the $i$-th stage, 
$\pi(i)$ is the action chosen at $\pi[i]$ and $\pi\langle i\rangle$ is the dwell-time (time-delay) on $\pi[i]$. On the other hand, a history $\xi$ is a 
finite prefix of a path which reflects the execution up to $|\xi|$ stages. 

Below we extend sets of histories to sets of paths in a cylindrical fashion.

\begin{definition}
Suppose $n\in\mathbb{N}_0$ and $\Xi\subseteq Hists^n(\mathcal{M})$. The \emph{cylinder extension} of $\Xi$, denoted $\mathrm{Cyl}(\Xi)$, 
is defined as follows: 
\[
\mathrm{Cyl}(\Xi):=\{\pi\in Paths(\mathcal{M})\mid \pi[0..n]\in\Xi\}\enskip.
\]
\end{definition}
In this paper, we concern costs on paths and histories. The cost is assigned linearly w.r.t the unit-cost and the time spent in a 
state. The following definition presents the details.

\begin{definition}
Given a path $\pi\in Paths(\mathcal{M})$ and a set $G\subseteq L$ of states, we denote by $\mathbf{C}_j(\pi,G)$ ($1\le j\le k$) the $j$-th 
accumulated cost along $\pi$ until $G$ is reached; Formally, if $\pi[m]\in G$ for some $m\ge 0$ then
\[
\mathbf{C}_j(\pi,G):=\sum_{i=0}^{n} \mathbf{w}_j(\pi[i],\pi(i))\cdot \pi\langle i\rangle
\]
where $n\in\mathbb{N}_0\cup\{-1\}$ is the smallest integer such that $\pi[n+1]\in G$; otherwise $\mathbf{C}_j(\pi,G):=+\infty$\enskip.
Given a history $\xi\in Hists(\mathcal{M})$, we denote by $\mathbf{C}_j(\xi)$ ($1\le j\le k$) the accumulated cost of $\xi$ w.r.t the 
$j$-th unit-cost function; Formally, 
\[
\mathbf{C}_j(\xi):=\sum_{i=0}^{|\xi|-1} \mathbf{w}_j(\xi[i],\xi(i))\cdot \xi\langle i\rangle\enskip.
\]
\end{definition}
We denote by $\mathbf{C}(\pi,G)$ (resp. $\mathbf{C}(\xi)$) the vector $\{\mathbf{C}_j(\pi,G)\}_{1\le j\le k}$ (resp. 
$\{\mathbf{C}_j(\xi)\}_{1\le j\le k}$)\enskip. 

\subsection{Measurable Spaces on Paths and Histories}
In the following, we define the measurable spaces for paths and histories, following the definitions 
of~\cite{DBLP:conf/formats/WolovickJ06, DBLP:conf/fossacs/NeuhausserSK09}. Below we fix a CTMDP 
$\mathcal{M}=\left(L,Act,\mathbf{R},\{\mathbf{w}_i\}_{1\le i\le k}\right)$\enskip. Firstly, we introduce the notion of combined actions 
and its measurable space.

\begin{definition}
A \emph{combined action} is a tuple $(a,t,s)$ where $a\in Act,t\in\mathbb{R}_{\ge 0}$ and $s\in L$. The \emph{measurable space} 
$(\Gamma_\mathcal{M},\mathcal{U}_\mathcal{M})$ over combined actions is defined as follows:
\begin{itemize}\itemsep1pt \parskip0pt \parsep0pt
\item $\Gamma_{\mathcal{M}}:=Act\times \mathbb{R}_{\ge 0}\times L$ is the set of combined actions;
\item $\mathcal{U}_{\mathcal{M}}:=2^{Act}\otimes \mathcal{B}(\mathbb{R}_{\ge 0})\otimes 2^{L}$ is the product $\sigma$-algebra for which 
$\mathcal{B}(\mathbb{R}_{\ge 0})$ is the Borel $\sigma$-field on $\mathbb{R}_{\ge 0}$\enskip.
\end{itemize}
\end{definition}
The following definition introduces the notion of templates which will be used to define the measurable spaces.

\begin{definition}
A \emph{template} $\theta$ is a finite sequence 
$\theta=\langle s,U_1,\dots, U_m\rangle$ ($m\ge 0$) 
such that $s\in L$ and $U_i\in\mathcal{U}_\mathcal{M}$ for $1\le i\le m$; The \emph{length} of $\theta$, denoted by $|\theta|$, is defined 
to be $m$. The set of histories $\mathrm{Hists}(\theta)$ \emph{spanned} by a template $\theta$ is defined by:
\begin{align*}
& \mathrm{Hists}\left(\langle s,U_1,\dots, U_m\rangle\right):=\Big\{\xi\in Hists^m(\mathcal{M})\mid \\
& \qquad\qquad\qquad\xi[0]= s\mbox{ and }\left(\xi(i),\xi\langle i\rangle,\xi[i+1]\right)\in U_{i+1}\mbox{ for all }0\le i< m\Big\}\enskip.
\end{align*}
\end{definition}
Now we introduce the measurable spaces on paths and histories, as in the following definition.

\begin{definition}\label{def:msp}
The \emph{measurable space} $(\Omega^n_\mathcal{M}, \mathcal{S}^n_\mathcal{M})$ over
$Hists^n(\mathcal{M})$ $(n\in\mathbb{N}_{0})$ is defined as follows: $\Omega^n_\mathcal{M}=Hists^n(\mathcal{M})$ and 
$\mathcal{S}^n_\mathcal{M}$ is generated by the family
\[
\left\{\mathrm{Hists}(\theta)\mid \theta\mbox{ is a template and }|\theta|=n\right\}
\]
of subsets of $Hists^n(\mathcal{M})$. 

The \emph{measurable space} $(\Omega_\mathcal{M}, \mathcal{S}_\mathcal{M})$ over $Paths(\mathcal{M})$ is defined as follows:  
$\Omega_\mathcal{M}=Paths(\mathcal{M})$ and $\mathcal{S}_{\mathcal{M}}$ is the smallest $\sigma$-algebra generated by the family
\begin{equation}
\left\{\mathrm{Cyl}(\Xi)\mid \Xi\in\mathcal{S}^n_\mathcal{M}\mbox{ for some }n\ge 0\right\}\tag{\S}
\end{equation}
of subsets of $Paths(\mathcal{M})$.
\end{definition}
\begin{remark}\label{rmk:msp}
An alternative way to define the measurable space on paths can be done by changing \emph{(\S)} to the following set:
\[
\mathcal{C}:=\left\{\mathrm{Cyl}(\mathrm{Hists}(\theta))\mid \theta\mbox{ is a template}\right\}\enskip.
\] 
This can be seen as follows. Let $\mathcal{S}'$ be the $\sigma$-algebra on paths generated by $\mathcal{C}$. Clearly, 
$\mathcal{S}'\subseteq\mathcal{S}_\mathcal{M}$. For each $n\in\mathbb{N}_0$, define 
$\mathcal{S}'_n:=\{\Xi\subseteq\Omega^n_\mathcal{M}\mid\mathrm{Cyl}(\Xi)\in\mathcal{S}'\}$\enskip.
One can verify that $\mathcal{S}'_n$ is a $\sigma$-algebra on $\Omega^n_\mathcal{M}$ by the following facts: 
\begin{enumerate}\itemsep1pt \parskip0pt \parsep0pt
\item $\Omega^n_\mathcal{M}\in\mathcal{S}'_n$; 
\item If $\Xi\in\mathcal{S}'_n$ then $\Omega^n_{\mathcal{M}}-\Xi\in\mathcal{S}'_n$\enskip; 
\item If $\Xi_1,\Xi_2,\dots\in\mathcal{S}'_n$ then $\bigcup_{m\ge 0}\Xi_m\in\mathcal{S}'_n$\enskip.
\end{enumerate}
The second and third fact follows from $\mathrm{Cyl}(\Omega^n_{\mathcal{M}}-\Xi)=\Omega_{\mathcal{M}}-\mathrm{Cyl}(\Xi)$ and 
$\mathrm{Cyl}(\bigcup_{m\ge 0}\Xi_m)=\bigcup_{m\ge 0}\mathrm{Cyl}(\Xi_m)$, respectively. Then one obtains 
$\mathcal{S}^n_\mathcal{M}\subseteq\mathcal{S}'_n$ for all $n\ge 0$ since 
$\{\mathrm{Hists}(\theta)\mid\theta\mbox{ is a template and }|\theta|=n\}\subseteq\mathcal{S}'_n$\enskip. This implies that 
$\mathrm{Cyl}(\Xi)\in\mathcal{S}'$ for all $n\ge 0$ and $\Xi\in\mathcal{S}^n_\mathcal{M}$. It follows that 
$\mathcal{S}_\mathcal{M}\subseteq\mathcal{S}'$.
\end{remark}

\section{Schedulers and Their Probability Spaces}\label{sect:scheduler}

The stochastic feature of a CTMDP is endowed by a (measurable) scheduler which resolves the action when a state is entered. In the
following, we briefly introduce schedulers for CTMDPs, which are divided into two categories: early schedulers and late schedulers. 
Most notions in this part stem from~\cite{DBLP:conf/formats/WolovickJ06,DBLP:conf/fossacs/NeuhausserSK09}. 
Below we fix a CTMDP $\mathcal{M}=\left(L,Act,\mathbf{R},\{\mathbf{w}_i\}_{1\le i\le k}\right)$\enskip. 

\begin{definition}
An \emph{early scheduler} $D$ is a function 
\[
D:Hists(\mathcal{M})\times Act\rightarrow [0,1]
\]
such that for all $\xi\in Hists(\mathcal{M})$, the following conditions hold:
\begin{itemize}\itemsep1pt \parskip0pt \parsep0pt
\item $\sum_{a\in Act} D(\xi,a)=1$\enskip;
\item for all $a\in Act$, $D(\xi,a)>0$ implies $a\in\mathrm{En}({\xi}{\downarrow})$\enskip. 
\end{itemize}
$D$ is called \emph{measurable} if for all $a\in Act$ and $n\ge 0$, the function $D(\centerdot,a)$ is measurable w.r.t 
$(\Omega_{\mathcal{M}}^n,\mathcal{S}^n_\mathcal{M})$, provided that the domain of $D(\centerdot,a)$ is restricted to 
$Hists^n(\mathcal{M})$\enskip.
\end{definition}
Intuitively, an early scheduler $D$ chooses a distribution over actions once a state is entered. The decision $D(\xi,\centerdot)$ is based on the 
history $\xi$ which records all the information of the (finite) execution of the CTMDP so far. The measurability condition will be needed 
to define a probability measure for the measurable space $(\Omega_\mathcal{M},\mathcal{S}_\mathcal{M})$. 

Then we introduce the class of late schedulers on locally-uniform CTMDPs. The major difference between early and late schedulers is that 
late schedulers can make the decision when the time-delay at the current state is over.

\begin{definition}
Suppose $\mathcal{M}$ is locally-uniform. A \emph{late scheduler} $D$ is a function 
\[
D:Hists(\mathcal{M})\times \mathbb{R}_{\ge 0}\times Act\rightarrow [0,1]
\]
such that for each $\xi\in Hists(\mathcal{M})$ and $t\in\mathbb{R}_{\ge 0}$, the following conditions hold:
\begin{itemize}\itemsep1pt \parskip0pt \parsep0pt
\item $\sum_{a\in Act} D(\xi,t,a)=1$\enskip;
\item for all $a\in Act$, $D(\xi,t,a)>0$ implies $a\in\mathrm{En}({\xi}{\downarrow})$\enskip. 
\end{itemize}
$D$ is \emph{measurable} iff for all $n\ge 0$ and $a\in Act$, the function $D(\centerdot,\centerdot,a)$ is measurable w.r.t 
$(\Omega_{\mathcal{M}}^n\times \mathbb{R}_{\ge 0},~\mathcal{S}^n_\mathcal{M}\otimes\mathcal{B}(\mathbb{R}_{\ge 0}))$, provided that the 
domain of $D(\centerdot,\centerdot,a)$ is restricted to $Hists^n(\mathcal{M})\times \mathbb{R}_{\ge 0}$\enskip. 
\end{definition}
Intuitively, a late scheduler $D$ is more powerful than an early scheduler since it can make the decision $D(\xi,t,\centerdot)$ based on
the elapsed time $t$ at ${\xi}{\downarrow}$. The locally-uniformity allows a late scheduler to make such decision, without 
mathematical ambiguity on the accumulated cost and the probability density function for the time-delay. In general, late schedulers can 
achieve better objectives due to their extra ability against early schedulers (cf., e.g., maximal time-bounded reachability 
probability~\cite{DBLP:conf/fossacs/NeuhausserSK09}).

In the rest of the paper, when we consider late schedulers, we will assume that the underlying CTMDP is locally 
uniform. We will denote by $\mathcal{E}_\mathcal{M}$ (resp. $\mathcal{L}_\mathcal{M}$) the set of early schedulers 
(resp. late schedulers). For the sake of simplicity, we use `$D$' to denote either an early or late scheduler.

Each measurable early/late scheduler will induce a probability measure on combined actions, when applied to a specific history. Firstly, 
we introduce the probability measure under the setting of early schedulers.

\begin{definition}\label{def:earlysch} Let $\xi\in Hists(\mathcal{M})$ be a history and $D$ a measurable early scheduler. The 
\emph{probability measure} $\mu^D_\mathcal{M}(\xi,\centerdot)$ for the measurable space $(\Gamma_\mathcal{M},\mathcal{U}_\mathcal{M})$ is 
defined as follows: 
\[
\mu^D_\mathcal{M}(\xi,U):=\sum_{a\in\mathrm{En}(\xi\downarrow)}D(\xi,a)\cdot\int_{\mathbb{R}_{\ge 0}} {f}_{\mathbf{E}({\xi}{\downarrow},a)}(t)\cdot\left[\sum_{s\in L}\mathbf{1}_U(a,t,s)\cdot\mathbf{P}({\xi}{\downarrow},a,s)\right]\,\mathrm{d}t
\]
for each $U\in\mathcal{U}_\mathcal{M}$.
\end{definition}
Then we introduce the probability measure under the setting of late schedulers. For the sake of simplicity, we will use the same notation 
$\mu$ (in the case of early schedulers) for late schedulers.

\begin{definition}
Let $\xi\in Hists(\mathcal{M})$ be a history and $D$ a measurable late scheduler. The \emph{probability measure} 
$\mu^D_\mathcal{M}(\xi,\centerdot)$ for the measurable space $(\Gamma_\mathcal{M},\mathcal{U}_\mathcal{M})$ is defined as follows: 
\[
\mu^D_\mathcal{M}(\xi,U):=\int_{\mathbb{R}_{\ge 0}}f_{\mathbf{E}({\xi}{\downarrow})}(t)\cdot\left\{\sum_{a\in \mathrm{En}({\xi}{\downarrow})}D(\xi,t,a)\cdot\left[\sum_{s\in L}\mathbf{1}_{U}(a,t,s)\cdot\mathbf{P}({\xi}{\downarrow},a,s)\right]\right\}\,\mathrm{d}t
\]
for each $U\in\mathcal{U}_\mathcal{M}$.
\end{definition}
Now we define the probability spaces on histories and paths. Firstly, we define the probability space on histories. To this end, we introduce the notion of concatenation as follows.

\begin{definition}
Let $\xi\in Hists(\mathcal{M})$ be a history and $(a,t,s)\in\Gamma_\mathcal{M}$ be a combined action. We define 
$\xi\circ (a,t,s)\in Hists(\mathcal{M})$ to be the history obtained by concatenating $(a,t,s)$ to ${\xi}{\downarrow}$ (i.e. 
$\xi\circ (a,t,s)=\xi[0]\dots{\xi}{\downarrow}\xrightarrow{a,t}s$)\enskip.
\end{definition}
Then the probability space on histories of fixed length is given as follows. 

\begin{definition}
Suppose $D$ is a measurable early (late) scheduler and $\alpha$ is an initial distribution. The sequence $\left\{\mathrm{Pr}_{\mathcal{M},D,\alpha}^n:\mathcal{S}_{\mathcal{M}}^n\rightarrow [0,1]\right\}_{n\ge 0}$ of probability measures is inductively as follows:
\begin{eqnarray*}
\mathrm{Pr}_{\mathcal{M},D,\alpha}^0(\Xi)&:=&\sum_{s\in \Xi}\alpha(s)\enskip;\\
\mathrm{Pr}^{n+1}_{\mathcal{M},D,\alpha}(\Xi)&:=&\int_{\Omega^n_\mathcal{M}}\left[\int_{\Gamma_{\mathcal{M}}}\mathbf{1}_\Xi(\xi\circ\gamma)~\mu^D_\mathcal{M}(\xi,\mathrm{d}\gamma)\right]\mathrm{Pr}^{n}_{\mathcal{M},D,\alpha}(\mathrm{d}\xi)\enskip;
\end{eqnarray*}
for each $\Xi\in \mathcal{S}^{n}_\mathcal{M}$\enskip.
\end{definition}
Finally, the probability space on paths is given as follows. 

\begin{definition}\label{def:prs}
Let $D$ be a measurable early (late) scheduler and $\alpha$ be an initial distribution. The probability space $(\Omega_{\mathcal{M}}, \mathcal{S}_{\mathcal{M}}, \mathrm{Pr}_{\mathcal{M},D,\alpha})$ is defined as follows:
\begin{itemize}\itemsep1pt \parskip0pt \parsep0pt
\item $\Omega_{\mathcal{M}}$ and $\mathcal{S}_{\mathcal{M}}$ is defined as in Definition~\ref{def:msp};
\item $\mathrm{Pr}_{\mathcal{M},D,\alpha}$ is the unique probability measure such that 
\[
\mathrm{Pr}_{\mathcal{M},D,\alpha}(\mathrm{Cyl}(\Xi))=\mathrm{Pr}_{\mathcal{M},D,\alpha}^{n}\left(\Xi\right)
\]
for all $n\ge 0$ and $\Xi\in\mathcal{S}^n_\mathcal{M}$\enskip.
\end{itemize}
\end{definition}
We end this section with a fundamental property asserting that the role of initial distribution $\alpha$ can be decomposed into Dirac 
distributions on individual states. 

\begin{proposition}\label{prop:init}
For each early (late) scheduler $D$ and each initial distribution $\alpha$, 
$\mathrm{Pr}_{\mathcal{M},D,\alpha}(\Pi)=\sum_{s\in L}\alpha(s)\cdot\mathrm{Pr}_{\mathcal{M},D,\mathcal{D}[s]}(\Pi)$
for all $\Pi\in\mathcal{S}_\mathcal{M}$. 
\end{proposition}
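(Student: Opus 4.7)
The plan is to reduce the statement to cylinder sets via a standard $\pi$-$\lambda$ / uniqueness-of-measure argument, and then verify the identity on cylinders by induction on their length. Both sides of the claimed equality are probability measures on $(\Omega_\mathcal{M},\mathcal{S}_\mathcal{M})$: the right-hand side is a finite convex combination of the probability measures $\mathrm{Pr}_{\mathcal{M},D,\mathcal{D}[s]}$ with weights $\alpha(s)$ summing to one. Since the family $\{\mathrm{Cyl}(\Xi)\mid \Xi\in\mathcal{S}^n_\mathcal{M}\mbox{ for some }n\ge 0\}$ generates $\mathcal{S}_\mathcal{M}$ by Definition~\ref{def:msp} and is closed under finite intersection (for $n_1\le n_2$, $\mathrm{Cyl}(\Xi_1)\cap\mathrm{Cyl}(\Xi_2)=\mathrm{Cyl}(\{\xi\in\Omega^{n_2}_\mathcal{M}\mid\xi[0..n_1]\in\Xi_1\mbox{ and }\xi\in\Xi_2\})$), it is a $\pi$-system. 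Hence it suffices to prove the identity for $\Pi=\mathrm{Cyl}(\Xi)$ with $\Xi\in\mathcal{S}^n_\mathcal{M}$, and by Definition~\ref{def:prs} this in turn reduces to showing
\[
\mathrm{Pr}^n_{\mathcal{M},D,\alpha}(\Xi)=\sum_{s\in L}\alpha(s)\cdot\mathrm{Pr}^n_{\mathcal{M},D,\mathcal{D}[s]}(\Xi)
\]
for all $n\ge 0$ and $\Xi\in\mathcal{S}^n_\mathcal{M}$.

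I would prove the displayed identity by induction on $n$. For $n=0$, the left-hand side equals $\sum_{s\in\Xi}\alpha(s)$, while $\mathrm{Pr}^0_{\mathcal{M},D,\mathcal{D}[s]}(\Xi)=\mathbf{1}_\Xi(s)$, so the right-hand side equals $\sum_{s\in L}\alpha(s)\cdot\mathbf{1}_\Xi(s)=\sum_{s\in\Xi}\alpha(s)$. For the inductive step, I would unfold the recursive definition
\[
\mathrm{Pr}^{n+1}_{\mathcal{M},D,\alpha}(\Xi)=\int_{\Omega^n_\mathcal{M}}\left[\int_{\Gamma_\mathcal{M}}\mathbf{1}_\Xi(\xi\circ\gamma)\,\mu^D_\mathcal{M}(\xi,\mathrm{d}\gamma)\right]\mathrm{Pr}^n_{\mathcal{M},D,\alpha}(\mathrm{d}\xi),
\]
apply the induction hypothesis to rewrite the outer measure $\mathrm{Pr}^n_{\mathcal{M},D,\alpha}$ as $\sum_{s\in L}\alpha(s)\cdot\mathrm{Pr}^n_{\mathcal{M},D,\mathcal{D}[s]}$, and pull the finite sum outside the integrals by linearity. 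The result is exactly $\sum_{s\in L}\alpha(s)\cdot\mathrm{Pr}^{n+1}_{\mathcal{M},D,\mathcal{D}[s]}(\Xi)$.

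To transfer the identity on cylinders to all of $\mathcal{S}_\mathcal{M}$, I would invoke the standard uniqueness-of-extension theorem (Dynkin's $\pi$-$\lambda$ theorem): two probability measures that agree on a $\pi$-system generating a $\sigma$-algebra agree on the whole $\sigma$-algebra. Since both sides are probability measures (the right-hand side by finiteness of $L$) agreeing on the $\pi$-system of cylinders, they coincide on $\mathcal{S}_\mathcal{M}$.

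The main technical obstacle is the linearity step in the inductive argument: formally one must justify exchanging the finite sum $\sum_{s\in L}\alpha(s)$ with the iterated integrals over $\Omega^n_\mathcal{M}$ and $\Gamma_\mathcal{M}$. This is harmless because $L$ is finite and each integrand $\mathbf{1}_\Xi(\xi\circ\gamma)$ is nonnegative and bounded by one, so linearity of Lebesgue integration (or equivalently, the elementary fact that a finite linear combination of measures integrates as the corresponding finite linear combination) applies without issue. The measurability prerequisites needed to invoke the recursive definition are exactly those built into the definitions of $\mu^D_\mathcal{M}$ and $\mathrm{Pr}^n_{\mathcal{M},D,\alpha}$, so no additional measurability argument is required here.
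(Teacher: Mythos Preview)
Your proposal is correct and follows essentially the same approach as the paper: both reduce to cylinder sets via a uniqueness-of-measure argument and then prove the identity $\mathrm{Pr}^n_{\mathcal{M},D,\alpha}(\Xi)=\sum_{s\in L}\alpha(s)\cdot\mathrm{Pr}^n_{\mathcal{M},D,\mathcal{D}[s]}(\Xi)$ by induction on $n$. The only difference is cosmetic: in the inductive step the paper approximates the inner integrand by simple functions to push the finite sum through the integral, whereas you invoke linearity of integration for a finite convex combination of measures directly---which is cleaner and amounts to the same thing.
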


\section{A General Integral Characterization}\label{sect:fixpoint}

In this section we derive a general integral characterization for the probability measure on paths. Below we fix a CTMDP 
$\mathcal{M}=\left(L,Act,\mathbf{R},\{\mathbf{w}_i\}_{1\le i\le k}\right)$\enskip. For the sake of simplicity, We will omit  
`$\mathcal{M}$'s which appear in the subscript of the notation `$\mathrm{Pr}$'. 

First, we define shifting functions on histories and paths which shifts each path/history by one transition step. 
\begin{definition}\label{def:shift:set}
Given $\Pi\in\mathcal{S}_\mathcal{M}$ and $(s,a)\in L\times Act$, we define the function 
$P^{s,a}_{\Pi}:\mathbb{R}_{\ge 0}\rightarrow 2^{Paths(\mathcal{M})}$ by: 
$P^{s,a}_\Pi(t):=\{\pi\in Paths(\mathcal{M})\mid s\xrightarrow{a,t}\pi\in\Pi\}$\enskip.

Analogously, given $\Xi\in\mathcal{S}^n_\mathcal{M}$ with $n\ge 1$ and $(s,a)\in L\times Act$, we define 
$H^{s,a}_{\Xi}:\mathbb{R}_{\ge 0}\rightarrow 2^{Hists^{n-1}(\mathcal{M})}$ by:
$H^{s,a}_\Xi(t):=\{\xi\in Hists^{n-1}(\mathcal{M})\mid s\xrightarrow{a,t}\xi\in\Xi\}$\enskip.
\end{definition}
We also define the shifted scheduler of a measurable early/late scheduler.

\begin{definition}\label{def:shift:scheduler}
Let $s\in L$, $a\in Act$ and $t\in\mathbb{R}_{\ge 0}$. For each measurable early (late) scheduler $D$, the early (late) scheduler 
$D[s\xrightarrow{a,t}]$ is defined as follows: 
\begin{itemize}\itemsep1pt \parskip0pt \parsep0pt
\item $D[s\xrightarrow{a,t}](\xi)=D(s\xrightarrow{a,t}\xi)$ for all $\xi\in Hists(\mathcal{M})$, provided that $D$ is an early scheduler;
\item $D[s\xrightarrow{a,t}](\xi,\tau)=D(s\xrightarrow{a,t}\xi,\tau)$ for all $(\xi,\tau)\in Hists(\mathcal{M})\times\mathbb{R}_{\ge 0}$, provided that $D$ is a late scheduler. 
\end{itemize}
\end{definition}
The following lemma states that each shifted set of paths/histories is measurable w.r.t corresponding measurable space.

\begin{lemma}\label{lemm:shift:set:measurability}
$P^{s,a}_\Pi(t)\in\mathcal{S}_\mathcal{M}$ for all $\Pi\in\mathcal{S}_\mathcal{M}$, $(s,a)\in L\times Act$ and 
$t\in\mathbb{R}_{\ge 0}$\enskip. Analogously, $H^{s,a}_\Xi(t)\in\mathcal{S}^{n-1}_\mathcal{M}$ for all $n\ge 1$, 
$\Xi\in\mathcal{S}^n_\mathcal{M}$ , $(s,a)\in L\times Act$ and $t\in\mathbb{R}_{\ge 0}$\enskip.
\end{lemma}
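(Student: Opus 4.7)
The plan is to apply a \textbf{good-sets principle}, first to the history statement and then transferring to paths by piggy-backing on the cylinder construction. For both halves I would fix the data $(s,a,t)$ at the outset, collect the sets $\Xi$ (resp.\ $\Pi$) for which the conclusion holds into a family $\mathcal{G}_n$ (resp.\ $\mathcal{G}$), show each such family is a $\sigma$-algebra, and then check it contains a generating subfamily for $\mathcal{S}^n_\mathcal{M}$ (resp.\ $\mathcal{S}_\mathcal{M}$). Closure under Boolean operations is immediate from the commutativity
$$H^{s,a}_{\Omega^n_\mathcal{M}-\Xi}(t)=\Omega^{n-1}_\mathcal{M}-H^{s,a}_\Xi(t),\qquad H^{s,a}_{\bigcup_m \Xi_m}(t)=\bigcup_m H^{s,a}_{\Xi_m}(t),$$
and the analogous identities for $P^{s,a}_\cdot(t)$, together with $H^{s,a}_{\Omega^n_\mathcal{M}}(t)\subseteq\Omega^{n-1}_\mathcal{M}$ and $P^{s,a}_{\Omega_\mathcal{M}}(t)\subseteq\Omega_\mathcal{M}$.

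For the history half, the generating family is $\{\mathrm{Hists}(\theta)\mid|\theta|=n\}$, so I would compute the image of an arbitrary $\mathrm{Hists}(\theta)$ with $\theta=\langle s',U_1,\dots,U_n\rangle$. Unfolding the definitions, $H^{s,a}_{\mathrm{Hists}(\theta)}(t)$ is empty when $s\ne s'$; when $s=s'$, setting $T:=\{s''\in L\mid (a,t,s'')\in U_1\}$, one obtains
$$H^{s,a}_{\mathrm{Hists}(\theta)}(t)=\bigcup_{s''\in T}\mathrm{Hists}(\langle s'',U_2,\dots,U_n\rangle),$$
a finite union of generators of $\mathcal{S}^{n-1}_\mathcal{M}$ (with the understanding that for $n=1$ an empty-list template $\langle s''\rangle$ spans the singleton $\{s''\}\in\mathcal{S}^0_\mathcal{M}$, which is fine since $L$ carries its power-set $\sigma$-algebra). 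Hence $\mathcal{G}_n=\mathcal{S}^n_\mathcal{M}$.

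For the paths half, I would use Remark~\ref{rmk:msp} to take the generating family to be $\mathcal{C}=\{\mathrm{Cyl}(\mathrm{Hists}(\theta))\}$. The bridge to the history case is the identity $P^{s,a}_{\mathrm{Cyl}(\Xi)}(t)=\mathrm{Cyl}(H^{s,a}_\Xi(t))$ for $\Xi\in\mathcal{S}^m_\mathcal{M}$ with $m\ge 1$; for the length-zero template $\theta=\langle s'\rangle$ one checks by inspection that $P^{s,a}_{\mathrm{Cyl}(\{s'\})}(t)$ is $\Omega_\mathcal{M}$ if $s=s'$ and $\emptyset$ otherwise. Combined with the history result these place all of $\mathcal{C}$ inside $\mathcal{G}$, whence $\mathcal{G}=\mathcal{S}_\mathcal{M}$. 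The only place that looks like it could require care is verifying measurability of the slice $T\subseteq L$, but since $L$ is finite with $2^L$ as its $\sigma$-algebra, every subset of $L$ is automatically measurable, so this step is vacuous. I do not anticipate a real obstacle; the proof is entirely a bookkeeping exercise around the definitions of templates, shifted sets, and the cylinder--history correspondence.
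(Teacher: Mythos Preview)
Your proposal is correct and follows essentially the same good-sets argument as the paper: fix $(s,a,t)$, show that the family of $\Pi$ (resp.\ $\Xi$) with measurable shift is a $\sigma$-algebra via the complement and countable-union identities, and check that it contains the template generators using Remark~\ref{rmk:msp}. The only cosmetic difference is that you treat histories first and then transfer to paths through the identity $P^{s,a}_{\mathrm{Cyl}(\Xi)}(t)=\mathrm{Cyl}(H^{s,a}_\Xi(t))$, whereas the paper handles paths directly and then remarks that the history case is analogous; your explicit computation of $H^{s,a}_{\mathrm{Hists}(\theta)}(t)$ as a finite union of shorter templates is exactly what the paper means by ``follows directly from the definition of templates.''
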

Moreover, each shifted scheduler is measurable, as is illustrated in the following lemma. 

\begin{lemma}\label{lemm:shift:schd:measurability}
Let $s\in L$, $a\in Act$ and $t\in\mathbb{R}_{\ge 0}$. For each measurable early (late) scheduler $D$, $D[s\xrightarrow{a,t}]$ is a 
measurable early (late) scheduler.
\end{lemma}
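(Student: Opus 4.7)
The proof splits into two tasks: checking the non‑measurability scheduler conditions (sums to one, support contained in enabled actions) and checking the measurability condition. The first task is immediate from the fact that $(s\xrightarrow{a,t}\xi)\downarrow = \xi\downarrow$, so the enabled‑action constraint and the normalization inherited from $D$ transfer verbatim to $D[s\xrightarrow{a,t}]$. The substance of the lemma is the measurability.

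My plan for the early case is to factor the shifted scheduler through a ``prepending'' map. Define
\[
\Phi_n : Hists^n(\mathcal{M}) \to Hists^{n+1}(\mathcal{M}), \qquad \Phi_n(\xi) := s\xrightarrow{a,t}\xi.
\]
Then $D[s\xrightarrow{a,t}](\cdot,a') \!\upharpoonright\! Hists^n(\mathcal{M}) = D(\cdot,a') \circ \Phi_n$, so once $\Phi_n$ is shown to be $(\mathcal{S}^n_\mathcal{M},\mathcal{S}^{n+1}_\mathcal{M})$‑measurable, the conclusion follows from the measurability of $D$ and composition of measurable maps. To verify measurability of $\Phi_n$, I will check preimages of the generating sets $\mathrm{Hists}(\theta)$ for templates $\theta = \langle s_0,U_1,\ldots,U_{n+1}\rangle$. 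Unwinding the definition of $\mathrm{Hists}(\theta)$, the preimage is empty unless $s=s_0$, in which case $\Phi_n^{-1}(\mathrm{Hists}(\theta))$ is the set of $\xi\in Hists^n(\mathcal{M})$ with $(a,t,\xi[0])\in U_1$ and $(\xi(i),\xi\langle i\rangle,\xi[i+1])\in U_{i+2}$ for $0\le i < n$. Setting $V := \{s'\in L \mid (a,t,s')\in U_1\}$ (a subset of the finite set $L$, and hence automatically in $2^L$), I can rewrite this preimage as the finite union
\[
\Phi_n^{-1}(\mathrm{Hists}(\theta)) \;=\; \bigcup_{s'\in V} \mathrm{Hists}\bigl(\langle s', U_2, U_3, \ldots, U_{n+1}\rangle\bigr),
\]
which lies in $\mathcal{S}^n_\mathcal{M}$ by definition. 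Since $\mathcal{S}^{n+1}_\mathcal{M}$ is generated by the templates of length $n+1$, this is enough.

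For the late case I extend the same idea to the product space. Define
\[
\Psi_n : Hists^n(\mathcal{M})\times\mathbb{R}_{\ge 0} \to Hists^{n+1}(\mathcal{M})\times\mathbb{R}_{\ge 0}, \qquad \Psi_n(\xi,\tau) := (s\xrightarrow{a,t}\xi,\tau).
\]
Then $D[s\xrightarrow{a,t}](\cdot,\cdot,a') = D(\cdot,\cdot,a')\circ\Psi_n$, and $\Psi_n = \Phi_n \times \mathrm{id}_{\mathbb{R}_{\ge 0}}$. Since $\Phi_n$ is measurable (from the early case argument) and the identity is trivially Borel measurable, the product map is measurable with respect to the product $\sigma$‑algebras $\mathcal{S}^n_\mathcal{M}\otimes\mathcal{B}(\mathbb{R}_{\ge 0})$ and $\mathcal{S}^{n+1}_\mathcal{M}\otimes\mathcal{B}(\mathbb{R}_{\ge 0})$; again composition yields measurability of the shifted late scheduler.

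The main—though modest—obstacle is the template preimage computation: the definition of a template fixes the initial state but constrains subsequent transitions by $\mathcal{U}_\mathcal{M}$‑sets, so ``prepending'' a fixed transition $(a,t)$ turns the \emph{first} template set $U_1$ into a constraint on the (previously unconstrained) initial state $\xi[0]$. Handling this cleanly is exactly what forces the finite union over $s'\in V$ above, and leveraging finiteness of $L$ is what keeps the argument elementary; once this point is dispatched, the rest is bookkeeping.
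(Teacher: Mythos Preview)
Your proof is correct and follows essentially the same route as the paper: both reduce to showing that the prepending map $\xi \mapsto s\xrightarrow{a,t}\xi$ pulls measurable sets back to measurable sets (the paper phrases this via Lemma~\ref{lemm:shift:set:measurability}, since $\Phi_n^{-1}(\Xi)=H^{s,a}_\Xi(t)$, and checks sublevel sets of $D[s\xrightarrow{a,t}]$ directly). Your late-scheduler argument via $\Psi_n=\Phi_n\times\mathrm{id}$ is slightly more economical than the paper's, which repeats the $\sigma$-algebra-of-good-sets argument on the product space rather than invoking measurability of a product map.
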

Based on Lemma~\ref{lemm:shift:set:measurability} and Lemma~\ref{lemm:shift:schd:measurability}, we define a shift probability function as 
follows.

\begin{definition}\label{def:shift:probability}
Let $s\in L$, $a\in Act$, $\Pi\in\mathcal{S}_\mathcal{M}$ and $D$ a measurable early (late) scheduler. Define 
$p_{\Pi,D}^{s,a}:\mathbb{R}_{\ge 0}\rightarrow [0,1]$ by: 
\[
p_{\Pi,D}^{s,a}(t)=\mathrm{Pr}_{D[s\xrightarrow{a,t}],\mathbf{P}(s,a,\centerdot)}\left(P_\Pi^{s,a}(t)\right)
\]
for all $t\ge 0$.
\end{definition}

The following proposition states that the shift probability function is measurable w.r.t $(\mathbb{R}_{\ge 0}, \mathcal{B}(\mathbb{R}_{\ge 0}))$.

\begin{proposition}\label{prop:shiftfunc:measurability}
$p_{\Pi,D}^{s,a}$ is a measurable function w.r.t $(\mathbb{R}_{\ge 0}, \mathcal{B}(\mathbb{R}_{\ge 0}))$ given any $\Pi\in\mathcal{S}_\mathcal{M}$, $s\in L$, $a\in Act$ and measurable early (late) scheduler $D$.
\end{proposition}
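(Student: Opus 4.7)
My plan is a good-sets / Dynkin $\pi$-$\lambda$ argument combined with induction on template length. Fix $s\in L$, $a\in Act$ and a measurable scheduler $D$, and let $\mathcal{G} := \{\Pi \in \mathcal{S}_\mathcal{M} : t \mapsto p^{s,a}_{\Pi,D}(t) \text{ is Borel measurable}\}$. The identities $P^{s,a}_{\Omega_\mathcal{M}\setminus\Pi}(t) = \Omega_\mathcal{M} \setminus P^{s,a}_\Pi(t)$ and $P^{s,a}_{\bigsqcup_n \Pi_n}(t) = \bigsqcup_n P^{s,a}_{\Pi_n}(t)$, together with Lemma~\ref{lemm:shift:set:measurability} and countable additivity of $\mathrm{Pr}$, imply that $\mathcal{G}$ is a $\lambda$-system containing $\Omega_\mathcal{M}$. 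By Remark~\ref{rmk:msp}, the class $\mathcal{C} := \{\mathrm{Cyl}(\mathrm{Hists}(\theta)) : \theta \text{ a template}\}$ generates $\mathcal{S}_\mathcal{M}$; padding the shorter template with $\Gamma_\mathcal{M}$'s makes $\mathcal{C}$ closed under finite intersection, so it is a $\pi$-system. By Dynkin's theorem it then suffices to show $\mathcal{C} \subseteq \mathcal{G}$.

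I would prove $\mathrm{Cyl}(\mathrm{Hists}(\theta)) \in \mathcal{G}$ by induction on $|\theta|$. The base case $|\theta| = 0$ is immediate because $P^{s,a}_{\mathrm{Cyl}(\mathrm{Hists}(\theta))}(t)$ is either $\Omega_\mathcal{M}$ or $\emptyset$. For the step, given $\theta = \langle s_0, U_1, \ldots, U_{n+1}\rangle$, unfolding the recursive definition of $\mathrm{Pr}^{n+1}$ (Definition~\ref{def:prs}) and applying Proposition~\ref{prop:init} to the initial distribution $\mathbf{P}(s,a,\centerdot)$ yields an expression of the form
\[
p^{s,a}_{\mathrm{Cyl}(\mathrm{Hists}(\theta)),D}(t) \;=\; \mathbf{1}_{\{s_0\}}(s) \cdot \sum_{s' \in L} \mathbf{1}_{U_1}(a,t,s') \cdot \mathbf{P}(s,a,s') \cdot q_{s'}(t),
\]
where $q_{s'}(t) := \mathrm{Pr}_{D[s\xrightarrow{a,t}],\mathcal{D}[s']}\bigl(\mathrm{Cyl}(\mathrm{Hists}(\theta'))\bigr)$ with $\theta' := \langle s', U_2, \ldots, U_{n+1}\rangle$. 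The factor $\mathbf{1}_{U_1}(a,t,s')$ is Borel in $t$ since, for fixed $(a,s')$, the corresponding section of any $U_1 \in 2^{Act} \otimes \mathcal{B}(\mathbb{R}_{\ge 0}) \otimes 2^L$ is Borel.

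The main obstacle is the factor $q_{s'}(t)$: both the probability measure and the shifted scheduler inside its recursive construction depend on $t$. I would therefore strengthen the induction hypothesis to read: for every $n \ge 0$, every $\Xi \in \mathcal{S}^n_\mathcal{M}$ and every $s' \in L$, the map $t \mapsto \mathrm{Pr}^n_{D[s\xrightarrow{a,t}],\mathcal{D}[s']}(\Xi)$ is Borel measurable. Unfolding $\mathrm{Pr}^n$ one further layer produces an integrand $I(t,\xi) := \int_\Gamma \mathbf{1}_\Xi(\xi\circ\gamma)\,\mu^{D[s\xrightarrow{a,t}]}(\xi,\mathrm{d}\gamma)$, integrated against the $t$-varying measure $\mathrm{Pr}^{n-1}_{D[s\xrightarrow{a,t}],\mathcal{D}[s']}$. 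Joint $(t,\xi)$-measurability of $I$ reduces to joint measurability of $(t,\xi) \mapsto D[s\xrightarrow{a,t}](\xi,\centerdot) = D(s\xrightarrow{a,t}\xi,\centerdot)$, which in turn follows once one checks, via the template generators of $\mathcal{S}^{n+1}_\mathcal{M}$, that the concatenation map $(t,\xi) \mapsto s\xrightarrow{a,t}\xi$ is $(\mathcal{B}(\mathbb{R}_{\ge 0}) \otimes \mathcal{S}^n_\mathcal{M},\,\mathcal{S}^{n+1}_\mathcal{M})$-measurable. To propagate $t$-measurability through the integration against the $t$-dependent measure, I would run a nested Dynkin argument over $(t,\xi)$-rectangles $B \times C$, where the claim $\int \mathbf{1}_B(t)\mathbf{1}_C(\xi)\,\mathrm{Pr}^{n-1}_{D[s\xrightarrow{a,t}],\mathcal{D}[s']}(\mathrm{d}\xi) = \mathbf{1}_B(t)\cdot\mathrm{Pr}^{n-1}_{D[s\xrightarrow{a,t}],\mathcal{D}[s']}(C)$ is immediate by the inductive hypothesis, and invoke a standard monotone-class / simple-function approximation to handle arbitrary bounded jointly measurable integrands $I(t,\xi)$. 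The late-scheduler case is entirely analogous, with the inner $\tau$-integral and the extra $\tau$-argument of $D$ carried in parallel throughout.
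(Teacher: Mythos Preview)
Your high-level strategy coincides with the paper's: fix $s,a,D$, show that the class of $\Pi$ for which $p^{s,a}_{\Pi,D}$ is Borel forms a $\lambda$-system, and verify that cylinders of templates (a $\pi$-system) belong to it, then invoke Dynkin's $\pi$-$\lambda$ theorem. Your $\lambda$-system verification is essentially identical to the paper's.

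The difference lies in how the template case is handled. The paper dispatches it in one sentence, asserting that for a template $\theta$ one has $p^{s,a}_{\mathrm{Cyl}(\mathrm{Hists}(\theta)),D}=\sum_{s'\in L} d_{s'}\cdot\mathbf{1}_{X_{s'}}$ with constants $d_{s'}\in[0,1]$ and Borel sets $X_{s'}$. You instead carry out an induction on the length $n$, proving the stronger statement that $t\mapsto \mathrm{Pr}^n_{D[s\xrightarrow{a,t}],\mathcal{D}[s']}(\Xi)$ is Borel for every $\Xi\in\mathcal{S}^n_\mathcal{M}$, using joint measurability of the concatenation map and a functional monotone-class argument to push Borel-in-$t$ through an integral against a $t$-dependent measure. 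Your route is longer but is the honest one: the paper's simple-function claim is not correct for general measurable schedulers, since for $|\theta|\ge 2$ the value $\mathrm{Pr}_{D[s\xrightarrow{a,t}],\mathcal{D}[s']}(\mathrm{Cyl}(\mathrm{Hists}(\theta')))$ genuinely varies with $t$ through the shifted scheduler (take e.g.\ $D(s\xrightarrow{a,t}s',b)=g(t)$ for a non-simple measurable $g$). What your argument buys is a proof that actually closes; what the paper's one-liner buys is brevity at the cost of a gap that your inductive scheme is precisely designed to fill.
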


Below we present the integral characterization for measurable early schedulers.  

\begin{theorem}\label{thm:fix-early}
Let $D$ be a measurable early scheduler. For each $\Pi\in\mathcal{S}_\mathcal{M}$ and $s\in L$, we have
\[
\mathrm{Pr}_{D,\mathcal{D}[s]}(\Pi)=\sum_{a\in\mathrm{En}(s)}D(s,a)\cdot\int_{0}^\infty f_{\mathbf{E}(s,a)}(t)\cdot p^{s,a}_{\Pi,D}(t)\,\mathrm{d}t\enskip.
\]
\end{theorem}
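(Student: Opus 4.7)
The plan is to establish the identity first on the $\pi$-system generating $\mathcal{S}_\mathcal{M}$ and then extend it by Dynkin's $\pi$--$\lambda$ theorem. Let
\[
\mathcal{F}:=\left\{\mathrm{Cyl}(\Xi)\mid n\ge 0,\ \Xi\in\mathcal{S}^n_\mathcal{M}\right\};
\]
by the argument of Remark~\ref{rmk:msp}, $\mathcal{F}$ is a $\pi$-system generating $\mathcal{S}_\mathcal{M}$. I will (i) verify the identity on $\mathcal{F}$, and (ii) check that both sides of the equation are probability measures in the argument $\Pi$, so that uniqueness of extension from a $\pi$-system finishes the job.

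The core of step (i) is a ``first-step'' decomposition at the level of finite histories: for all $n\ge 1$, $s\in L$, measurable early schedulers $D$, and $\Xi\in\mathcal{S}^n_\mathcal{M}$,
\[
\mathrm{Pr}^{n}_{D,\mathcal{D}[s]}(\Xi)=\sum_{a\in\mathrm{En}(s)}D(s,a)\cdot\int_{0}^{\infty}\!\!f_{\mathbf{E}(s,a)}(t)\cdot \mathrm{Pr}^{n-1}_{D[s\xrightarrow{a,t}],\,\mathbf{P}(s,a,\centerdot)}\bigl(H^{s,a}_{\Xi}(t)\bigr)\,\mathrm{d}t.
\]
Granting this and using that $\mathrm{Cyl}(H^{s,a}_{\Xi}(t))=P^{s,a}_{\mathrm{Cyl}(\Xi)}(t)$, Definition~\ref{def:prs} converts the inner probability into $p^{s,a}_{\mathrm{Cyl}(\Xi),D}(t)$, yielding the theorem for every $\Pi\in\mathcal{F}$. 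I would prove the first-step decomposition by induction on $n$. The base case $n=1$ follows by unfolding Definition~\ref{def:earlysch} for $\mu^D_\mathcal{M}(\langle s\rangle,\centerdot)$ inside the recursive definition of $\mathrm{Pr}^1$ and then recognising $\sum_{s''}\mathbf{P}(s,a,s'')\mathbf{1}_\Xi(\langle s\xrightarrow{a,t}s''\rangle)$ as $\mathrm{Pr}^{0}_{D[s\xrightarrow{a,t}],\mathbf{P}(s,a,\centerdot)}(H^{s,a}_\Xi(t))$. For the inductive step $n\to n+1$, I would apply the last-step recursion defining $\mathrm{Pr}^{n+1}_{D,\mathcal{D}[s]}$, use the induction hypothesis to rewrite the outer $\mathrm{Pr}^{n}_{D,\mathcal{D}[s]}$-integral over $\Omega^n_\mathcal{M}$ as a sum-integral over $(a,t)$ against $\mathrm{Pr}^{n-1}_{D[s\xrightarrow{a,t}],\mathbf{P}(s,a,\centerdot)}$, swap the order of integration by Fubini/Tonelli, and identify the resulting inner expression as the last-step recursion for $\mathrm{Pr}^{n}_{D[s\xrightarrow{a,t}],\mathbf{P}(s,a,\centerdot)}$ applied to $H^{s,a}_{\Xi}(t)$. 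Lemmas~\ref{lemm:shift:set:measurability} and~\ref{lemm:shift:schd:measurability} ensure that $H^{s,a}_{\Xi}(t)$ lies in $\mathcal{S}^{n-1}_\mathcal{M}$ and that $D[s\xrightarrow{a,t}]$ is a measurable scheduler on which the hypothesis can be invoked.

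For step (ii), the left-hand side is a probability measure by Definition~\ref{def:prs}. For the right-hand side, Lemma~\ref{lemm:shift:set:measurability} and Proposition~\ref{prop:shiftfunc:measurability} make the integrand in $t$ measurable, while countable additivity in $\Pi$ is inherited from $\mathrm{Pr}_{D[s\xrightarrow{a,t}],\mathbf{P}(s,a,\centerdot)}$ by pushing monotone convergence through the (finite) sum over actions and through the Lebesgue integral. Normalisation at $\Pi=\Omega_\mathcal{M}$ follows from $p^{s,a}_{\Omega_\mathcal{M},D}\equiv 1$, $\int_0^\infty f_{\mathbf{E}(s,a)}=1$ and $\sum_{a\in\mathrm{En}(s)} D(s,a)=1$. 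The $\pi$--$\lambda$ theorem then lifts the identity from $\mathcal{F}$ to all of $\mathcal{S}_\mathcal{M}$.

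The main obstacle will be the inductive step of the first-step decomposition: the last-step recursion of $\mathrm{Pr}^{n+1}$ appends a transition at the tail, whereas the decomposition strips one at the head, so the induction must commute these two operations. This forces a careful Fubini/Tonelli interchange on the product of the $\mathrm{Pr}^{n}$-kernel with $\mu^D_\mathcal{M}$, coupled with the identification of the tail scheduler with $D[s\xrightarrow{a,t}]$ so that the hypothesis is applicable; Lemma~\ref{lemm:shift:schd:measurability} is what makes this bookkeeping sound.
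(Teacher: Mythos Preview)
Your proposal is correct and follows essentially the same route as the paper: both verify the identity on cylinders $\mathrm{Cyl}(\Xi)$ by induction on $|\Xi|$ and then extend by uniqueness of probability measures on a generating $\pi$-system. The only cosmetic difference is that where you invoke Fubini/Tonelli to push the induction hypothesis through the integral $\int_{\Omega^n_\mathcal{M}} g_{\Xi,D}(\xi)\,\mathrm{Pr}^n_{D,\mathcal{D}[s]}(\mathrm{d}\xi)$, the paper carries this out by hand via simple-function approximation of $g_{\Xi,D}$; the key identification $g_{\Xi,D}(s\xrightarrow{a,t}\xi)=g_{H^{s,a}_\Xi(t),\,D[s\xrightarrow{a,t}]}(\xi)$ that you flag as the ``main obstacle'' is exactly the crux of the paper's inductive step as well.
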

Intuitively, the integrand $f_{\mathbf{E}(s,a)}(t)\cdot p^{s,a}_{\Pi,D}(t)$ can be viewed as certain ``probability density function''.  
Similarly, we can obtain the integral characterization for measurable late schedulers. 

\begin{theorem}\label{thm:fix-late}
Let $D$ be a measurable late scheduler. For each $\Pi\in\mathcal{S}_\mathcal{M}$ and $s\in S$, we have
\[
\mathrm{Pr}_{D,\mathcal{D}[s]}(\Pi)=\int_{0}^\infty f_{\mathbf{E}(s)}(t)\cdot\left[\sum_{a\in\mathrm{En}(s)} D(s,t,a)\cdot p^{s,a}_{\Pi,D}(t)\right]\,\mathrm{d}t\enskip.
\]
\end{theorem}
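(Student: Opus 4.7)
The plan is to mirror the proof of Theorem~\ref{thm:fix-early}, adapted to accommodate that under a late scheduler the action distribution depends on the sojourn time, so that the sum over actions must live inside the integral over $t$. The overall strategy is: view both sides as probability measures in the argument $\Pi$, check they agree on a $\pi$-system generating $\mathcal{S}_\mathcal{M}$, and conclude by Dynkin's $\pi$--$\lambda$ theorem.

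First, I would verify that the right-hand side, as a function of $\Pi\in\mathcal{S}_\mathcal{M}$, is a well-defined probability measure. Measurability in $t$ of the bracketed integrand follows from Proposition~\ref{prop:shiftfunc:measurability} applied to each $p^{s,a}_{\Pi,D}$ together with the joint measurability of $D(s,\centerdot,a)$, combined via a finite sum over $a\in\mathrm{En}(s)$. Normalisation is immediate since $\sum_{a} D(s,t,a)=1$, $p^{s,a}_{\Omega_\mathcal{M},D}(t)=1$, and $\int_0^\infty f_{\mathbf{E}(s)}(t)\,\mathrm{d}t=1$. Countable additivity follows from the fact that each $\Pi\mapsto p^{s,a}_{\Pi,D}(t)$ is a probability measure on $(\Omega_\mathcal{M},\mathcal{S}_\mathcal{M})$ together with dominated convergence (the integrand is bounded by $f_{\mathbf{E}(s)}$, which is integrable).

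Second, I would verify the equality on the generating $\pi$-system $\mathcal{C}=\{\mathrm{Cyl}(\mathrm{Hists}(\theta))\mid\theta \text{ a template}\}$ from Remark~\ref{rmk:msp}, which is closed under finite intersections by extending the shorter template with universally true constraints and intersecting constraint blocks coordinatewise. For $\Pi=\mathrm{Cyl}(\mathrm{Hists}(\theta))$ I would proceed by induction on $|\theta|$. In the base case $\theta=\langle s'\rangle$, both sides evaluate to $\mathbf{1}_{\{s'\}}(s)$ because $P^{s,a}_{\mathrm{Cyl}(\{s'\})}(t)$ is either all of $Paths(\mathcal{M})$ or empty according as $s=s'$. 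For the inductive step with $\theta=\langle s',U_1,\dots,U_{m+1}\rangle$, I would unfold $\mathrm{Pr}^{m+1}_{D,\mathcal{D}[s]}$ via Definition~\ref{def:prs}, insert the late-scheduler formula for $\mu^D_\mathcal{M}(\xi,\centerdot)$, apply Fubini to pull the time integration to the outside, and then use the defining relation $D[s\xrightarrow{a,t}](\centerdot,\centerdot)=D(s\xrightarrow{a,t}\centerdot,\centerdot)$ combined with Proposition~\ref{prop:init} to identify the inner expression with $\mathrm{Pr}_{D[s\xrightarrow{a,t}],\mathbf{P}(s,a,\centerdot)}\bigl(P^{s,a}_\Pi(t)\bigr)=p^{s,a}_{\Pi,D}(t)$.

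Third, I would invoke Dynkin: the class $\mathcal{D}=\{\Pi\in\mathcal{S}_\mathcal{M}\mid\text{identity holds}\}$ contains $\Omega_\mathcal{M}$ (both sides equal $1$), is closed under complements (using $p^{s,a}_{\Omega_\mathcal{M}\setminus\Pi,D}(t)=1-p^{s,a}_{\Pi,D}(t)$), and is closed under countable disjoint unions (dominated convergence with envelope $f_{\mathbf{E}(s)}$). Since $\mathcal{D}$ contains the $\pi$-system $\mathcal{C}$ and is a $\lambda$-system, it contains $\sigma(\mathcal{C})=\mathcal{S}_\mathcal{M}$.

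The main obstacle will be the inductive bookkeeping in the cylinder case: carefully justifying the Fubini exchange that moves the time integral to the outside past the inner sums and integrals, and then matching the resulting ``shifted'' quantity with the definition of $p^{s,a}_{\Pi,D}(t)$ via the shifted scheduler $D[s\xrightarrow{a,t}]$ and the shifted set $P^{s,a}_\Pi(t)$. Local-uniformity is used precisely here: $f_{\mathbf{E}(s)}$ is independent of $a$, so a single outer time integral with the sum over enabled actions sitting inside faithfully represents the probabilistic semantics of a late scheduler.
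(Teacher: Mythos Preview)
Your high-level strategy---show the right-hand side is a probability measure in $\Pi$, verify agreement on a generating $\pi$-system, then apply Dynkin---is sound and is close in spirit to what the paper does. The paper also argues by agreement on a generating class, but it takes the larger class $\bigcup_{n\ge 0}\{\mathrm{Cyl}(\Xi)\mid\Xi\in\mathcal{S}^n_\mathcal{M}\}$ of \emph{all} cylinders (not just template cylinders) and proves the identity there by induction on $n$, with the inductive step carried by a simple-function approximation of $g_{\Xi,D}(\xi)=\int_{\Gamma_\mathcal{M}}\mathbf{1}_\Xi(\xi\circ\gamma)\,\mu^D_\mathcal{M}(\xi,\mathrm{d}\gamma)$ rather than by a direct Fubini argument.

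Your inductive step, however, has a gap as written. The recursive definition of $\mathrm{Pr}^{m+1}_{D,\mathcal{D}[s]}$ appends a combined action at the \emph{end} of a length-$m$ history: unfolding once and inserting the formula for $\mu^D_\mathcal{M}(\xi,\centerdot)$ exposes the sojourn time at $\xi{\downarrow}$, not at $s$. So ``Fubini to pull the time integration to the outside'' does not directly produce the \emph{first} sojourn time, which is what you need to match $p^{s,a}_{\Pi,D}(t)$ and the shift $D[s\xrightarrow{a,t}]$. To make your template induction go through you would either have to (i) unfold all $m{+}1$ layers into an iterated integral over $\Gamma_\mathcal{M}^{m+1}$ and then rearrange so the first factor sits outside---at which point the inner $m$-fold integral must be re-identified with $\mathrm{Pr}^{m}_{D[s\xrightarrow{a,t}],\mathbf{P}(s,a,\centerdot)}$ (note this needs the statement for \emph{all} schedulers, not just the fixed $D$)---or (ii) strengthen the induction hypothesis to arbitrary $\Xi\in\mathcal{S}^m_\mathcal{M}$, which is exactly what the paper does via simple functions. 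Either fix is workable; as stated, your single unfold plus Fubini does not yet reach the target.
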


\section{Maximal Cost-Bounded Reachability \\Probability}\label{sec:mcbrp}

In this section, we consider maximal cost-bounded reachability probabilities. Below we fix a CTMDP 
$\mathcal{M}=\left(L,Act,\mathbf{R},\{\mathbf{w}_i\}_{1\le i\le k}\right)$ and a set $G\subseteq L$\enskip. 

\begin{definition}\label{def:cbrp}
Let $D$ be a measurable early (late) scheduler. Define the function 
$\mathrm{prob}^{D}_{G}: L\times\mathbb{R}^k\rightarrow [0,1]$ by: 
$\mathrm{prob}^{D}_{G}(s,\mathbf{c}):=\mathrm{Pr}_{D,\mathcal{D}[s]}\left(\Pi_G^\mathbf{c} \right)$ where
\[
\Pi_G^\mathbf{c}:=\{\pi\in Paths(\mathcal{M})\mid\mathbf{C}(\pi,G)\le \mathbf{c}\}\enskip.
\]
Define $\mathrm{prob}^{\mathrm{e},\max}_{G}:L\times\mathbb{R}^k\rightarrow [0,1]$ and $\mathrm{prob}^{\mathrm{l},\max}_{G}: L\times\mathbb{R}^k\rightarrow [0,1]$ by:
\begin{itemize}\itemsep1pt \parskip0pt \parsep0pt
\item $\mathrm{prob}^\mathrm{e,\max}_{G}(s,\mathbf{c}):=\sup_{D\in\mathcal{E}_\mathcal{M}}\mathrm{prob}^{D}_{G}(s,\mathbf{c})$\enskip;
\item $\mathrm{prob}^\mathrm{l,\max}_{G}(s,\mathbf{c}):=\sup_{D\in\mathcal{L}_\mathcal{M}}\mathrm{prob}^{D}_{G}(s,\mathbf{c})$\enskip;
\end{itemize}
for $s\in L$ and $\mathbf{c}\in\mathbb{R}^k$\enskip.
\end{definition}
From the definition, we can see that $\Pi_G^\mathbf{c}$ is the set of paths which can reach $G$ within cost $\mathbf{c}$, 
$\mathrm{prob}^\mathrm{e,\max}_{G}(s,\mathbf{c})$ (resp. $\mathrm{prob}^\mathrm{l,\max}_{G}(s,\mathbf{c})$) is the maximal 
probability of $\Pi_G^\mathbf{c}$ with initial distribution $\mathcal{D}(s)$ (i.e., fixed initial state $s$) ranging over all early (resp. 
late) schedulers. It is not hard to verify that $\Pi^\mathbf{c}_G$ is measurable, thus all functions in Definition~\ref{def:cbrp} are 
well-defined. It is worth noting that if $\mathbf{c}\not\ge \vec{0}$, then both $\mathrm{prob}^{D}_{G}(s,\mathbf{c})$, 
$\mathrm{prob}^\mathrm{e,\max}_{G}(s,\mathbf{c})$ and $\mathrm{prob}^\mathrm{l,\max}_{G}(s,\mathbf{c})$ is zero. 

From Theorem~\ref{thm:fix-early}, Theorem~\ref{thm:fix-late} and Proposition~\ref{prop:init}, we can directly obtain the following 
results. 

\begin{corollary}\label{crlly:int_early}
Let $D$ be a measurable early scheduler. The function $\mathrm{prob}^D_G$ satisfies the following conditions:
\begin{enumerate}\itemsep1pt \parskip0pt \parsep0pt
\item If $s\in G$ then $\mathrm{prob}^D_G(s,\mathbf{c})=\mathbf{1}_{\mathbb{R}^k_{\ge 0}}(\mathbf{c})$\enskip;
\item If $s\not\in G$ then 
\begin{align*}
& \mathrm{prob}^D_G(s,\mathbf{c})= \sum_{a\in\mathrm{En}(s)}D(s,a)~\cdot\\
& \quad\int_{0}^\infty f_{\mathbf{E}(s,a)}(t)\cdot\left[\sum_{s'\in L}\mathbf{P}(s,a,s')\cdot\mathrm{prob}^{D[s\xrightarrow{a,t}]}_G(s',\mathbf{c}-t\cdot\mathbf{w}(s,a))\right]\,\mathrm{d}t\enskip.
\end{align*}
\end{enumerate}
\end{corollary}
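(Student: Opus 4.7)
The corollary is essentially a direct translation of Theorem~\ref{thm:fix-early} (combined with Proposition~\ref{prop:init}) into the concrete setting where the measurable set on paths is $\Pi^{\mathbf{c}}_G=\{\pi\mid \mathbf{C}(\pi,G)\le\mathbf{c}\}$, so my plan is just to unfold definitions carefully.

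For part~(1), when $s\in G$, any path $\pi$ with $\pi[0]=s$ reaches $G$ at step $0$, so by the definition of $\mathbf{C}(\pi,G)$ (with the smallest index $n\in\mathbb{N}_0\cup\{-1\}$ satisfying $\pi[n+1]\in G$ being $n=-1$ and hence an empty sum) we have $\mathbf{C}(\pi,G)=\vec{0}$. Consequently, the event $\Pi^{\mathbf{c}}_G$ under the Dirac initial distribution $\mathcal{D}[s]$ is either the full path space (when $\mathbf{c}\ge\vec{0}$) or empty (otherwise), which yields $\mathrm{prob}^D_G(s,\mathbf{c})=\mathbf{1}_{\mathbb{R}^k_{\ge 0}}(\mathbf{c})$.

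For part~(2), when $s\notin G$, I start by applying Theorem~\ref{thm:fix-early} with $\Pi:=\Pi^{\mathbf{c}}_G$, obtaining
\[
\mathrm{prob}^D_G(s,\mathbf{c})=\sum_{a\in\mathrm{En}(s)}D(s,a)\cdot\int_{0}^\infty f_{\mathbf{E}(s,a)}(t)\cdot p^{s,a}_{\Pi^{\mathbf{c}}_G,D}(t)\,\mathrm{d}t.
\]
The work is then to identify $p^{s,a}_{\Pi^{\mathbf{c}}_G,D}(t)$ with the inner sum in the statement. The key observation, which I regard as the main (albeit mild) obstacle, is that for $s\notin G$ and any $\pi\in Paths(\mathcal{M})$,
\[
\mathbf{C}\bigl(s\xrightarrow{a,t}\pi,G\bigr)=t\cdot\mathbf{w}(s,a)+\mathbf{C}(\pi,G),
\]
because the first transition contributes exactly $t\cdot\mathbf{w}(s,a)$ to every coordinate of the cost, and the index of the first visit to $G$ along $s\xrightarrow{a,t}\pi$ is shifted by one. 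Hence
\[
P^{s,a}_{\Pi^{\mathbf{c}}_G}(t)=\bigl\{\pi\mid \mathbf{C}(\pi,G)\le\mathbf{c}-t\cdot\mathbf{w}(s,a)\bigr\}=\Pi^{\mathbf{c}-t\cdot\mathbf{w}(s,a)}_G,
\]
and both sides are measurable by Lemma~\ref{lemm:shift:set:measurability} (and by direct inspection of the definition of $\Pi^{\mathbf{c}'}_G$).

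Plugging this identification into Definition~\ref{def:shift:probability} gives
\[
p^{s,a}_{\Pi^{\mathbf{c}}_G,D}(t)=\mathrm{Pr}_{D[s\xrightarrow{a,t}],\mathbf{P}(s,a,\centerdot)}\bigl(\Pi^{\mathbf{c}-t\cdot\mathbf{w}(s,a)}_G\bigr),
\]
and Proposition~\ref{prop:init} (with initial distribution $\mathbf{P}(s,a,\centerdot)$) decomposes this last expression as
\[
\sum_{s'\in L}\mathbf{P}(s,a,s')\cdot\mathrm{Pr}_{D[s\xrightarrow{a,t}],\mathcal{D}[s']}\bigl(\Pi^{\mathbf{c}-t\cdot\mathbf{w}(s,a)}_G\bigr)=\sum_{s'\in L}\mathbf{P}(s,a,s')\cdot\mathrm{prob}^{D[s\xrightarrow{a,t}]}_G\bigl(s',\mathbf{c}-t\cdot\mathbf{w}(s,a)\bigr).
\]
Substituting back into the displayed equation from Theorem~\ref{thm:fix-early} yields the formula in~(2). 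No genuine analytic estimate is needed; the entire proof is bookkeeping around how the initial transition $s\xrightarrow{a,t}$ subtracts $t\cdot\mathbf{w}(s,a)$ from the remaining cost budget and shifts the scheduler to $D[s\xrightarrow{a,t}]$.
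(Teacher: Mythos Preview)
Your proof is correct and follows exactly the approach the paper indicates: the paper states (without further detail) that the corollary is obtained directly from Theorem~\ref{thm:fix-early} and Proposition~\ref{prop:init}, and your argument spells out precisely this derivation, including the key identification $P^{s,a}_{\Pi^{\mathbf{c}}_G}(t)=\Pi^{\mathbf{c}-t\cdot\mathbf{w}(s,a)}_G$ for $s\notin G$.
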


\begin{corollary}\label{crlly:int_late}
Let $D$ be a measurable late scheduler. The function $\mathrm{prob}^D_G$ satisfies the following conditions:
\begin{enumerate}\itemsep1pt \parskip0pt \parsep0pt
\item If $s\in G$ then $\mathrm{prob}^D_G(s,\mathbf{c})=\mathbf{1}_{\mathbb{R}^k_{\ge 0}}(\mathbf{c})$\enskip;
\item If $s\not\in G$ then 
\begin{align*}
& \mathrm{prob}^D_G(s,\mathbf{c})= \int_{0}^\infty f_{\mathbf{E}(s)}(t)\cdot\\
& \quad\left\{\sum_{a\in\mathrm{En}(s)}D(s,t,a)\cdot
\left[\sum_{s'\in L}\mathbf{P}(s,a,s')\cdot\mathrm{prob}^{D[s\xrightarrow{a,t}]}_G(s',\mathbf{c}-t\cdot\mathbf{w}(s))\right]\right\}\,\mathrm{d}t\enskip.
\end{align*}
\end{enumerate}
\end{corollary}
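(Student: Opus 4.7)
The plan is to derive Corollary~\ref{crlly:int_late} as a direct instance of Theorem~\ref{thm:fix-late} applied to the measurable set $\Pi = \Pi_G^\mathbf{c}$, after resolving the initial distribution via Proposition~\ref{prop:init}. The absorbing case $s \in G$ will be handled separately by a short argument from the definition of $\mathbf{C}(\pi, G)$.

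First, for $s \in G$, I would observe that every path $\pi$ with $\pi[0]=s$ already meets $G$ at index $0$, so by the definition of $\mathbf{C}_j(\pi, G)$ (taking $n = -1$ as the smallest index with $\pi[n+1] \in G$) we have $\mathbf{C}(\pi, G) = \vec{0}$. Hence $\Pi_G^\mathbf{c}$ contains every path starting at $s$ if $\mathbf{c} \ge \vec{0}$ and is empty otherwise, giving $\mathrm{prob}^D_G(s,\mathbf{c}) = \mathbf{1}_{\mathbb{R}^k_{\ge 0}}(\mathbf{c})$.

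For $s \notin G$, the key bookkeeping step is to identify the shifted set $P_{\Pi_G^\mathbf{c}}^{s,a}(t)$. Since the CTMDP is locally-uniform (a standing assumption whenever late schedulers are considered), $\mathbf{w}(s,a) = \mathbf{w}(s)$ for every $a \in \mathrm{En}(s)$. For any $\pi \in Paths(\mathcal{M})$, the cost along $s \xrightarrow{a,t} \pi$ until $G$ is reached is (using $s \notin G$) exactly $t \cdot \mathbf{w}(s) + \mathbf{C}(\pi, G)$, so $s \xrightarrow{a,t} \pi \in \Pi_G^\mathbf{c}$ iff $\mathbf{C}(\pi, G) \le \mathbf{c} - t \cdot \mathbf{w}(s)$. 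This yields the identity
\[
P_{\Pi_G^\mathbf{c}}^{s,a}(t) \;=\; \Pi_G^{\mathbf{c} - t\cdot \mathbf{w}(s)}.
\]
Combining this with the definition of $p^{s,a}_{\Pi_G^\mathbf{c}, D}(t)$ and invoking Proposition~\ref{prop:init} to split the initial distribution $\mathbf{P}(s,a,\centerdot)$ into Dirac components $\mathcal{D}[s']$, I obtain
\[
p^{s,a}_{\Pi_G^\mathbf{c}, D}(t) \;=\; \sum_{s' \in L}\mathbf{P}(s,a,s') \cdot \mathrm{prob}^{D[s\xrightarrow{a,t}]}_G\!\bigl(s', \mathbf{c} - t\cdot \mathbf{w}(s)\bigr).
\]

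Finally I would apply Theorem~\ref{thm:fix-late} to $\Pi = \Pi_G^\mathbf{c}$, substitute the formula above for $p^{s,a}_{\Pi_G^\mathbf{c}, D}(t)$, and rearrange to recover the stated integral. The main obstacle is essentially only a sanity check rather than a real difficulty: one must verify carefully the cost-accounting identity $\mathbf{C}(s \xrightarrow{a,t} \pi, G) = t \cdot \mathbf{w}(s) + \mathbf{C}(\pi, G)$ in the case $s \notin G$, and, on the measure-theoretic side, confirm that Proposition~\ref{prop:init} is applicable to the shifted scheduler $D[s\xrightarrow{a,t}]$ (which is measurable by Lemma~\ref{lemm:shift:schd:measurability}), so that the integrand in Theorem~\ref{thm:fix-late} is expressible in terms of $\mathrm{prob}^{D[s\xrightarrow{a,t}]}_G$ rather than an abstract probability measure.
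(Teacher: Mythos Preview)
Your proposal is correct and follows exactly the route the paper indicates: the paper simply states that Corollary~\ref{crlly:int_late} follows directly from Theorem~\ref{thm:fix-late} and Proposition~\ref{prop:init}, and you have supplied precisely those details (the identification $P_{\Pi_G^\mathbf{c}}^{s,a}(t)=\Pi_G^{\mathbf{c}-t\cdot\mathbf{w}(s)}$ via the cost-shift identity, and the decomposition of $\mathbf{P}(s,a,\centerdot)$ into Dirac pieces).
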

The following theorem mainly presents the fixed-point characterization for $\mathrm{prob}^\mathrm{e,\max}_{G}$, while it also states that $\mathrm{prob}^\mathrm{e,\max}_{G}$ is Lipschitz continuous.

\begin{theorem}\label{thm:minearlyf}
The function $\mathrm{prob}^{\mathrm{e},\max}_G(\centerdot,\centerdot)$ is the least fixed-point (w.r.t $\le$) of the high-order operator $\mathcal{T}^{\mathrm{e}}_G:\left[L\times\mathbb{R}^k\rightarrow [0,1]\right]\rightarrow\left[L\times\mathbb{R}^k\rightarrow [0,1]\right]$ defined by:  
\begin{itemize}\itemsep1pt \parskip0pt \parsep0pt
\item $\mathcal{T}^{\mathrm{e}}_G(h)(s,\mathbf{c}):=\mathbf{1}_{\mathbb{R}^k_{\ge 0}}(\mathbf{c})$ for all $s\in G$ and $\mathbf{c}\in\mathbb{R}^k$;
\item given any $s\in L-G$ and $\mathbf{c}\in\mathbb{R}^k$, 
\begin{align*}
& \mathcal{T}^\mathrm{e}_G(h)(s,\mathbf{c}):=\\
& \quad\max_{a\in\mathrm{En}(s)}\int_{0}^\infty f_{\mathbf{E}(s,a)}(t)\cdot\left[\sum_{s'\in L}\mathbf{P}(s,a,s')\cdot h(s',\mathbf{c}-t\cdot\mathbf{w}(s,a))\right]\,\mathrm{d}t\enskip;
\end{align*}
\end{itemize}
for each $h:L\times\mathbb{R}^k\rightarrow [0,1]$. Moreover, 
\[
\left|\mathrm{prob}^{\mathrm{e},\max}_G(s,\mathbf{c})-\mathrm{prob}^{\mathrm{e},\max}_G(s,\mathbf{c}')\right|\le \frac{\mathbf{E}_{\max}}{\mathbf{w}_{\min}}\cdot{\parallel}{\mathbf{c}-\mathbf{c}'}{\parallel}_\infty
\]
for all $\mathbf{c},\mathbf{c}'\ge \vec{0}$ and $s\in L$\enskip.
\end{theorem}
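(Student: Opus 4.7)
The plan is to prove the theorem's two assertions in sequence: the least-fixed-point characterization, followed by the Lipschitz bound. For the fixed-point part, I would first verify that $\mathcal{T}^{\mathrm{e}}_G$ is a monotone and $\omega$-continuous operator on the complete lattice of functions $L\times\mathbb{R}^k\rightarrow [0,1]$ ordered pointwise. Monotonicity is immediate from the integral definition; $\omega$-continuity follows from the monotone convergence theorem and the fact that a finite maximum commutes with monotone limits of increasing sequences. By Kleene's fixed-point theorem, the least fixed point exists and equals $h^{\star}:=\sup_n (\mathcal{T}^{\mathrm{e}}_G)^n(\mathbf{0})$, where $\mathbf{0}$ denotes the constant zero function.

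I would then establish $h^{\star}=\mathrm{prob}^{\mathrm{e},\max}_G$ via two inequalities. For $\mathrm{prob}^{\mathrm{e},\max}_G\le h^{\star}$, define $V^D_n(s,\mathbf{c})$ to be the $D$-probability of reaching $G$ in at most $n$ transitions with accumulated cost $\le\mathbf{c}$. A direct induction on $n$, using Corollary~\ref{crlly:int_early} and the trivial bound $\sum_a D(s,a)\,X_a\le\max_a X_a$, yields $V^D_n\le(\mathcal{T}^{\mathrm{e}}_G)^n(\mathbf{0})$; since paths that never visit $G$ carry infinite cost, $\mathrm{prob}^D_G=\lim_n V^D_n\le h^{\star}$, and taking the supremum over $D$ closes this direction. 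For the reverse $h^{\star}\le\mathrm{prob}^{\mathrm{e},\max}_G$, I would induct on $n$: given the IH at level $n$, for each $\epsilon>0$ construct a measurable scheduler $D^{\epsilon}$ that picks the maximizer $a^{\star}(s,\mathbf{c})$ at the initial state and then follows an $\epsilon$-approximately optimal scheduler (from the IH) measurably parametrized by $(t,s')$. Then Corollary~\ref{crlly:int_early} yields $\mathrm{prob}^{D^{\epsilon}}_G(s,\mathbf{c})\ge(\mathcal{T}^{\mathrm{e}}_G)^{n+1}(\mathbf{0})(s,\mathbf{c})-\epsilon$, and letting $\epsilon\to 0$ finishes the step. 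The main obstacle here is the measurable selection: the family $\{D^{\epsilon}_{s',\mathbf{d}}\}$ of near-optimal schedulers must be chosen so that the resulting glued scheduler is jointly measurable on $Hists(\mathcal{M})$ (this is precisely the gap pointed out in Neuh\"{a}u\ss{}er's thesis, so particular care is warranted).

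For the Lipschitz claim, I would proceed by induction on $n$ to bound $|h_n(s,\bar{\mathbf{c}})-h_n(s,\mathbf{c})|$ for $\mathbf{c},\bar{\mathbf{c}}=\mathbf{c}+\vec{\delta}\ge\vec{0}$, where $h_n:=(\mathcal{T}^{\mathrm{e}}_G)^n(\mathbf{0})$ is extended by $0$ outside $\mathbb{R}^k_{\ge 0}$. Splitting the integral defining $h_{n+1}$ (for $s\notin G$ at the maximizing action $a$) along $t$ into an interior region where both $\mathbf{c}-t\mathbf{w}(s,a)$ and $\bar{\mathbf{c}}-t\mathbf{w}(s,a)$ lie in $\mathbb{R}^k_{\ge 0}$ (handled by the inductive Lipschitz bound) and a boundary region where the first leaves the positive orthant while the second has not, a case analysis on the minimizing coordinate $i^{\star}$ of $\mathbf{c}_i/\mathbf{w}_i(s,a)$ shows the boundary region is contained in a single interval of length at most $\delta/\mathbf{w}_{\min}$; combined with the pointwise bound $f_{\mathbf{E}(s,a)}(t)\le\mathbf{E}_{\max}$, this contributes at most $\mathbf{E}_{\max}\delta/\mathbf{w}_{\min}$. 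Arbitrary $\mathbf{c},\mathbf{c}'\ge\vec{0}$ with $\|\mathbf{c}-\mathbf{c}'\|_{\infty}=\delta$ reduce to this anchored case by monotonicity of $h_n$ in $\mathbf{c}$ through the intermediate point $\mathbf{c}\vee\mathbf{c}'$. The hardest step is controlling the Lipschitz constant through iteration without accumulating a factor of $n$ from the interior contribution; I would address this by closing the bound directly on the limit $\mathrm{prob}^{\mathrm{e},\max}_G$ via its fixed-point equation rather than on individual iterates, exploiting that for $s'\notin G$ the boundary values $h(s',\vec{0})=0$ collapse part of the interior contribution and allow the self-referential inequality for the limit to close with constant exactly $\mathbf{E}_{\max}/\mathbf{w}_{\min}$.
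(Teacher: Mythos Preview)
Your overall architecture---Kleene iterates $h_n=(\mathcal{T}^{\mathrm{e}}_G)^n(\mathbf{0})$, two inequalities against $\mathrm{prob}^{\mathrm{e},\max}_G$, and a separate Lipschitz argument---matches the paper's in spirit, but there are two concrete gaps, and they are linked.

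\textbf{The Lipschitz bound.} You correctly worry about accumulation, and with your estimate the worry is justified: bounding the boundary contribution by $f_{\mathbf{E}(s,a)}(t)\le\mathbf{E}_{\max}$ over an interval of length $\le\delta/\mathbf{w}_{\min}$ gives $\mathbf{E}_{\max}\delta/\mathbf{w}_{\min}$, and adding the interior contribution (even weighted by $\int_0^T f_\lambda=1-e^{-\lambda T}<1$) overshoots. The paper's fix is sharper: with $T=\min\{T_{s,a}^{\mathbf{c}},T_{s,a}^{\mathbf{c}'}\}$ the boundary interval sits at $t\approx T$, so the density there is $\lambda e^{-\lambda T}$, not $\lambda$. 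This yields the boundary bound $e^{-\lambda T}\cdot\frac{\mathbf{E}_{\max}}{\mathbf{w}_{\min}}\delta$, and now interior $+$ boundary is a \emph{convex combination}
\[
(1-e^{-\lambda T})\cdot\epsilon(n,\delta)+e^{-\lambda T}\cdot\tfrac{\mathbf{E}_{\max}}{\mathbf{w}_{\min}}\delta,
\]
which by the inductive hypothesis $\epsilon(n,\delta)\le\frac{\mathbf{E}_{\max}}{\mathbf{w}_{\min}}\delta$ stays $\le\frac{\mathbf{E}_{\max}}{\mathbf{w}_{\min}}\delta$ with no accumulation. Your alternative of ``closing on the limit via the fixed-point equation'' does not work as stated: the same self-referential inequality still needs the $e^{-\lambda T}$ factor to close, and your remark about $h(s',\vec{0})=0$ does not supply it.

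\textbf{The reverse inequality $h^\star\le\mathrm{prob}^{\mathrm{e},\max}_G$.} You flag the measurable-selection problem but do not resolve it; ``measurably parametrized by $(t,s')$'' is precisely the hand-wave the paper identifies as the flaw in Neuh\"au\ss er's argument. The paper's resolution is to \emph{interleave} the Lipschitz proof with this construction: once you know the $n$-step value function is Lipschitz in $\mathbf{c}$, you can (when $\mathbf{w}(s,a^*)\ne\vec{0}$) partition the finite time horizon $[0,T_{s,a^*}^{\mathbf{c}})$ into $N$ equal pieces, pick a single near-optimal scheduler $D^{m,\epsilon}_{s',j}$ on each piece, and glue them into a piecewise-constant-in-$t$ scheduler whose measurability is trivial; Lipschitz continuity controls the discretization error as $N\to\infty$. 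So the Lipschitz bound is not a separate afterthought---it is the tool that makes the scheduler construction go through, and the two must be proved together by a single induction on $n$.
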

The counterpart for late schedulers is illustrated as follows.

\begin{theorem}\label{thm:minlatef}
The function $\mathrm{prob}^{\mathrm{l},\max}_G$ is the least fixed-point (w.r.t $\le$) of the high-order operator $\mathcal{T}^{\mathrm{l}}_G:\left[L\times\mathbb{R}^k\rightarrow [0,1]\right]\rightarrow\left[L\times\mathbb{R}^k\rightarrow[0,1]\right]$ defined as follows:  
\begin{itemize}\itemsep1pt \parskip0pt \parsep0pt
\item $\mathcal{T}^\mathrm{l}_G(h)(s,\mathbf{c}):=\mathbf{1}_{\mathbb{R}^k_{\ge 0}}(\mathbf{c})$ if $s\in G$;
\item If $s\not\in G$ then 
\begin{align*}
& \mathcal{T}^\mathrm{l}_G(h)(s,\mathbf{c}):=\\
& \quad\int_{0}^\infty f_{\mathbf{E}(s)}(t)\cdot\max_{a\in\mathrm{En}(s)}\left[\sum_{s'\in L}\mathbf{P}(s,a,s')\cdot h(s',\mathbf{c}-t\cdot\mathbf{w}(s))\right]\,\mathrm{d}t\enskip;
\end{align*}
\end{itemize}
for each $h:L\times\mathbb{R}^k\rightarrow [0,1]$. Moreover, 
\[
\left|\mathrm{prob}^{\mathrm{l},\max}_G(s,\mathbf{c})-\mathrm{prob}^{\mathrm{l},\max}_G(s,\mathbf{c}')\right|\le \frac{\mathbf{E}_{\max}}{\mathbf{w}_{\min}}\cdot{\parallel}{\mathbf{c}-\mathbf{c}'}{\parallel}_\infty
\]
for all $\mathbf{c},\mathbf{c}'\ge \vec{0}$ and $s\in L$\enskip.
\end{theorem}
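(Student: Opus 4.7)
The plan is to adapt the proof of Theorem~\ref{thm:minearlyf}, with Corollary~\ref{crlly:int_late} replacing Corollary~\ref{crlly:int_early}. The argument splits into (i) $\mathrm{prob}^{\mathrm{l},\max}_G$ is a fixed point of $\mathcal{T}^{\mathrm{l}}_G$, (ii) it is the least such, and (iii) the Lipschitz bound.

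For (i), the inequality $\mathrm{prob}^{\mathrm{l},\max}_G\le\mathcal{T}^{\mathrm{l}}_G(\mathrm{prob}^{\mathrm{l},\max}_G)$ is immediate from Corollary~\ref{crlly:int_late} by bounding each $\mathrm{prob}^{D[s\xrightarrow{a,t}]}_G$ above by $\mathrm{prob}^{\mathrm{l},\max}_G$, using $\sum_a D(s,t,a)X_a(t)\le\max_a X_a(t)$, and taking $\sup_D$. For the reverse direction, given $\epsilon>0$, I would select for each $t$ an action $a^*(t)\in\argmax_a\sum_{s'}\mathbf{P}(s,a,s')\,\mathrm{prob}^{\mathrm{l},\max}_G(s',\mathbf{c}-t\mathbf{w}(s))$, then paste on $\epsilon$-optimal continuation schedulers $\{D^{s',\mathbf{c}-t\mathbf{w}(s)}\}_{s',t}$ via the shift operation of Definition~\ref{def:shift:scheduler}; by Corollary~\ref{crlly:int_late}, the resulting scheduler $D^*$ witnesses the reverse inequality up to $\epsilon$. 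The \textbf{main obstacle} is the measurability of $D^*$: the selector $a^*(t)$ is measurable because $Act$ is finite (break ties by a fixed lexicographic order), but the continuations must also be chosen measurably in the continuous parameter $t$, which requires a careful gluing argument on the product $\sigma$-algebra.

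For (ii), I would invoke Kleene's fixed-point theorem on the complete lattice of $[0,1]$-valued functions on $L\times\mathbb{R}^k$: the least fixed point of the monotone $\mathcal{T}^{\mathrm{l}}_G$ equals $h^*:=\sup_n h^{(n)}$ with $h^{(n)}:=(\mathcal{T}^{\mathrm{l}}_G)^n(\mathbf{0})$. Let $p_n^D(s,\mathbf{c})$ be the probability under $D$ of reaching $G$ within $\le n$ transitions and cost $\le\mathbf{c}$. A straightforward induction on $n$ using Corollary~\ref{crlly:int_late} and $\sum_a D(s,t,a)X_a\le\max_a X_a$ gives $p_n^D\le h^{(n)}$ for every $D$; monotone convergence yields $\mathrm{prob}^D_G=\lim_n p_n^D$, so taking $\sup_D$ gives $\mathrm{prob}^{\mathrm{l},\max}_G\le h^*$. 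Combined with (i), which makes $\mathrm{prob}^{\mathrm{l},\max}_G$ a fixed point (hence $\ge h^*$), we conclude equality.

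For (iii), I would prove by induction on $n$ that each iterate $h^{(n)}$ is $K$-Lipschitz in $\mathbf{c}$ on $\{\mathbf{c}\ge\vec{0}\}$ with $K:=\mathbf{E}_{\max}/\mathbf{w}_{\min}$; the bound for $\mathrm{prob}^{\mathrm{l},\max}_G$ then transfers via (ii), since pointwise suprema of $K$-Lipschitz functions are $K$-Lipschitz. In the inductive step, for $s\notin G$, $\vec{0}\le\mathbf{c}\le\mathbf{c}'$ and $\delta:=\|\mathbf{c}'-\mathbf{c}\|_\infty$, split the $t$-integration at $T:=\min\{\mathbf{c}_j/\mathbf{w}_j(s):\mathbf{w}_j(s)>0\}$ and $T':=\min\{\mathbf{c}'_j/\mathbf{w}_j(s):\mathbf{w}_j(s)>0\}$ (the trivial case $\mathbf{w}(s)=\vec{0}$ yields difference $0$). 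On $[0,T]$ both cost shifts are non-negative and the inductive hypothesis bounds the integrand difference by $K\delta$; on $(T,T']$ the $\mathbf{c}$-integrand vanishes, the $\mathbf{c}'$-integrand is at most $1$, and a short computation gives $T'-T\le\delta/\mathbf{w}_{\min}$; on $(T',\infty)$ both vanish. Writing $p:=1-e^{-\mathbf{E}(s)T}$ and using $e^{-\mathbf{E}(s)T}-e^{-\mathbf{E}(s)T'}\le\mathbf{E}(s)(T'-T)e^{-\mathbf{E}(s)T}$, the total difference satisfies
\[
\Delta\le Kp\delta+(1-p)\,\mathbf{E}_{\max}(T'-T)\le Kp\delta+(1-p)K\delta=K\delta,
\]
closing the induction and completing the proof.
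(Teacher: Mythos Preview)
Your parts (ii) and (iii) are essentially correct and parallel the paper's strategy, but part (i) has a genuine gap that is not merely technical. You correctly identify the measurability of the pasted scheduler $D^*$ as the main obstacle, but the proposed ``careful gluing argument on the product $\sigma$-algebra'' is exactly where the real work lies and cannot be waved away: you need, for each $t\ge 0$, an $\epsilon$-optimal continuation scheduler $D^{s',\mathbf{c}-t\mathbf{w}(s)}$, and there is no a~priori reason such a family can be chosen measurably in the continuous parameter $t$. A measurable-selection theorem would require regularity of the correspondence $t\mapsto\{\epsilon\text{-optimal schedulers}\}$ that you have not established.

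The paper's resolution---and this is the point of the whole exercise, since the theorem is meant to repair precisely this kind of gap in Neuh\"{a}u\ss{}er's thesis---is to \emph{couple} the Lipschitz bound with the fixed-point iteration rather than decouple them as you do. Concretely, the paper works with the step-bounded functions $\mathrm{prob}^{\mathrm{l},\max}_{n,G}(s,\mathbf{c}):=\sup_{D}\mathrm{Pr}_{D,\mathcal{D}[s]}(\Pi^\mathbf{c}_{n,G})$ and proves by \emph{simultaneous} induction on $n$ that (b) each $\mathrm{prob}^{\mathrm{l},\max}_{n,G}$ is $K$-Lipschitz on $\{\mathbf{c}\ge\vec{0}\}$ and (c) $\mathrm{prob}^{\mathrm{l},\max}_{n+1,G}=\mathcal{T}^{\mathrm{l}}_G(\mathrm{prob}^{\mathrm{l},\max}_{n,G})$. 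The inductive hypothesis (b) at level $n$ is precisely what makes the scheduler construction in (c) go through: one discretizes $[0,T_s^\mathbf{c})$ into $N$ equal subintervals $I_j$, fixes a single representative $t_j\in I_j$, and uses \emph{one} $\epsilon$-optimal continuation scheduler per subinterval (so only finitely many choices, hence trivially measurable). Lipschitz continuity of $\mathrm{prob}^{\mathrm{l},\max}_{n,G}$ then guarantees that this piecewise-constant scheduler achieves the target integral up to an error that vanishes as $N\to\infty$ and $\epsilon\to 0^+$. In your ordering this trick is unavailable: when you attempt (i) you do not yet know that $\mathrm{prob}^{\mathrm{l},\max}_G$ is Lipschitz, since your (iii) yields Lipschitz only for $h^*$, and $h^*=\mathrm{prob}^{\mathrm{l},\max}_G$ is exactly what (i) is supposed to deliver.

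A minor slip in (iii): the case $\mathbf{w}(s)=\vec{0}$ does not ``yield difference $0$''; the integrand $\max_a\sum_{s'}\mathbf{P}(s,a,s')\,h^{(n)}(s',\mathbf{c})$ still depends on $\mathbf{c}$, so the difference is bounded by $K\delta$ via the inductive hypothesis (which is what the paper does).
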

The Lipschitz constant $\frac{\mathbf{E}_{\max}}{\mathbf{w}_{\min}}$ will be crucial to the error bound of our approximation algorithm.

Now we describe the proof error in~\cite{DBLP:phd/de/Neuhausser2010,DBLP:conf/qest/NeuhausserZ10}. The error lies in the proof of
~\cite[Lemma 5.1 on Pages 119]{DBLP:phd/de/Neuhausser2010} which tries to prove that the time-bounded reachability probability functions are 
continuous. In detail, the error is at the proof for right-continuity of the functions. Let us take the sentence 
``This implies ... for some $\xi\le\frac{\epsilon}{2}$.'' from line -3 to line -2 on page 119 as (*). (*) is wrong in general, 
as one can treat $D$'s as natural numbers, and define 
\[
\mathrm{Pr}_n(\mbox{``reach }G\mbox{ within }z\mbox{''}):=
\begin{cases}
n\cdot z & \mbox{if }z\in [0,\frac{1}{n}]\\
1        &  \mbox{if }z\in (\frac{1}{n},\infty)
\end{cases}\enskip.
\]
Then $\sup_n \mathrm{Pr}_n(\mbox{``reach }G\mbox{ within }z\mbox{''})$ equals 1 for $z>0$ and 0 for $z=0$. Thus 
$\sup_D \mathrm{Pr}_D(\mbox{``reach }G\mbox{ within }z\mbox{''})$ on $z\ge 0$ is right-discontinuous at $z=0$, which does not satisfy (*) 
(treat $D$ as a natural number). Note that a concrete counterexample does not exist as \cite[Lemma 5.1]{DBLP:phd/de/Neuhausser2010} is 
correct due to this paper; it is the proof that is flawed. Also note that Lemma 5.1 is important as the least fixed-point characterization 
\cite[Theorem 5.1 on Page 120]{DBLP:phd/de/Neuhausser2010} and the optimal scheduler 
\cite[Theorem 5.2 on page 124]{DBLP:phd/de/Neuhausser2010} directly rely on it. 
We fix the error in the more general setting of cost-bounded reachability probability by providing new proofs as illustrated in 
(the appendixes) of Section 4, Section 5 and Section 6. 

\section{Optimal Early/Late Schedulers}

In this section, we establish optimal early/late schedulers for maximal cost-bounded reachability probability. We show that there exists a 
deterministic cost-positional early/late scheduler that achieves the maximal cost-bounded reachability probability. 
Below we fix a CTMDP $\mathcal{M}=\left(L,Act,\mathbf{R},\{\mathbf{w}_i\}_{1\le i\le k}\right)$\enskip. We first 
introduce the notion of deterministic cost-positional schedulers.

\begin{definition}
A measurable early scheduler $D$ is called \emph{deterministic cost-positional} iff (i) $D(\xi,\centerdot)=D(\xi',\centerdot)$ whenever 
${\xi}{\downarrow}={\xi'}{\downarrow}$ and $\mathbf{C}({\xi})=\mathbf{C}({\xi'})$, and (ii) $D(\xi,\centerdot)$ is 
Dirac for all histories $\xi$. 

A measurable late sheduler $D$ is called \emph{deterministic cost-positional} iff (i) $D(\xi,t,\centerdot)=D(\xi',t',\centerdot)$ whenever 
${\xi}{\downarrow}={\xi'}{\downarrow}$ and 
$\mathbf{C}({\xi})+t\cdot\mathbf{w}({\xi}{\downarrow})=\mathbf{C}({\xi'})+t'\cdot\mathbf{w}({\xi'}{\downarrow})$, 
and (ii) $D(\xi,t,\centerdot)$ is Dirac for all histories $\xi$ and $t\ge 0$.
\end{definition}
Intuitively, a deterministic cost-positional scheduler makes its decision solely on the current state and the cost accumulated so far, and 
its decision is always Dirac. Below we show that such a scheduler suffices to achieve maximal cost-bounded reachability probability under 
the context of early schedulers. 

\begin{theorem}\label{thm:opt:earlyscheduler}
For all $\mathbf{c}\in\mathbb{R}^k$ and $G\subseteq L$, there exists a measurable deterministic cost-positional early scheduler 
$\mathtt{D}_\mathbf{c}^\mathrm{e}$ such that
$\mathrm{prob}^{\mathrm{e},\max}_G(s,\mathbf{c})=\mathrm{Pr}_{\mathtt{D}_\mathbf{c}^\mathrm{e},\mathcal{D}[s]}(\Pi_G^\mathbf{c})$ for all 
$s\in L$\enskip.
\end{theorem}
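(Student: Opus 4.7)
The plan is to construct $\mathtt{D}_\mathbf{c}^\mathrm{e}$ greedily with respect to the least fixed-point $V := \mathrm{prob}^{\mathrm{e},\max}_G$ given by Theorem~\ref{thm:minearlyf}, and then to verify its optimality by combining Corollary~\ref{crlly:int_early} with an iteration argument.

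First I would extract a measurable action selector $\mathtt{a} : L \times \mathbb{R}^k \to Act$. For every $(s, \mathbf{d})$ with $s \notin G$ and $\mathbf{d} \geq \vec{0}$, the maximum in $\mathcal{T}^{\mathrm{e}}_G(V)(s, \mathbf{d}) = V(s, \mathbf{d})$ is attained by some $a \in \mathrm{En}(s)$. Fix a total order on $Act$ and let $\mathtt{a}(s, \mathbf{d})$ be the smallest attaining action (with an arbitrary fixed choice in $\mathrm{En}(s)$ for the remaining cases). The Lipschitz continuity of $V$ in $\mathbf{d}$ (Theorem~\ref{thm:minearlyf}), together with dominated convergence, makes each candidate
\[
g_a(s,\mathbf{d}) := \int_0^\infty f_{\mathbf{E}(s,a)}(t) \sum_{s'} \mathbf{P}(s,a,s')\, V\bigl(s', \mathbf{d} - t \mathbf{w}(s,a)\bigr) \, \mathrm{d}t
\]
continuous in $\mathbf{d}$, so lexicographic tie-breaking among finitely many continuous functions yields a Borel measurable $\mathtt{a}(s, \centerdot)$. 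I then set $\mathtt{D}_\mathbf{c}^\mathrm{e}(\xi, \centerdot) := \mathcal{D}[\mathtt{a}(\xi\downarrow, \mathbf{c} - \mathbf{C}(\xi))]$: measurability of $\xi \mapsto (\xi\downarrow, \mathbf{C}(\xi))$ on $Hists^n(\mathcal{M})$ transfers to the scheduler, while cost-positionality and Dirac-ness hold by construction.

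For the verification, let $\phi(s, \mathbf{d}) := \mathrm{Pr}_{\mathtt{D}_\mathbf{d}^\mathrm{e}, \mathcal{D}[s]}(\Pi_G^\mathbf{d})$. The key self-similarity is that the shifted scheduler $\mathtt{D}_\mathbf{d}^\mathrm{e}[s \xrightarrow{a, t}]$ acts on any $\xi'$ exactly as $\mathtt{D}_{\mathbf{d} - t \mathbf{w}(s,a)}^\mathrm{e}(\xi')$, because $\mathbf{C}(s \xrightarrow{a,t} \xi') = t \mathbf{w}(s,a) + \mathbf{C}(\xi')$. Substituting into Corollary~\ref{crlly:int_early} yields, for $s \notin G$,
\[
\phi(s, \mathbf{d}) = \int_0^\infty f_{\mathbf{E}(s, \mathtt{a}(s,\mathbf{d}))}(t) \sum_{s'} \mathbf{P}(s, \mathtt{a}(s,\mathbf{d}), s')\, \phi\bigl(s', \mathbf{d} - t \mathbf{w}(s, \mathtt{a}(s,\mathbf{d}))\bigr) \, \mathrm{d}t\,,
\]
and $V$ satisfies the same equation by the defining property of $\mathtt{a}$. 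The direction $\phi \leq V$ is immediate from Definition~\ref{def:cbrp}. For the reverse, set $\psi := V - \phi \geq 0$; subtracting the two equations gives $\psi = T_\mathtt{a} \psi$ on $L - G$ (with $\psi = 0$ on $G$), where $T_\mathtt{a}$ is the \emph{linear} integral operator obtained by fixing the action at $\mathtt{a}(s, \mathbf{d})$. Iterating, $T_\mathtt{a}^n \psi(s,\mathbf{d})$ admits a probabilistic interpretation as an expectation under $\mathtt{D}_\mathbf{d}^\mathrm{e}$ restricted to trajectories that stay in $L - G$ for $n$ steps and keep the residual budget non-negative; since $\psi \leq 1$, it is bounded by $\Pr_{\mathtt{D}_\mathbf{d}^\mathrm{e}, \mathcal{D}[s]}\bigl[\pi[i] \notin G \text{ for } i \leq n \text{ and } \mathbf{C}(\pi[0..n]) \leq \mathbf{d}\bigr]$, which tends to $0$ because the accumulated cost diverges almost surely along the greedy chain.

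The most delicate step is this final convergence: one must argue that under the greedy scheduler the accumulated cost a.s.\ exceeds the initial budget $\mathbf{d}$ in finite time. This relies on the strict positivity of $\mathbf{w}_{\min}$ on the coordinates where unit-cost is nonzero together with the lower bound $1/\mathbf{E}_{\max}$ on expected sojourn times, and is exactly where the Lipschitz control from Theorem~\ref{thm:minearlyf} becomes indispensable. The preceding paragraphs are otherwise routine bookkeeping once Theorems~\ref{thm:fix-early} and~\ref{thm:minearlyf} are in hand, the only other technical point being the Borel measurability of the cost-indexed argmax.
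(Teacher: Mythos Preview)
Your construction has a genuine gap at exactly the point you flag as ``most delicate'': the claim that under the greedy scheduler the accumulated cost almost surely exceeds the budget. The positivity of $\mathbf{w}_{\min}$ does not help here, because $\mathbf{w}_{\min}$ is defined as the minimum over \emph{positive} unit-costs; it says nothing about state-action pairs with $\mathbf{w}(s,a)=\vec{0}$, which the model explicitly allows. Your lexicographic argmax can select such a zero-cost action, and then the budget is never depleted along that transition.

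Here is a concrete failure. Take $L=\{s_0,s_1\}$ with $G=\{s_1\}$, $\mathrm{En}(s_0)=\{a,b\}$ where $a$ loops at $s_0$ with $\mathbf{w}(s_0,a)=\vec{0}$, and $b$ goes to $s_1$ with probability $1$ and $\mathbf{w}(s_0,b)>\vec{0}$. Then $g_a(s_0,\mathbf{d})=\int_0^\infty f_\lambda(t)\,V(s_0,\mathbf{d})\,\mathrm{d}t=V(s_0,\mathbf{d})$, so $a$ is always an attaining action. If $a$ precedes $b$ in your total order, the greedy scheduler picks $a$ forever, never reaches $G$, and $\phi(s_0,\mathbf{d})=0$ while $V(s_0,\mathbf{d})>0$. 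The iteration bound $T_{\mathtt a}^n\psi\le \Pr[\pi[i]\notin G,\ \mathbf{C}(\pi[0..n])\le\mathbf{d}]$ is correct but useless here: that probability is identically~$1$.

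The paper's proof addresses precisely this obstruction. Its scheduler does not simply take a lexicographic argmax: among attaining actions it \emph{prefers} one with $\mathbf{w}(s,a^*)\ne\vec{0}$ when available (the set $L_1^\xi$), and when only zero-cost attaining actions exist it chooses one that strictly decreases the graph-distance to $L_1^\xi\setminus L_2^\xi$ in the digraph $\mathcal{G}^\xi$ of optimal transitions. The feasibility of this last step is nontrivial and is justified via Proposition~\ref{prop:approx:early:weightzero} (a least-fixed-point argument showing no ``trapped'' zero-cost region can exist without contradicting minimality of $V$). The optimality verification then proceeds by a two-level contradiction argument separating the zero-cost and nonzero-cost regimes, rather than a single uniform iteration bound. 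To repair your approach you would need, at minimum, to bias the tie-breaking toward nonzero-cost actions and to supply a separate argument for the residual zero-cost component---which is essentially what the paper does.
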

\begin{proof}
Let $G\subseteq L$. Fix an arbitrary linear order $\preceq$ on $Act$. We construct the measurable early scheduler 
$\mathtt{D}_\mathbf{c}^\mathrm{e}$ (for each $\mathbf{c}\in\mathbb{R}^k$) as follows. Fix a $\mathbf{c}\in\mathbb{R}^k$.
Define the function $g:L\times Act\times \mathbb{R}^k\rightarrow [0,1]$ by:
\begin{align*}
& g(s,a,\mathbf{x}):=\\
& \quad\int_0^\infty f_{\mathbf{E}({s},a)}(t)\cdot\left[\sum_{s'\in L}\mathbf{P}(s,a,s')\cdot\mathrm{prob}^{\mathrm{e},\max}_G(s', \mathbf{c}-\mathbf{x}-t\cdot\mathbf{w}(s,a))\right]\,\mathrm{d}t\enskip.
\end{align*}
Note that $\mathrm{prob}_G^{\mathrm{e},\max}(s,\mathbf{c}-\mathbf{x})=\max_{a\in \mathrm{En}(s)} g(s,a,\mathbf{x})$ if $s\not\in G$.
Consider an arbitrary $\xi\in Hists(\mathcal{M})$. Define
\begin{align*}
& L^{\xi}_1:=\Big\{s\in L\mid \exists a^*\in\mathrm{En}(s).\Big[g(s, a^*,\mathbf{C}(\xi))=\max_{a\in\mathrm{En}(s)}g(s,a,\mathbf{C}(\xi))\\
& \qquad\qquad\qquad\qquad\wedge \mathbf{w}(s,a^*)\ne\vec{0}\Big]\Big\}
\end{align*}
and $L^{\xi}_2:=\left\{s\in L\mid \mathrm{prob}_G^{\mathrm{e},\max}(s,\mathbf{c}-\mathbf{C}(\xi))=0\right\}$\enskip. The probability distribution 
$\mathtt{D}_\mathbf{c}^\mathrm{e}(\xi,\centerdot)$ is determined by the following procedure:
\begin{enumerate}\itemsep1pt \parskip0pt \parsep0pt
\item If ${\xi}{\downarrow}\in L^{\xi}_2$, then we set $\mathtt{D}_\mathbf{c}^\mathrm{e}(\xi,\centerdot)=\mathcal{D}[a^*]$ for which 
$a^*\in\mathrm{En}({\xi}{\downarrow})$ is arbitrarily fixed.
\item If ${\xi}{\downarrow}\in L^{\xi}_1\backslash L^{\xi}_2$, then we set 
$\mathtt{D}_\mathbf{c}^\mathrm{e}(\xi,\centerdot)=\mathcal{D}[a^*]$ for which $a^*\in\mathrm{En}({\xi}{\downarrow})$ satisfies that 
(i) $g({\xi}{\downarrow}, a^*,\mathbf{C}(\xi))=\max_{a\in\mathrm{En}(\xi\downarrow)}g({\xi}{\downarrow},a,\mathbf{C}(\xi))$ and 
(ii) $\mathbf{w}({\xi}{\downarrow}, a^*)\ne\vec{0}$; if there are multiple such $a^*$'s, we choose the least of them w.r.t $\preceq$.
\item If ${\xi}{\downarrow}\in L-(L^{\xi}_1\cup L^{\xi}_2)$, then we set $\mathtt{D}_\mathbf{c}^\mathrm{e}(\xi,\centerdot)$ to be an action 
$a^*$ which satisfies that (i) 
$g({\xi}{\downarrow},a^*,\mathbf{C}(\xi))=\max_{a\in\mathrm{En}({\xi}{\downarrow})}g({\xi}{\downarrow},a,\mathbf{C}(\xi))$ and (ii) there 
exists $s\in L$ such that $\mathbf{P}({\xi}{\downarrow},a^*,s)>0$ and the distance from $s$ to $L^\xi_1\backslash L^\xi_2$ is (one-step) smaller than that 
from ${\xi}{\downarrow}$ in the digraph 
\begin{align*}
& \mathcal{G}^\xi:=\Big(L,\Big\{(u,v)\in L\times L\mid \\
& \quad\exists b\in\mathrm{En}(u).\big[\mathbf{P}(u,b,v)>0\wedge g(u,b,\mathbf{C}(\xi))=\max_{a\in\mathrm{En}(u)}g(u,a,\mathbf{C}(\xi))\big]\Big\}\Big)\enskip. 
\end{align*}
If there are multiple such $a^*$'s, choose the least of them w.r.t $\preceq$.
\end{enumerate}
The legitimacy of the third step in the procedure above follows from Proposition~\ref{prop:approx:early:weightzero} to be proved later: 
the set $L''$ of states that cannot reach $L_1^\xi\backslash L_2^\xi$ should be empty, or otherwise one can reduce all values 
$\{\mathrm{prob}^{\mathrm{e},\max}_G(s,\mathbf{c}-\mathbf{C}(\xi))\}_{s\in L''}$ by a small amount to obtain a pre-fixed-point smaller than 
$\{\mathrm{prob}^{\mathrm{e},\max}_G(s,\mathbf{c}-\mathbf{C}(\xi))\}_{s\in L}$.
By definition, $\mathtt{D}_\mathbf{c}^\mathrm{e}$ is deterministic cost-positional. Note that there are finitely many triples 
$(L_1^\xi,L_2^\xi,\mathcal{G}^\xi)$ since $L$ is finite. Also, by Theorem~\ref{thm:minearlyf}, 
$\mathbf{x}\mapsto g(s,a,\mathbf{x})$ is separately continuous on $\{\mathbf{x}\mid \mathbf{x}\le\mathbf{c}\}$ and its complement set, 
for all $s\in L$ and $a\in\mathrm{En}(s)$. Thus, the set of all 
histories $\xi$ with length $n$ such that the triple $(L_1^\xi,L_2^\xi,\mathcal{G}^\xi)$ happens to be a specific one is measurable w.r.t 
$(\Omega_\mathcal{M}^n,\mathcal{S}^n_\mathcal{M})$. It follows that $\mathtt{D}_\mathbf{c}^\mathrm{e}$ is a measurable scheduler. We prove 
that $\mathrm{prob}^{\mathrm{e},\max}_G(s,\mathbf{c})=\mathrm{Pr}_{\mathtt{D}_\mathbf{c}^\mathrm{e},\mathcal{D}[s]}(\Pi_G^\mathbf{c})$
for all $s\in L$ and $\mathbf{c}\ge\vec{0}$. 

Define the function $h:L\times\mathbb{R}^k_{\ge 0}\rightarrow [0,1]$ by 
$h(s, \mathbf{c}):=\mathrm{Pr}_{\mathtt{D}_\mathbf{c}^\mathrm{e},\mathcal{D}[s]}(\Pi_G^\mathbf{c})$. 
Suppose that $\mathrm{prob}^{\mathrm{e},\max}_G\ne h$. Let $\overline{h}:=|\mathrm{prob}^{\mathrm{e},\max}_G-h|$. 
Then there exists $s\in L$ and $\mathbf{c}\ge\vec{0}$ such that $h(s,\mathbf{c})>0$. Define 
$T:=\max_{1\le i\le k}\{\frac{\mathbf{c}_i}{\mathbf{w}_{\min}}\}$\enskip. Let
\[
d:=\sup\left\{\overline{h}(s,\mathbf{c}')\mid s\in L,\vec{0}\le\mathbf{c}'\le\mathbf{c}\right\}
\]
and
\[
d':=\sup\left\{\overline{h}(s,\mathbf{c}')\mid s\in L,\vec{0}\le\mathbf{c}'\le\mathbf{c}\mbox{ and }\mathbf{w}(s,a^*)\ne\vec{0}\mbox{ where } 
\mathtt{D}_{\mathbf{c}'}^\mathrm{e}(s,\centerdot)=\mathcal{D}[a^*]\right\}\enskip.
\]
We first show that $d'<d$ by a (nested) contradiction proof.

Suppose $d'=d$. Choose $\epsilon>0$ such that $d>e^{\mathbf{E}_{\max}\cdot T}\cdot\epsilon$. Then choose $s\in L$ and 
$\vec{0}\le\mathbf{c}'\le\mathbf{c}$ such that $d-\epsilon<|\overline{h}(s,\mathbf{c}')|\le d$ and $\mathbf{w}(s,a^*)\ne\vec{0}$,
where $\mathcal{D}[a^*]=\mathtt{D}_\mathbf{c}^\mathrm{e}(s,\centerdot)$.
By Theorem~\ref{thm:fix-early}, we have
\[
h(s,\mathbf{c}')=\int_0^\infty f_{\mathbf{E}(s,a^*)}(t)\cdot \left[\sum_{s'\in L}\mathbf{P}(s,a^*,s')\cdot h(s',\mathbf{c}'-t\cdot\mathbf{w}(s,a^*))\right]\,\mathrm{d}t\enskip.
\]
Then with Theorem~\ref{thm:minearlyf}, we obtain
\[
\overline{h}(s,\mathbf{c}')\le\int_0^T f_{\mathbf{E}(s,a^*)}(t)\cdot\left[\sum_{s'\in L}\mathbf{P}(s,a^*,s')\cdot
\overline{h}(s',\mathbf{c}'-t\cdot\mathbf{w}(s,a^*))\right]\,\mathrm{d}t\enskip.
\]
This implies $d-\epsilon\le (1-e^{-\mathbf{E}_{\max}\cdot T})\cdot d$, which in turn implies 
$d\le e^{\mathbf{E}_{\max}\cdot T}\cdot\epsilon$. Contradiction to the choice of $\epsilon$.

Thus $d>d'$. Let $\delta:=d-d'$ and $\epsilon:=\mathbf{P}_{\min}^{|L|}\cdot\delta$. We inductively construct a finite sequence $s_0,s_1,\dots,s_l$ ($1\le l\le |L|$) which satisfies
\[
\overline{h}(s_i,\mathbf{c}')> d-\mathbf{P}^{-i}_{\min}\cdot\epsilon~(i=0,\dots,l)
\]
as follows. Note that the triple $(L_1^s,L_2^s,\mathcal{G}^s)$ for some $s\in L$ (w.r.t some $\mathbf{c}\in\mathbb{R}^k$) remains constant as $s$ varies, since $\mathbf{C}(s)=\vec{0}$. 
\begin{enumerate}\itemsep1pt \parskip0pt \parsep0pt
\item Initially, we set $i=0$ and choose $s_0\in L,\vec{0}\le\mathbf{c}'\le\mathbf{c}$ such that
\[
d-\epsilon<\overline{h}(s_0,\mathbf{c}')\le d\enskip.
\]
\item As long as $i\le l$ and $0\le d-\mathbf{P}_{\min}^{-i}\cdot\epsilon<\overline{h}(s_i,\mathbf{c}')\le d$\enskip, we have $s_i\in L\backslash(L^{s_i}_1\cup L^{s_i}_2)$ (w.r.t $\mathbf{c}'$) since $\mathbf{P}_{\min}^{-i}\cdot\epsilon\le\delta$. Thus
\[
h(s_i,\mathbf{c}')=\sum_{s'\in L}\mathbf{P}(s_i,a^*,s')\cdot h(s',\mathbf{c}')
\]
and there exists $u\in L$ such that (i) $\mathbf{P}(s_i,a^*,u)>0$ and (ii) via $u$ the distance to $L^{s_i}_1\backslash L^{s_i}_2$ in $\mathcal{G}^{s_i}$ (w.r.t $\mathbf{c}'$) is decreased by one, for which $\mathcal{D}[a^*]=\mathtt{D}_{\mathbf{c}'}^\mathrm{e}(s_i)$. Moreover, by
\[
\overline{h}(s_i,\mathbf{c}')\le\sum_{s'\in L}\mathbf{P}(s_i,a^*,s')\cdot\overline{h}(s',\mathbf{c}')
\]
we obtain
\[
d-\mathbf{P}^{-i}_{\min}\epsilon<(1-{\mathbf{P}}_{\min})\cdot d+{\mathbf{P}}_{\min}\cdot\overline{h}(u,\mathbf{c}')\enskip,
\]
which is further reduced to $\overline{h}(u,\mathbf{c}')> d-\mathbf{P}^{-(i+1)}_{\min}\cdot\epsilon$\enskip. We set $s_{i+1}=u$;
\item If $s_{i+1}\in L^{s_{i+1}}_1\backslash L^{s_{i+1}}_2$, then the construction is terminated. Otherwise, back to Step 2.
\end{enumerate}
The legitimacy and termination (within $|L|$ steps) of the inductive construction follows from the definition of 
$\mathtt{D}_{\mathbf{c}'}^\mathrm{e}$. By $s_l\in L^{s_l}_1\backslash L^{s_l}_2$, we obtain
\[
d-\delta\ge \overline{h}(s_l,\mathbf{c}')> d-\mathbf{P}^{-l}_{\min}\cdot\epsilon~~(l\le |L|)\enskip,
\]
which is a contradiction due to $\epsilon=\mathbf{P}^{|L|}_{\min}\cdot\delta$. Thus $\mathrm{prob}^{\mathrm{e},\max}_G= h$.
\end{proof}
We can obtain a similar result for late schedulers.

\begin{theorem}\label{thm:opt:latescheduler}
For all $\mathbf{c}\in\mathbb{R}^k$ and $G\subseteq L$, there exists a deterministic cost-positional measurable late scheduler $\mathtt{D}^\mathrm{l}_\mathbf{c}$ such that
$\mathrm{prob}^{\mathrm{l},\max}_G(s,\mathbf{c})=\mathrm{Pr}_{\mathtt{D}^\mathrm{l}_\mathbf{c},\mathcal{D}[s]}(\Pi_G^\mathbf{c})$ for all 
$s\in L$\enskip.
\end{theorem}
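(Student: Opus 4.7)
\medskip

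\noindent\emph{Proof proposal for Theorem~\ref{thm:opt:latescheduler}.} The plan is to mirror the construction in the proof of Theorem~\ref{thm:opt:earlyscheduler}, adapting it to the locally-uniform setting so that the decision may depend on the elapsed time $t$, and then to re-run the Lipschitz/digraph contradiction argument with minor modifications forced by the shape of $\mathcal{T}^{\mathrm{l}}_G$. Fix a linear order $\preceq$ on $Act$ and a cost bound $\mathbf{c}\in\mathbb{R}^k$, and define
\[
g_\mathrm{l}(s,a,\mathbf{y}):=\sum_{s'\in L}\mathbf{P}(s,a,s')\cdot\mathrm{prob}^{\mathrm{l},\max}_G(s',\mathbf{c}-\mathbf{y}).
\]
For a history $\xi\in Hists(\mathcal{M})$ and $t\ge 0$, abbreviate $\mathbf{y}:=\mathbf{C}(\xi)+t\cdot\mathbf{w}(\xi\downarrow)$. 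Since the CTMDP is locally-uniform, $\mathbf{w}(s,a)=\mathbf{w}(s)$ and $\mathbf{E}(s,a)=\mathbf{E}(s)$ do not depend on $a$, so the local classification from the early-scheduler proof simplifies: let $L_1:=\{s\in L\mid\mathbf{w}(s)\ne\vec{0}\}$ (independent of $\xi,t$), let $L_2^\mathbf{y}:=\{s\in L\mid\mathrm{prob}^{\mathrm{l},\max}_G(s,\mathbf{c}-\mathbf{y})=0\}$, and let $\mathcal{G}^\mathbf{y}$ be the digraph on $L$ whose edges $(u,v)$ are those witnessed by some $b\in\mathrm{En}(u)$ with $\mathbf{P}(u,b,v)>0$ and $g_\mathrm{l}(u,b,\mathbf{y})=\max_{a\in\mathrm{En}(u)}g_\mathrm{l}(u,a,\mathbf{y})$. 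Now define $\mathtt{D}_\mathbf{c}^\mathrm{l}(\xi,t,\centerdot)$ to be the Dirac distribution on the $\preceq$-least action $a^*\in\mathrm{En}(\xi\downarrow)$ that maximizes $g_\mathrm{l}(\xi\downarrow,\centerdot,\mathbf{y})$ subject to the usual three-case policy: arbitrary when $\xi\downarrow\in L_2^\mathbf{y}$, any maximizer when $\xi\downarrow\in L_1\setminus L_2^\mathbf{y}$, and a maximizer that strictly decreases the distance in $\mathcal{G}^\mathbf{y}$ to $L_1\setminus L_2^\mathbf{y}$ otherwise (legitimate by the late-scheduler analogue of Proposition~\ref{prop:approx:early:weightzero}). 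By construction $\mathtt{D}_\mathbf{c}^\mathrm{l}$ is deterministic cost-positional.

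The next step is measurability in the product space $(\Omega_\mathcal{M}^n\times\mathbb{R}_{\ge 0},\mathcal{S}^n_\mathcal{M}\otimes\mathcal{B}(\mathbb{R}_{\ge 0}))$. The map $(\xi,t)\mapsto(\xi\downarrow,\mathbf{y})$ is measurable because $\xi\mapsto\xi\downarrow$ and $\xi\mapsto\mathbf{C}(\xi)$ are measurable on each $\Omega_\mathcal{M}^n$, and $\mathbf{y}$ is continuous in $t$. By Theorem~\ref{thm:minlatef} the function $\mathbf{y}\mapsto g_\mathrm{l}(s,a,\mathbf{y})$ is continuous on each orthant $\{\mathbf{y}\le\mathbf{c}\}$ and on its complement, so each of the finitely many conditions ``$(L_1,L_2^\mathbf{y},\mathcal{G}^\mathbf{y})$ equals a given triple'' cuts out a Borel set of $\mathbf{y}$, hence a measurable set of $(\xi,t)$; on each of these finitely many pieces the chosen action is constant as a function of $\xi\downarrow$, so $(\xi,t)\mapsto\mathtt{D}_\mathbf{c}^\mathrm{l}(\xi,t,a)$ is measurable. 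This joint-measurability step is the main technical wrinkle compared with the early case.

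For optimality, set $h(s,\mathbf{c}'):=\mathrm{Pr}_{\mathtt{D}_{\mathbf{c}'}^\mathrm{l},\mathcal{D}[s]}(\Pi_G^{\mathbf{c}'})$ and $\overline{h}:=|\mathrm{prob}^{\mathrm{l},\max}_G-h|$, and assume for contradiction $h\ne\mathrm{prob}^{\mathrm{l},\max}_G$. Pick $\mathbf{c}\ge\vec{0}$ with some $\overline{h}(s,\mathbf{c})>0$, set $T:=\max_i \mathbf{c}_i/\mathbf{w}_{\min}$, and define $d,d'$ as the suprema of $\overline{h}(s,\mathbf{c}')$ over $s\in L,\vec{0}\le\mathbf{c}'\le\mathbf{c}$ with, respectively, no restriction and the restriction $\mathbf{w}(s)\ne\vec{0}$ (equivalently, $s\in L_1$). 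First show $d'<d$: otherwise, picking an approximate maximizer at some $s\in L_1$ and applying Theorem~\ref{thm:fix-late} together with Theorem~\ref{thm:minlatef} gives
\[
\overline{h}(s,\mathbf{c}')\le\int_0^T f_{\mathbf{E}(s)}(t)\cdot\sum_{s'\in L}\mathbf{P}(s,a^*,s')\cdot\overline{h}(s',\mathbf{c}'-t\cdot\mathbf{w}(s))\,\mathrm{d}t
\]
for the action $a^*$ singled out by $\mathtt{D}_{\mathbf{c}'}^\mathrm{l}(s,t,\centerdot)$ (since $\mathbf{w}(s)\ne\vec{0}$ the integrand stays in the shrunk cost region for $t\le T$), yielding $d\le(1-e^{-\mathbf{E}_{\max}T})\cdot d+\epsilon$ for arbitrary $\epsilon>0$, a contradiction.

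Finally, with $\delta:=d-d'>0$ and $\epsilon:=\mathbf{P}_{\min}^{|L|}\cdot\delta$, inductively build a sequence $s_0,s_1,\dots,s_l$ ($l\le|L|$) of states with $\overline{h}(s_i,\mathbf{c}')>d-\mathbf{P}_{\min}^{-i}\cdot\epsilon$ exactly as in the early-scheduler proof: if $s_i\in L\setminus(L_1\cup L_2^{\mathbf{C}(\xi)})$ then the cost-positional rule forces $\mathtt{D}_{\mathbf{c}'}^\mathrm{l}$ to pick an action at $t=0$ (and, by continuity, on a right-neighbourhood of $0$) whose support contains some $u\in L$ strictly closer to $L_1\setminus L_2$ in $\mathcal{G}^{\mathbf{C}(\xi)}$, and averaging with $\mathbf{P}(s_i,a^*,\centerdot)$ forces $\overline{h}(u,\mathbf{c}')>d-\mathbf{P}_{\min}^{-(i+1)}\cdot\epsilon$. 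Termination at some $s_l\in L_1\setminus L_2^{\mathbf{C}(\xi)}$ then gives $d-\delta\ge\overline{h}(s_l,\mathbf{c}')>d-\mathbf{P}_{\min}^{|L|-l}\cdot\delta\ge d-\delta$, a contradiction. Hence $h=\mathrm{prob}^{\mathrm{l},\max}_G$, proving the theorem.
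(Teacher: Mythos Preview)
Your proposal is correct and follows essentially the same approach as the paper's proof: the same three-case cost-positional construction (with $L_1=\{s\mid\mathbf{w}(s)\ne\vec{0}\}$ now independent of the action), the same measurability argument via finitely many Borel level sets of the triple, and the same two-stage contradiction ($d'<d$ via the exponential-tail estimate, then the $\mathbf{P}_{\min}$-chain through the digraph). Your parameterization by the single variable $\mathbf{y}=\mathbf{C}(\xi)+t\cdot\mathbf{w}(\xi\downarrow)$ is in fact slightly cleaner than the paper's $(\mathbf{x},t)$ pair; two cosmetic points worth tightening are that the action in your displayed integral should be written $a^*(t)$ since it may vary with $t$, and that when $s_i\notin L_1$ you have $\mathbf{w}(s_i)=\vec{0}$, so the chosen action is constant in $t$ outright (not merely on a right-neighbourhood).
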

\begin{proof}
Let $G\subseteq L$. Fix an arbitrary linear order $\preceq$ on $Act$. We construct the measurable late scheduler 
$\mathtt{D}_\mathbf{c}^\mathrm{l}$ (for each $\mathbf{c}\in\mathbb{R}^k$) as follows. Fix a $\mathbf{c}\in\mathbb{R}^k$.
Define the function $g:L\times Act\times \mathbb{R}^k\times\mathbb{R}_{\ge 0}\rightarrow [0,1]$ by:
\begin{align*}
& g(s,a,\mathbf{x},t):=\\
& \quad\sum_{s'\in L}\mathbf{P}(s,a,s')\cdot\mathrm{prob}^{\mathrm{l},\max}_G(s', \mathbf{c}-\mathbf{x}-t\cdot\mathbf{w}(s))\enskip.
\end{align*}
Note that $\mathrm{prob}^{\mathrm{l},\max}_G(s,\mathbf{c}-\mathbf{x})=\int_0^\infty f_{\mathbf{E}(s)}(t)\cdot\max_{a\in\mathrm{En}(s)}g(s,a,\mathbf{x},t)\,\mathrm{d}t$ if $s\not\in G$.
Consider arbitrary $\xi\in Hists(\mathcal{M})$ and $t\in\mathbb{R}_{\ge 0}$. Define
\[
L_1:=\Big\{s\in L\mid \mathbf{w}(s)\ne\vec{0}\Big\}
\]
and $L^{\xi,t}_2:=\left\{s\in L\mid \mathrm{prob}_G^{\mathrm{l},\max}(s,\mathbf{c}-\mathbf{C}(\xi)-t\cdot\mathbf{w}(s))=0\right\}$\enskip. The probability distribution 
$\mathtt{D}_\mathbf{c}^\mathrm{l}(\xi,t,\centerdot)$ is determined by the following procedure:
\begin{enumerate}\itemsep1pt \parskip0pt \parsep0pt
\item If ${\xi}{\downarrow}\in L^{\xi,t}_2$, then we set $\mathtt{D}_\mathbf{c}^\mathrm{l}(\xi,t,\centerdot)=\mathcal{D}[a^*]$ for which 
$a^*\in\mathrm{En}({\xi}{\downarrow})$ is arbitrarily fixed.
\item If ${\xi}{\downarrow}\in L_1\backslash L^{\xi,t}_2$, then we set 
$\mathtt{D}_\mathbf{c}^\mathrm{l}(\xi,t,\centerdot)=\mathcal{D}[a^*]$ for which $a^*\in\mathrm{En}({\xi}{\downarrow})$ satisfies that 
$g({\xi}{\downarrow}, a^*,\mathbf{C}(\xi),t)=\max_{a\in\mathrm{En}(\xi\downarrow)}g({\xi}{\downarrow},a,\mathbf{C}(\xi),t)$; if there are multiple such $a^*$'s, we choose the least of them w.r.t $\preceq$.
\item If ${\xi}{\downarrow}\in L-(L_1\cup L^{\xi,t}_2)$, then we set $\mathtt{D}_\mathbf{c}^\mathrm{e}(\xi,t,\centerdot)$ to be an action 
$a^*$ which satisfies that (i) 
$g({\xi}{\downarrow},a^*,\mathbf{C}(\xi),t)=\max_{a\in\mathrm{En}({\xi}{\downarrow})}g({\xi}{\downarrow},a,\mathbf{C}(\xi),t)$ and (ii) 
there exists $s\in L$ such that $\mathbf{P}({\xi}{\downarrow},a^*,s)>0$ and the distance from $s$ to $L_1\backslash L^{\xi,t}_2$ is 
(one-step) smaller than that from ${\xi}{\downarrow}$ in the digraph 
\begin{align*}
& \mathcal{G}^{\xi,t}:=\Big(L,\Big\{(u,v)\in L\times L\mid \\
& ~\exists b\in\mathrm{En}(u).\big[\mathbf{P}(u,b,v)>0\wedge g(u,b,\mathbf{C}(\xi),t)=\max_{a\in\mathrm{En}(u)}g(u,a,\mathbf{C}(\xi),t)\big]\Big\}\Big)\enskip. 
\end{align*}
If there are multiple such $a^*$'s, choose the least of them w.r.t $\preceq$.
\end{enumerate}
The legitimacy of the third step in the procedure above follows from Proposition~\ref{prop:approx:late:weightzero} to be proved later:
the set $L''$ of states that cannot reach $L_1\backslash L_2^{\xi,t}$ should be empty, or otherwise one can reduce all values 
$\{\mathrm{prob}^{\mathrm{l},\max}_G(s,\mathbf{c}-\mathbf{C}(\xi)-t\cdot\mathbf{w}(s))\}_{s\in L''}$ by a small amount to obtain 
a pre-fixed-point smaller than 
$\{\mathrm{prob}^{\mathrm{l},\max}_G(s,\mathbf{c}-\mathbf{C}(\xi)-t\cdot\mathbf{w}(s))\}_{s\in L}$.
By definition, $\mathtt{D}_\mathbf{c}^\mathrm{l}$ is deterministic cost-positional. Note that there are finitely many triples 
$(L_1,L_2^{\xi,t},\mathcal{G}^{\xi,t})$ since $L$ is finite. Also, by Theorem~\ref{thm:minlatef}, 
$(\mathbf{x},t)\mapsto g(s,a,\mathbf{x},t)$ is separately continuous on 
\[
\{(\mathbf{x},t)\mid \mathbf{x}+t\cdot\mathbf{w}(s)\le\mathbf{c}\}
\]
and its complement set, for all $s\in L$ and $a\in\mathrm{En}(s)$. 
Thus, the set of all pairs $(\xi,t)$, where  
$\xi$ is a history of length $n$ and $t$ is a non-negative real number $t$, such that the triple $(L_1,L_2^{\xi,t},\mathcal{G}^{\xi,t})$ happens to be a specific one is measurable w.r.t 
$(\Omega_\mathcal{M}^n\times\mathbb{R}_{\ge 0},\mathcal{S}^n_\mathcal{M}\otimes \mathcal{B}(\mathbb{R}_{\ge 0}))$. It follows that $\mathtt{D}_\mathbf{c}^\mathrm{l}$ is a measurable scheduler. We prove 
that $\mathrm{prob}^{\mathrm{l},\max}_G(s,\mathbf{c})=\mathrm{Pr}_{\mathtt{D}_\mathbf{c}^\mathrm{l},\mathcal{D}[s]}(\Pi_G^\mathbf{c})$
for all $s\in L$ and $\mathbf{c}\ge\vec{0}$. 

Define the function $h:L\times\mathbb{R}^k_{\ge 0}\rightarrow [0,1]$ by 
$h(s, \mathbf{c}):=\mathrm{Pr}_{\mathtt{D}_\mathbf{c}^\mathrm{l},\mathcal{D}[s]}(\Pi_G^\mathbf{c})$. 
Suppose that $\mathrm{prob}^{\mathrm{l},\max}_G\ne h$. Let $\overline{h}:=|\mathrm{prob}^{\mathrm{l},\max}_G-h|$. 
Then there exists $s\in L$ and $\mathbf{c}\ge\vec{0}$ such that $\overline{h}(s,\mathbf{c})>0$. Define 
$T:=\max_{1\le i\le k}\{\frac{\mathbf{c}_i}{\mathbf{w}_{\min}}\}$\enskip. Let
\[
d:=\sup\left\{\overline{h}(s,\mathbf{c}')\mid s\in L,\vec{0}\le\mathbf{c}'\le\mathbf{c}\right\}
\]
and
\[
d':=\sup\left\{\overline{h}(s,\mathbf{c}')\mid s\in L,\vec{0}\le\mathbf{c}'\le\mathbf{c}\mbox{ and }\mathbf{w}(s)\ne\vec{0}\right\}\enskip.
\]
We first show that $d'<d$ by a (nested) contradiction proof.

Suppose $d'=d$. Choose $\epsilon>0$ such that $d>e^{\mathbf{E}_{\max}\cdot T}\cdot\epsilon$. Then choose $s\in L$ and 
$\vec{0}\le\mathbf{c}'\le\mathbf{c}$ such that $d-\epsilon<|\overline{h}(s,\mathbf{c}')|\le d$ and $\mathbf{w}(s)\ne\vec{0}$. 
By Theorem~\ref{thm:fix-late}, we have
\[
h(s,\mathbf{c}')=\int_0^\infty f_{\mathbf{E}(s)}(t)\cdot \left[\sum_{s'\in L}\mathbf{P}(s,a^*(t),s')\cdot h(s',\mathbf{c}'-t\cdot\mathbf{w}(s))\right]\,\mathrm{d}t\enskip.
\]
with $\mathcal{D}[a^*(t)]=\mathtt{D}_\mathbf{c}^\mathrm{l}(s,t,\centerdot)$.
Then with Theorem~\ref{thm:minlatef}, we obtain
\[
\overline{h}(s,\mathbf{c}')\le\int_0^T f_{\mathbf{E}(s)}(t)\cdot\left[\sum_{s'\in L}\mathbf{P}(s,a^*(t),s')\cdot
\overline{h}(s',\mathbf{c}'-t\cdot\mathbf{w}(s))\right]\,\mathrm{d}t\enskip.
\]
This implies $d-\epsilon\le (1-e^{-\mathbf{E}_{\max}\cdot T})\cdot d$, which in turn implies 
$d\le e^{\mathbf{E}_{\max}\cdot T}\cdot\epsilon$. Contradiction to the choice of $\epsilon$.

Thus $d>d'$. Let $\delta:=d-d'$ and $\epsilon:=\mathbf{P}_{\min}^{|L|}\cdot\delta$. We inductively construct a finite sequence $s_0,s_1,\dots,s_l$ ($1\le l\le |L|$) which satisfies
\[
\overline{h}(s_i,\mathbf{c}')> d-\mathbf{P}^{-i}_{\min}\cdot\epsilon~(i=0,\dots,l)
\]
as follows. Note that the triple $(L_1,L_2^{s,t},\mathcal{G}^{s,t})$ for $s\in L\backslash L_1$ and $t\in\mathbf{R}_{\ge 0}$ (w.r.t some $\mathbf{c}\in\mathbb{R}^k$) remains constant as $s,t$ varies, since $\mathbf{C}(s)=\mathbf{w}(s)=\vec{0}$. 
\begin{enumerate}\itemsep1pt \parskip0pt \parsep0pt
\item Initially, we set $i=0$ and choose $s_0\in L\backslash L_1,\vec{0}\le\mathbf{c}'\le\mathbf{c}$ such that
\[
d-\epsilon<\overline{h}(s_0,\mathbf{c}')\le d\enskip.
\]
\item As long as $i\le l$ and $0\le d-\mathbf{P}_{\min}^{-i}\cdot\epsilon<\overline{h}(s_i,\mathbf{c}')\le d$\enskip, we have 
$s_i\in L\backslash L_1$ (w.r.t $\mathbf{c}'$) since $\mathbf{P}_{\min}^{-i}\cdot\epsilon\le\delta$. By $\mathbf{w}(s_i)=\vec{0}$, $\mathtt{D}_{\mathbf{c}'}^{\mathrm{l}}(s_i,t,\centerdot)$
remains constant when $t$ varies. It follows that $s_i\in L\backslash (L_1\cup L_2^{\xi,0})$. Let $a^*$ be the action such that
$\mathcal{D}[a^*]=\mathtt{D}_{\mathbf{c}'}^{\mathrm{l}}(s_i,0,\centerdot)$. Then 
\[
h(s_i,\mathbf{c}')=\sum_{s'\in L}\mathbf{P}(s_i,a^*,s')\cdot h(s',\mathbf{c}')
\]
and there exists $u\in L$ such that $\mathbf{P}(s_i,a^*,u)>0$ and via $u$ the distance to $L_1\backslash L^{s_i,0}_2$ in $\mathcal{G}^{s_i,0}$ (w.r.t $\mathbf{c}'$) is decreased by one. Moreover, by
\[
\overline{h}(s_i,\mathbf{c}')\le\sum_{s'\in L}\mathbf{P}(s_i,a^*,s')\cdot\overline{h}(s',\mathbf{c}')
\]
we obtain
\[
d-\mathbf{P}^{-i}_{\min}\epsilon<(1-{\mathbf{P}}_{\min})\cdot d+{\mathbf{P}}_{\min}\cdot\overline{h}(u,\mathbf{c}')\enskip,
\]
which is further reduced to $\overline{h}(u,\mathbf{c}')> d-\mathbf{P}^{-(i+1)}_{\min}\cdot\epsilon$\enskip. We set $s_{i+1}=u$;
\item If $s_{i+1}\in L_1\backslash L^{s_{i+1},0}_2$, then the construction is terminated. Otherwise, back to Step 2.
\end{enumerate}
The legitimacy and termination (within $|L|$ steps) of the inductive construction follows from the definition of 
$\mathtt{D}_{\mathbf{c}'}^\mathrm{l}$. By $s_l\in L_1\backslash L^{s_l,0}_2$, we obtain
\[
d-\delta\ge \overline{h}(s_l,\mathbf{c}')> d-\mathbf{P}^{-l}_{\min}\cdot\epsilon~~(l\le |L|)\enskip,
\]
which is a contradiction due to $\epsilon=\mathbf{P}^{|L|}_{\min}\cdot\delta$. Thus $\mathrm{prob}^{\mathrm{l},\max}_G= h$.
\end{proof}

\section{Differential Characterizations for \\Maximal Reachability Probabilities}

In this section, we derive differential characterizations for the functions $\mathrm{prob}^{\mathrm{e},\max}_G$ and $\mathrm{prob}^{\mathrm{l},\max}_G$. These differential characterizations will be fundamental to our approximation algorithms. Below we fix a CTMDP $\left(L,Act,\mathbf{R},\{\mathbf{w}_i\}_{1\le i\le k}\right)$ and a set $G\subseteq L$.

\subsection{Early Schedulers}

Below we derive the differential characterization for $\mathrm{prob}^{\mathrm{e},\max}_G$. To ease the notation, we abbreviate 
$\mathrm{prob}^{\mathrm{e},\max}_G$ as $\mathrm{prob}^{\max}_G$ in this part. 

To derive the differential characterization for early schedulers, we first extend $\mathrm{prob}^{\max}_G$ in the following way.

\begin{definition}\label{def:approx:early:aux}
Let $Z_G:=\{(s,a)\in (L-G)\times Act\mid a\in\mathrm{En}(s)\}$\enskip. 
Define $\mathrm{prob}^{\max}_G:Z_G\times\mathbb{R}^k\rightarrow [0,1]$ by
\[
\mathrm{prob}^{\max}_G((s,a),\mathbf{c}):=\int_{0}^\infty f_{\mathbf{E}(s,a)}(t)\cdot\left[\sum_{s'\in L}\mathbf{P}(s,a,s')\cdot \mathrm{prob}^{\max}_G(s',\mathbf{c}-t\cdot\mathbf{w}(s,a))\right]\,\mathrm{d}t
\]
for $(s,a)\in Z_G$ and $\mathbf{c}\in\mathbb{R}^k$.
\end{definition}
By Definition~\ref{def:approx:early:aux} and Theorem~\ref{thm:minearlyf}, one easily sees that 
\[
\mathrm{prob}^{\max}_G(s,\mathbf{c})=\max_{a\in\mathrm{En}(s)}\mathrm{prob}^{\max}_G((s,a),\mathbf{c})\enskip
\]
for all $s\in L-G$ and $\mathbf{c}\in\mathbb{R}^k$.

The following definition introduces a sort of directional derivative which will be crucial in our approximation algorithm. 

\begin{definition}\label{def:approx:early:diderivative}
Let $z\in Z_G$ and $\mathbf{c}\ge\vec{0}$. Define
\[
\nabla^+\mathrm{prob}^{\max}_G(z,\mathbf{c}):=\lim\limits_{t\rightarrow 0^+}\frac{\mathrm{prob}^{\max}_G(z,\mathbf{c}+t\cdot\mathbf{w}(z))-\mathrm{prob}^{\max}_G(z,\mathbf{c})}{t}\enskip.
\]
If $\mathbf{c}_i>0$ whenever $\mathbf{w}_i(z)>0$ ($1\le i\le k$), define
\[
\nabla^-\mathrm{prob}^{\max}_G(z,\mathbf{c}):=\lim\limits_{t\rightarrow 0^-}\frac{\mathrm{prob}^{\max}_G(z,\mathbf{c}+t\cdot \mathbf{w}(z))-\mathrm{prob}^{\max}_G(z,\mathbf{c})}{t}\enskip;
\]
Otherwise, let $\nabla^-\mathrm{prob}^{\max}_G(z,\mathbf{c}):=\nabla^+\mathrm{prob}^{\max}_G(z,\mathbf{c})$\enskip.
\end{definition}
Thus $\nabla^+\mathrm{prob}^{\max}_G(z,\mathbf{c})$ (resp. $\nabla^-\mathrm{prob}^{\max}_G(z,\mathbf{c})$) is the right (resp. left) 
directional derivative along the vector $\mathbf{w}(z)$. 
The following theorem gives a characterization for $\nabla^+\mathrm{prob}^{\max}_G((s,a),\mathbf{c})$ and 
$\nabla^-\mathrm{prob}^{\max}_G((s,a),\mathbf{c})$\enskip.

\begin{theorem}\label{thm:approx:early:diderivative}
For all $z\in Z_G$ and $\mathbf{c}\ge \vec{0}$, 
$\nabla^+\mathrm{prob}^{\max}_G(z,\mathbf{c})=\nabla^-\mathrm{prob}^{\max}_G(z,\mathbf{c})$. Moreover, 
\[
\nabla^+\mathrm{prob}^{\max}_G((s,a),\mathbf{c})=\sum_{s'\in L}\mathbf{R}(s,a,s')\cdot\left(\mathrm{prob}^{\max}_{G}(s',\mathbf{c})-\mathrm{prob}^{\max}_G((s,a),\mathbf{c})\right)
\]
for all $(s,a)\in Z_G$ and $\mathbf{c}\ge\vec{0}$\enskip.
\end{theorem}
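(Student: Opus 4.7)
The plan is to exploit the memoryless structure of the exponential density to reduce the directional derivative along $\mathbf{w}(z)$ to an algebraic identity. Write $z=(s,a)$, $\lambda=\mathbf{E}(s,a)$, $\mathbf{w}=\mathbf{w}(s,a)$, and set $H(\mathbf{x}):=\sum_{s'\in L}\mathbf{P}(s,a,s')\cdot\mathrm{prob}^{\max}_G(s',\mathbf{x})$, so that $F(\mathbf{c}):=\mathrm{prob}^{\max}_G(z,\mathbf{c})=\int_0^\infty \lambda e^{-\lambda t}H(\mathbf{c}-t\mathbf{w})\,\mathrm{d}t$ by Definition~\ref{def:approx:early:aux}. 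For nonzero $h$ close to $0$, the substitution $u=t-h$ in the integral defining $F(\mathbf{c}+h\mathbf{w})$ yields $F(\mathbf{c}+h\mathbf{w})=e^{-\lambda h}\int_{-h}^{\infty}\lambda e^{-\lambda u}H(\mathbf{c}-u\mathbf{w})\,\mathrm{d}u$, and splitting the integral at $0$ produces the key identity
\[
F(\mathbf{c}+h\mathbf{w})=e^{-\lambda h}\left[F(\mathbf{c})+\int_{-h}^{0}\lambda e^{-\lambda u}H(\mathbf{c}-u\mathbf{w})\,\mathrm{d}u\right].
\]

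Subtracting $F(\mathbf{c})$ and dividing by $h$ decomposes the difference quotient into the standard factor $\frac{e^{-\lambda h}-1}{h}F(\mathbf{c})$, which tends to $-\lambda F(\mathbf{c})$, plus the correction term $\frac{e^{-\lambda h}}{h}\int_{-h}^{0}\lambda e^{-\lambda u}H(\mathbf{c}-u\mathbf{w})\,\mathrm{d}u$. For $\nabla^{+}F(\mathbf{c})$ I would let $h\to 0^{+}$ and substitute $u=-v$, rewriting the correction as $\frac{e^{-\lambda h}}{h}\int_{0}^{h}\lambda e^{\lambda v}H(\mathbf{c}+v\mathbf{w})\,\mathrm{d}v$. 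Since $\mathbf{c}+v\mathbf{w}\ge\vec 0$ for every $v\ge 0$, the Lipschitz continuity of $\mathrm{prob}^{\max}_G(s',\centerdot)$ on the non-negative orthant supplied by Theorem~\ref{thm:minearlyf} gives $H(\mathbf{c}+v\mathbf{w})\to H(\mathbf{c})$ as $v\to 0^{+}$, so the average tends to $\lambda H(\mathbf{c})$. Hence $\nabla^{+}F(\mathbf{c})=\lambda(H(\mathbf{c})-F(\mathbf{c}))$, and expanding via $\lambda\cdot\mathbf{P}(s,a,s')=\mathbf{R}(s,a,s')$ together with $\sum_{s'}\mathbf{R}(s,a,s')=\lambda$ produces the asserted closed form.

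For $\nabla^{+}F(\mathbf{c})=\nabla^{-}F(\mathbf{c})$, I would distinguish the two cases of Definition~\ref{def:approx:early:diderivative}. When the condition ``$\mathbf{c}_i>0$ whenever $\mathbf{w}_i>0$'' fails, equality holds by the convention in the definition. When it holds, I take $h\to 0^{-}$ in the same identity; setting $k=-h>0$ and reversing the orientation of the integral reduces the correction term to $\frac{e^{\lambda k}}{k}\int_{0}^{k}\lambda e^{-\lambda u}H(\mathbf{c}-u\mathbf{w})\,\mathrm{d}u$, which requires evaluating $H$ along $\mathbf{c}-u\mathbf{w}$ for $u\in[0,k]$. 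The main obstacle is precisely this boundary-continuity step: the condition guarantees $\mathbf{c}-u\mathbf{w}\ge\vec 0$ for all sufficiently small $u\ge 0$ (a coordinate with $\mathbf{w}_i=0$ is unchanged, and one with $\mathbf{w}_i>0$ satisfies $\mathbf{c}_i>0$ by assumption), so Theorem~\ref{thm:minearlyf} again yields $H(\mathbf{c}-u\mathbf{w})\to H(\mathbf{c})$ and the same limit $\lambda(H(\mathbf{c})-F(\mathbf{c}))$ emerges. Without the condition, $H(\mathbf{c}-u\mathbf{w})=0$ for every $u>0$ (because $\mathrm{prob}^{\max}_G(s',\centerdot)$ vanishes off the non-negative orthant) while $H(\mathbf{c})$ may be strictly positive; this shows that a genuine one-sided limit need not exist in that regime and explains why Definition~\ref{def:approx:early:diderivative} must make the identification by fiat there.
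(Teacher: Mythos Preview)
Your proposal is correct and follows essentially the same approach as the paper: both exploit the memoryless structure of the exponential density via the substitution $u=t-h$ (equivalently, the paper splits the integral at $t$ and substitutes on each piece) to obtain the identity $F(\mathbf{c}+h\mathbf{w})=e^{-\lambda h}\bigl[F(\mathbf{c})+\text{(short integral)}\bigr]$, and then pass to the limit. The only cosmetic difference is that the paper invokes L'H\^opital's rule to evaluate the limit of the difference quotient, whereas you argue directly from the continuity of $H$ via an integral-average argument; your treatment of the boundary case for $\nabla^{-}$ is also slightly more explicit than the paper's.
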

\begin{proof}
Let $z=(s,a)$ and $\lambda:=\mathbf{E}(s,a)$. We first consider $\nabla^+\mathrm{prob}^{\max}_G$.
By Theorem~\ref{thm:minearlyf}, for $t>0$, we have
\begin{eqnarray*}
& & \mathrm{prob}^{\max}_G((s,a),\mathbf{c}+t\cdot\mathbf{w}(s,a))\\
&=& \int_{0}^\infty f_{\lambda}(\tau)\cdot\left[\sum_{s'\in L}\mathbf{P}(s,a,s')\cdot\mathrm{prob}^{\max}_{G}(s',\mathbf{c}+(t-\tau)\cdot\mathbf{w}(s,a))\right]\,\mathrm{d}\tau \\
&=& \int_{0}^t f_{\lambda}(\tau)\cdot\left[\sum_{s'\in L}\mathbf{P}(s,a,s')\cdot\mathrm{prob}^{\max}_{G}(s',\mathbf{c}+(t-\tau)\cdot\mathbf{w}(s,a))\right]\,\mathrm{d}\tau \\
& & + \int_{t}^\infty f_{\lambda}(\tau)\cdot\left[\sum_{s'\in L}\mathbf{P}(s,a,s')\cdot\mathrm{prob}^{\max}_{G}(s',\mathbf{c}+(t-\tau)\cdot\mathbf{w}(s,a))\right]\,\mathrm{d}\tau \\
&=& e^{-\lambda\cdot t}\cdot \int_{0}^t \lambda\cdot e^{\lambda\cdot\tau}\cdot\left[\sum_{s'\in L}\mathbf{P}(s,a,s')\cdot\mathrm{prob}^{\max}_{G}(s',\mathbf{c}+\tau\cdot\mathbf{w}(s,a))\right]\,\mathrm{d}\tau \\
& & { } + e^{-\lambda\cdot t}\cdot\int_{0}^\infty f_{\lambda}(\tau)\cdot\left[\sum_{s'\in L}\mathbf{P}(s,a,s')\cdot\mathrm{prob}^{\max}_{G}(s',\mathbf{c}-\tau\cdot\mathbf{w}(s,a))\right]\,\mathrm{d}\tau \\
&=& e^{-\lambda\cdot t}\cdot \int_{0}^t \lambda\cdot e^{\lambda\cdot \tau}\cdot\left[\sum_{s'\in L}\mathbf{P}(s,a,s')\cdot\mathrm{prob}^{\max}_{G}(s',\mathbf{c}+\tau\cdot\mathbf{w}(s,a))\right]\,\mathrm{d}\tau \\
& & { } + e^{-\lambda\cdot t}\cdot\mathrm{prob}^{\max}_G((s,a),\mathbf{c})\enskip\\
\end{eqnarray*}
where the third equality is obtained by the variable substitution $\tau'=t-\tau$ for the first integral, and $\tau'=\tau-t$ in the second 
integral. The legitimacy of the variable substitution follows from the fact that the integrand is piecewise continuous 
(cf. Theorem~\ref{thm:minearlyf}). Thus by the continuity of $\mathrm{prob}^{\max}_G$ (Theorem~\ref{thm:minearlyf}) and an application of 
L'Hospital's Rule to Definition~\ref{def:approx:early:diderivative}, we obtain
\[
\nabla^+\mathrm{prob}^{\max}_G((s,a),\mathbf{c})=
\lambda\cdot\left[\sum_{s'\in L}\mathbf{P}(s,a,s')\cdot\mathrm{prob}^{\max}_{G}(s',\mathbf{c})\right]-\lambda\cdot\mathrm{prob}^{\max}_G((s,a),\mathbf{c})\enskip.
\]
Then the result follows. 

The proof for $\nabla^+\mathrm{prob}^{\max}_G((s,a),\mathbf{c})=\nabla^-\mathrm{prob}^{\max}_G((s,a),\mathbf{c})$ follows a similar 
argument. By Theorem~\ref{thm:minearlyf} and the previous derivation, for adequate $t>0$, we have
\begin{eqnarray*}
& & \mathrm{prob}^{\max}_G((s,a),\mathbf{c})\\
&=& \mathrm{prob}^{\max}_G((s,a),(\mathbf{c}-t\cdot\mathbf{w}(s,a))+t\cdot\mathbf{w}(s,a))\\
&=& e^{-\lambda\cdot t}\cdot \int_{0}^t \lambda\cdot e^{\lambda\cdot \tau}\cdot\left[\sum_{s'\in L}\mathbf{P}(s,a,s')\cdot\mathrm{prob}^{\max}_{G}(s',\mathbf{c}-(t-\tau)\cdot\mathbf{w}(s,a))\right]\,\mathrm{d}\tau \\
& & { } + e^{-\lambda\cdot t}\cdot\mathrm{prob}^{\max}_G((s,a),\mathbf{c}-t\cdot\mathbf{w}(s,a))\\
&=& \int_{0}^t \lambda\cdot e^{-\lambda\cdot \tau}\cdot\left[\sum_{s'\in L}\mathbf{P}(s,a,s')\cdot\mathrm{prob}^{\max}_{G}(s',\mathbf{c}-\tau\cdot\mathbf{w}(s,a))\right]\,\mathrm{d}\tau \\
& & { } + e^{-\lambda\cdot t}\cdot\mathrm{prob}^{\max}_G((s,a),\mathbf{c}-t\cdot\mathbf{w}(s,a))\enskip.\\
\end{eqnarray*}
where the last equality is obtained through the variable substitution $\tau'=t-\tau$. Thus by continuity and  
L'H{\^o}spital's rule, we obtain 
\[
\nabla^-\mathrm{prob}^{\max}_G((s,a),\mathbf{c})=
\lambda\cdot\left[\sum_{s'\in L}\mathbf{P}(s,a,s')\cdot\mathrm{prob}^{\max}_{G}(s',\mathbf{c})\right]-\lambda\cdot\mathrm{prob}^{\max}_G((s,a),\mathbf{c})\enskip.
\]
It follows that $\nabla^+\mathrm{prob}^{\max}_G((s,a),\mathbf{c})=\nabla^-\mathrm{prob}^{\max}_G((s,a),\mathbf{c})$\enskip.
\end{proof}
Theorem~\ref{thm:approx:early:diderivative} gives a differential characterization for $\mathrm{prob}^{\max}_G(z,\centerdot)$, 
$z\in Z_G$. Since $\nabla^+\mathrm{prob}^{\max}_G(z,\mathbf{c})=\nabla^-\mathrm{prob}^{\max}_G(z,\mathbf{c})$, we will solely use 
$\nabla\mathrm{prob}^{\max}_G(z,\mathbf{c})$ to denote both of them. 

Theorem~\ref{thm:approx:early:diderivative} allows one to approximate $\mathrm{prob}^{\max}_G(z,\mathbf{c}+t\cdot\mathbf{w}(z))$ by 
$\mathrm{prob}^{\max}_G(z,\mathbf{c})$ and $\nabla\mathrm{prob}^{\max}_G(z,\mathbf{c})$. This suggests an approximation algorithm which 
approximates $\mathrm{prob}^{\max}_G(z,\mathbf{c})$ from $\{\mathrm{prob}^{\max}_G(z,\mathbf{c}')\mid\mathbf{c}'\lneq\mathbf{c}\}$\enskip. 
An exception is the case when $\mathbf{w}(z)=\vec{0}$. Below we tackle this situation. 

\begin{proposition}\label{prop:approx:early:weightzero}
Let 
\[
Y^{\mathrm{e}}_G:=\{z\in Z_G\mid \mathbf{w}(z)=\vec{0}\}\cup\{s\in L-G\mid \exists a\in\mathrm{En}(s).\mathbf{w}(s,a)=\vec{0}\}\enskip.
\]
For all $\mathbf{c}\ge \vec{0}$, the function $y\mapsto\mathrm{prob}^{\max}_G(y,\mathbf{c})$ is the least fixed-point of the high order
operator $\mathcal{Y}^\mathrm{e}_{\mathbf{c}, G}:\left[Y^\mathrm{e}_G\rightarrow [0,1]\right]\rightarrow\left[Y^\mathrm{e}_G\rightarrow [0,1]\right]$ defined as follows:
\begin{align*}
&\mathcal{Y}^\mathrm{e}_{\mathbf{c}, G}(h)(s,a):=\\
&\quad\sum_{s'\in Y^\mathrm{e}_G}\mathbf{P}(s,a,s')\cdot h(s')+\sum_{s'\in L-Y^\mathrm{e}_G}\mathbf{P}(s,a,s')\cdot\mathrm{prob}_G^{\max}(s',\mathbf{c})
\end{align*}
for $(s,a)\in Y^\mathrm{e}_G$, and 
\begin{align*}
&\mathcal{Y}^\mathrm{e}_{\mathbf{c}, G}(h)(s):=\\
&\quad\max\{\max\{h(s,a)\mid (s,a)\in Y^\mathrm{e}_G\},\max\{\mathrm{prob}^{\max}_G((s,a),\mathbf{c})\mid (s,a)\in Z_G-Y^\mathrm{e}_G\}\}
\end{align*}
for $s\in Y^\mathrm{e}_G$\enskip.
\end{proposition}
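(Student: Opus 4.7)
The plan is two-fold: first, verify that $p := \mathrm{prob}^{\max}_G(\cdot,\mathbf{c})|_{Y^{\mathrm{e}}_G}$ is itself a fixed-point of $\mathcal{Y}^{\mathrm{e}}_{\mathbf{c},G}$; second, establish leastness by lifting an arbitrary fixed-point $h$ of $\mathcal{Y}^{\mathrm{e}}_{\mathbf{c},G}$ to a fixed-point $\widehat{h}$ of the higher-order operator $\mathcal{T}^{\mathrm{e}}_G$ from Theorem~\ref{thm:minearlyf}, and invoking the least-fixed-point property established there.

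The fixed-point check splits into two routine cases. For a pair $(s,a) \in Y^{\mathrm{e}}_G$, the condition $\mathbf{w}(s,a) = \vec{0}$ makes the cost shift $\mathbf{c} - t\cdot\mathbf{w}(s,a)$ degenerate to $\mathbf{c}$, so integrating the exponential density out of Definition~\ref{def:approx:early:aux} yields the discrete identity $p(s,a) = \sum_{s' \in L} \mathbf{P}(s,a,s') \cdot p(s')$, and splitting the sum according to membership in $Y^{\mathrm{e}}_G$ reproduces the first clause of $\mathcal{Y}^{\mathrm{e}}_{\mathbf{c},G}$. For a state $s \in Y^{\mathrm{e}}_G$, the max-over-actions identity noted after Definition~\ref{def:approx:early:aux}, partitioned by whether $(s,a) \in Y^{\mathrm{e}}_G$, reproduces the second clause.

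For leastness, define the lift $\widehat{h}: L \times \mathbb{R}^k \to [0,1]$ by $\widehat{h}(s,\mathbf{c}') := h(s)$ when $s \in Y^{\mathrm{e}}_G$ and $\mathbf{c}' = \mathbf{c}$, and $\widehat{h}(s,\mathbf{c}') := \mathrm{prob}^{\max}_G(s,\mathbf{c}')$ otherwise. I would then verify that $\widehat{h}$ is a fixed-point of $\mathcal{T}^{\mathrm{e}}_G$. Theorem~\ref{thm:minearlyf} then forces $\mathrm{prob}^{\max}_G \le \widehat{h}$, so $p(y) \le h(y)$ for every state $y \in Y^{\mathrm{e}}_G$. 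The analogous inequality $p(s,a) \le h(s,a)$ for a pair then follows by substituting this state bound into the fixed-point identity for $p$ obtained in the first part.

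The main obstacle is the verification that $\widehat{h}$ is a fixed-point of $\mathcal{T}^{\mathrm{e}}_G$. The subtle point is that $\widehat{h}$ disagrees with $\mathrm{prob}^{\max}_G$ only on the single cost slice $\mathbf{c}' = \mathbf{c}$ over $Y^{\mathrm{e}}_G$-states, and one must show this modification is compatible with the integral equations defining $\mathcal{T}^{\mathrm{e}}_G$. Whenever $\mathbf{w}(s,a) \ne \vec{0}$, the set $\{t \ge 0 : \mathbf{c}' - t\cdot\mathbf{w}(s,a) = \mathbf{c}\}$ is Lebesgue-null, so the integral is insensitive to the redefinition along such an action. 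The remaining case is a state $s \in Y^{\mathrm{e}}_G$ at $\mathbf{c}' = \mathbf{c}$ with some zero-cost enabled action, where the integrand becomes constant in $t$ and the modification is felt in full; here the fixed-point equation for $h$ under $\mathcal{Y}^{\mathrm{e}}_{\mathbf{c},G}$ supplies exactly the identity $\mathcal{T}^{\mathrm{e}}_G(\widehat{h})(s,\mathbf{c}) = h(s) = \widehat{h}(s,\mathbf{c})$ after taking the maximum over $\mathrm{En}(s)$.
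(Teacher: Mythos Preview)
Your argument is correct and takes a genuinely different route from the paper's. Both proofs share the easy first step (verifying that $p$ is a fixed-point of $\mathcal{Y}^{\mathrm{e}}_{\mathbf{c},G}$) and both ultimately appeal to the least-fixed-point characterization in Theorem~\ref{thm:minearlyf}, but the mechanism for leastness differs. The paper argues by contradiction: assuming $p$ is not least, it takes the actual least fixed-point $h^*$, isolates the set $Y'$ where the gap $p-h^*$ is maximal, observes that $Y'$ is closed under optimal successors, and then \emph{lowers} $p$ by a uniform amount $\delta'$ on $Y'$ at the single slice $\mathbf{c}$ to manufacture a pre-fixed-point of $\mathcal{T}^{\mathrm{e}}_G$ strictly below $\mathrm{prob}^{\max}_G$, contradicting Theorem~\ref{thm:minearlyf}. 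You instead proceed directly: given an arbitrary fixed-point $h$ of $\mathcal{Y}^{\mathrm{e}}_{\mathbf{c},G}$, you \emph{lift} it to a function $\widehat{h}$ on all of $L\times\mathbb{R}^k$ and show $\widehat{h}$ is an honest fixed-point of $\mathcal{T}^{\mathrm{e}}_G$, so that $\mathrm{prob}^{\max}_G\le\widehat{h}$ immediately. Your approach is arguably cleaner---it avoids the contradiction framework, the selection of $\delta'$, and the implicit reliance on least-fixed-point $=$ least-pre-fixed-point---while the paper's approach makes the combinatorial structure of the ``blocking'' set $Y'$ explicit, which is exactly the structure exploited later in the optimal-scheduler constructions of Theorems~\ref{thm:opt:earlyscheduler} and~\ref{thm:opt:latescheduler}. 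Your Lebesgue-null observation for actions with $\mathbf{w}(s,a)\ne\vec{0}$ is the key technical point, and it is sound.
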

\begin{proof}
By Theorem~\ref{thm:minearlyf}, one can easily see that $\{\mathrm{prob}^{\max}_G(y,\mathbf{c})\}_{y\in Y^\mathrm{e}_G}$ is a fixed-point 
of $\mathcal{Y}^\mathrm{e}_\mathbf{c}$. Suppose that it is not the least fixed-point of $\mathcal{Y}^\mathrm{e}_{\mathbf{c}, G}$. Let the 
least fixed point be $h^*$. Define 
\begin{itemize}\itemsep1pt \parskip0pt \parsep0pt
\item $\delta:=\max\{\mathrm{prob}^{\max}_G(y,\mathbf{c})-h^*(y)\mid y\in Y^\mathrm{e}_G\}$, and 
\item $Y':=\{y\in Y^\mathrm{e}_G\mid \mathrm{prob}^{\max}_G(y,\mathbf{c})-h^*(y)=\delta\}$\enskip.
\end{itemize}
Since $\{\mathrm{prob}^{\max}_G(y,\mathbf{c})\}_{y\in Y^\mathrm{e}_G}\ne h^*$, we have $\delta>0$. Consider an arbitrary $y\in Y'$. By the 
maximal choice of $\delta$ and $Y'$, one can obtain that
\begin{enumerate}\itemsep1pt\parskip0pt\parsep0pt
\item for all $(s,a)\in Y'$, $s'\in Y'$ whenever $s'\in L$ and $\mathbf{P}(s,a,s')>0$;
\item for all $s\in Y'$, $(s,a)\in Y'$ whenever $a\in\mathrm{En}(s)$ and $\mathrm{prob}^{\max}_G(s,\mathbf{c})=\mathrm{prob}^{\max}_G((s,a),\mathbf{c})$\enskip.
\end{enumerate}
Intuitively, one can decrease every coordinate in $Y'$ on $\mathrm{prob}^{\max}_G$ by a same amount so that a certain ``balance'' still holds. 
Choose $\delta'\in (0,\delta)$ such that 
\[
\mathrm{prob}^{\max}_G(s,\mathbf{c})-\delta'\ge\max\{\mathrm{prob}_{G}^{\max}(s,a,\mathbf{c})-\delta'\cdot\mathbf{1}_{Y'}(s,a)\mid a\in\mathrm{En}(s)\}
\]
for all $s\in Y'$. Define $h: L\times\mathbb{R}^k\rightarrow [0,1]$ by: 
$h(s,\mathbf{c}')=\mathrm{prob}^{\max}_G(s,\mathbf{c}')-\delta'$ if $\mathbf{c}'=\mathbf{c}$ and $s\in Y'$, and 
$h(s,\mathbf{c}')=\mathrm{prob}^{\max}_G(s,\mathbf{c}')$ otherwise. Then $h$ is a pre-fixed-point of $\mathcal{T}^\mathrm{e}_G$ which 
satisfies that $h\lneq\mathrm{prob}^{\max}_G$. Contradiction to Theorem~\ref{thm:minearlyf}.
\end{proof}

\subsection{Late Schedulers}

Below we derive the differential characterization for $\mathrm{prob}^{\mathrm{l},\max}_G$. The development is much similar to the one of 
early schedulers. To ease the notation, we abbreviate $\mathrm{prob}^{\mathrm{l},\max}_G$ as $\mathrm{prob}^{\max}_G$ in this part. 

\begin{definition}\label{def:approx:late:diderivative}
Let $s\in L-G$ and $\mathbf{c}\ge\vec{0}$. Define
\[
\nabla^+\mathrm{prob}^{\max}_G(s,\mathbf{c}):=\lim\limits_{t\rightarrow 0^+}\frac{\mathrm{prob}^{\max}_G(s,\mathbf{c}+t\cdot\mathbf{w}(s))-\mathrm{prob}^{\max}_G(s,\mathbf{c})}{t}\enskip.
\]
If $\mathbf{c}_i>0$ whenever $\mathbf{w}_i(s)>0$ ($1\le i\le k$), define
\[
\nabla^-\mathrm{prob}^{\max}_G(s,\mathbf{c}):=\lim\limits_{t\rightarrow 0^-}\frac{\mathrm{prob}^{\max}_G(s,\mathbf{c}+t\cdot \mathbf{w}(s))-\mathrm{prob}^{\max}_G(s,\mathbf{c})}{t}\enskip;
\]
Otherwise, let $\nabla^-\mathrm{prob}^{\max}_G(s,\mathbf{c})=\nabla^+\mathrm{prob}^{\max}_G(s,\mathbf{c})$\enskip.
\end{definition}

\begin{theorem}\label{thm:approx:late:diderivative}
For all $s\in L-G$ and $\mathbf{c}\ge\vec{0}$, $\nabla^+\mathrm{prob}^{\max}_G(s,\mathbf{c})=\nabla^-\mathrm{prob}^{\max}_G(s,\mathbf{c})$. Moreover, 
\[
\nabla^+\mathrm{prob}^{\max}_G(s,\mathbf{c})=\max_{a\in\mathrm{En}(s)}
\sum_{s'\in L}\mathbf{R}(s,a,s')\cdot\left(\mathrm{prob}^{\max}_{G}(s',\mathbf{c})-\mathrm{prob}^{\max}_G(s,\mathbf{c})\right)\enskip.
\]
\end{theorem}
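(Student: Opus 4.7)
The plan is to mirror the proof of Theorem~\ref{thm:approx:early:diderivative}, with the one structural change that the $\max$ over actions now sits \emph{inside} the outer integral (by Theorem~\ref{thm:minlatef}) rather than outside. Fix $s \in L-G$, set $\lambda := \mathbf{E}(s)$, and introduce the abbreviation $M(\mathbf{c}) := \max_{a \in \mathrm{En}(s)} \sum_{s' \in L} \mathbf{P}(s,a,s') \cdot \mathrm{prob}^{\max}_G(s', \mathbf{c})$. By Theorem~\ref{thm:minlatef}, the fixed-point identity reads $\mathrm{prob}^{\max}_G(s, \mathbf{c}') = \int_0^\infty f_\lambda(\tau) \cdot M(\mathbf{c}' - \tau \cdot \mathbf{w}(s)) \, \mathrm{d}\tau$ for $s \notin G$, and $M$ is (Lipschitz) continuous in $\mathbf{c}$ as a finite maximum of Lipschitz functions.

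For the right derivative, I would substitute $\mathbf{c}' := \mathbf{c} + t \cdot \mathbf{w}(s)$ with $t > 0$, split the integral at $\tau = t$, and perform the substitutions $\tau' = t - \tau$ on $[0,t]$ and $\tau' = \tau - t$ on $[t, \infty)$; the second piece collapses to $e^{-\lambda t} \cdot \mathrm{prob}^{\max}_G(s, \mathbf{c})$, yielding
\[
\mathrm{prob}^{\max}_G(s, \mathbf{c} + t \cdot \mathbf{w}(s)) = e^{-\lambda t} \int_0^t \lambda e^{\lambda \tau'} M(\mathbf{c} + \tau' \cdot \mathbf{w}(s)) \, \mathrm{d}\tau' + e^{-\lambda t} \cdot \mathrm{prob}^{\max}_G(s, \mathbf{c}).
\]
L'H\^opital's rule applied to Definition~\ref{def:approx:late:diderivative}, together with continuity of $M$ at $\mathbf{c}$, produces $\nabla^+\mathrm{prob}^{\max}_G(s, \mathbf{c}) = \lambda \cdot M(\mathbf{c}) - \lambda \cdot \mathrm{prob}^{\max}_G(s, \mathbf{c})$. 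Local uniformity gives $\mathbf{R}(s,a,s') = \lambda \cdot \mathbf{P}(s,a,s')$ and $\sum_{s'} \mathbf{R}(s,a,s') = \lambda$ for every $a \in \mathrm{En}(s)$, which lets me absorb $\lambda \cdot \mathrm{prob}^{\max}_G(s, \mathbf{c})$ into the maximum and obtain the stated formula.

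For the left derivative, I would plug $\mathbf{c} - t \cdot \mathbf{w}(s)$ (in place of $\mathbf{c}$) into the identity derived for the right derivative and substitute $\tau'' = t - \tau'$ to rewrite
\[
\mathrm{prob}^{\max}_G(s, \mathbf{c}) = \int_0^t \lambda e^{-\lambda \tau''} M(\mathbf{c} - \tau'' \cdot \mathbf{w}(s)) \, \mathrm{d}\tau'' + e^{-\lambda t} \cdot \mathrm{prob}^{\max}_G(s, \mathbf{c} - t \cdot \mathbf{w}(s)).
\]
Re-expressing $\nabla^-$ via $t \mapsto -u$ with $u \to 0^+$ and applying L'H\^opital's rule once more recovers the same value $\lambda \cdot M(\mathbf{c}) - \lambda \cdot \mathrm{prob}^{\max}_G(s, \mathbf{c})$, yielding equality of the two directional derivatives.

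The main obstacle I expect is justifying the interior $\max$: specifically, checking that $\tau \mapsto M(\mathbf{c} - \tau \cdot \mathbf{w}(s))$ is at worst piecewise continuous in $\tau$ (so the splits and variable substitutions are legitimate) and continuous at $\tau = 0$ (so that L'H\^opital's rule applies). Both points follow from Theorem~\ref{thm:minlatef}, since each $\mathrm{prob}^{\max}_G(s', \cdot)$ is Lipschitz on $\{\mathbf{c} \geq \vec{0}\}$ with any discontinuity confined to the boundary of $\mathbb{R}^k_{\geq 0}$, and finite maxima preserve this regularity. Once piecewise continuity is in hand, the computation is a near-transcription of the early-scheduler proof with $M(\cdot)$ playing the role of the per-action expectation.
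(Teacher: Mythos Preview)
Your proposal is correct and follows essentially the same route as the paper's proof: the same integral splitting at $\tau=t$, the same two substitutions $\tau'=t-\tau$ and $\tau'=\tau-t$, the same collapse of the tail to $e^{-\lambda t}\cdot\mathrm{prob}^{\max}_G(s,\mathbf{c})$, and the same appeal to L'H\^opital together with the Lipschitz continuity from Theorem~\ref{thm:minlatef}. Your abbreviation $M(\mathbf{c})$ and the explicit remark that local uniformity gives $\mathbf{R}(s,a,s')=\lambda\cdot\mathbf{P}(s,a,s')$ make the write-up slightly cleaner, but the argument is the same.
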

\begin{proof}
Let $\lambda:=\mathbf{E}(s)$. We first consider $\nabla^+\mathrm{prob}^{\max}_G$.
By Theorem~\ref{thm:minlatef}, for $t>0$, we have

\begin{eqnarray*}
& & \mathrm{prob}^{\max}_G(s,\mathbf{c}+t\cdot\mathbf{w}(s))\\
&=& \int_{0}^\infty f_{\lambda}(\tau)\cdot\max_{a\in\mathrm{En}(s)}\left[\sum_{s'\in L}\mathbf{P}(s,a,s')\cdot\mathrm{prob}^{\max}_{G}(s',\mathbf{c}+(t-\tau)\cdot\mathbf{w}(s))\right]\,\mathrm{d}\tau \\
&=& \int_{0}^t f_{\lambda}(\tau)\cdot\max_{a\in\mathrm{En}(s)}\left[\sum_{s'\in L}\mathbf{P}(s,a,s')\cdot\mathrm{prob}^{\max}_{G}(s',\mathbf{c}+(t-\tau)\cdot\mathbf{w}(s))\right]\,\mathrm{d}\tau \\
& & + \int_{t}^\infty f_{\lambda}(\tau)\cdot\max_{a\in\mathrm{En}(s)}\left[\sum_{s'\in L}\mathbf{P}(s,a,s')\cdot\mathrm{prob}^{\max}_{G}(s',\mathbf{c}+(t-\tau)\cdot\mathbf{w}(s))\right]\,\mathrm{d}\tau \\
&=& e^{-\lambda\cdot t}\cdot \int_{0}^t \lambda\cdot e^{\lambda\cdot\tau}\cdot\max_{a\in\mathrm{En}(s)}\left[\sum_{s'\in L}\mathbf{P}(s,a,s')\cdot\mathrm{prob}^{\max}_{G}(s',\mathbf{c}+\tau\cdot\mathbf{w}(s))\right]\,\mathrm{d}\tau \\
& & { }+e^{-\lambda\cdot t}\int_{0}^\infty f_{\lambda}(\tau)\cdot\max_{a\in\mathrm{En}(s)}\left[\sum_{s'\in L}\mathbf{P}(s,a,s')\cdot\mathrm{prob}^{\max}_{G}(s',\mathbf{c}-\tau\cdot\mathbf{w}(s))\right]\,\mathrm{d}\tau \\
&=& e^{-\lambda\cdot t}\cdot \int_{0}^t \lambda\cdot e^{\lambda\cdot t}\cdot\max_{a\in\mathrm{En}(s)}\left[\sum_{s'\in L}\mathbf{P}(s,a,s')\cdot\mathrm{prob}^{\max}_{G}(s',\mathbf{c}+\tau\cdot\mathbf{w}(s))\right]\,\mathrm{d}\tau \\
& & {}+e^{-\lambda\cdot t}\cdot\mathrm{prob}^{\max}_G(s,\mathbf{c})\\
\end{eqnarray*}
where the third equality is obtained by the variable substitution $\tau'=t-\tau$ for the first integral, and $\tau'=\tau-t$ in the second 
integral. The legitimacy of the variable substitution follows from the fact that the integrand is piecewise continuous 
(cf. Theorem~\ref{thm:minlatef}). Thus by applying L'Hospital's rule to Definition~\ref{def:approx:late:diderivative}, we obtain that
\[
\nabla^+\mathrm{prob}^{\max}_G(s,\mathbf{c})=
\max_{a\in\mathrm{En}(s)}\lambda\cdot\left[\sum_{s'\in L}\mathbf{P}(s,a,s')\cdot\left(\mathrm{prob}^{\max}_{G}(s',\mathbf{c})-\mathrm{prob}^{\max}_G(s,\mathbf{c})\right)\right]
\]
which implies the result. The proof for 
$\nabla^-\mathrm{prob}^{\max}_G((s,a),\mathbf{c})$ follows a similar argument. By 
Theorem~\ref{thm:minlatef} and the previous derivation, we have for adequate $t>0$, 
\begin{eqnarray*}
& & \mathrm{prob}^{\max}_G(s,\mathbf{c})\\
&=& \mathrm{prob}^{\max}_G(s,(\mathbf{c}-t\cdot\mathbf{w}(s))+t\cdot\mathbf{w}(s))\\
&=& e^{-\lambda\cdot t}\cdot \int_{0}^t \lambda\cdot e^{\lambda\cdot \tau}\cdot\max_{a\in\mathrm{En}(s)}\left[\sum_{s'\in L}\mathbf{P}(s,a,s')\cdot\mathrm{prob}^{\max}_{G}(s',\mathbf{c}-(t-\tau)\cdot\mathbf{w}(s))\right]\,\mathrm{d}\tau \\
& & { } + e^{-\lambda\cdot t}\cdot\mathrm{prob}^{\max}_G(s,\mathbf{c}-t\cdot\mathbf{w}(s))\\
&=& \int_{0}^t \lambda\cdot e^{-\lambda\cdot \tau}\cdot\max_{a\in\mathrm{En}(s)}\left[\sum_{s'\in L}\mathbf{P}(s,a,s')\cdot\mathrm{prob}^{\max}_{G}(s',\mathbf{c}-\tau\cdot\mathbf{w}(s))\right]\,\mathrm{d}\tau \\
& & { } + e^{-\lambda\cdot t}\cdot\mathrm{prob}^{\max}_G(s,\mathbf{c}-t\cdot\mathbf{w}(s))\enskip.\\
\end{eqnarray*}
where the last equality is obtained through the variable substitution $\tau'=t-\tau$. Thus by applying L'Hospital's Rule to 
Definition~\ref{def:approx:late:diderivative}, we obtain 
\[
\nabla^-\mathrm{prob}^{\max}_G(s,\mathbf{c})=
\max_{a\in\mathrm{En}(s)}\lambda\cdot\left[\sum_{s'\in L}\mathbf{P}(s,a,s')\cdot\left(\mathrm{prob}^{\max}_{G}(s',\mathbf{c})-\mathrm{prob}^{\max}_G(s,\mathbf{c})\right)\right]
\]
which directly shows that $\nabla^+\mathrm{prob}^{\max}_G((s,a),\mathbf{c})=\nabla^-\mathrm{prob}^{\max}_G((s,a),\mathbf{c})$. 
\end{proof}
As in the case of early schedulers, there is a special case when $\mathbf{w}(s)=\vec{0}$. The following proposition is the counterpart of 
Proposition~\ref{prop:approx:early:weightzero}.

\begin{proposition}\label{prop:approx:late:weightzero}
Let $Y^\mathrm{l}_G:=\{s\in L\mid \mathbf{w}(s)=\vec{0}\}$\enskip. For each $\mathbf{c}\ge \vec{0}$, the function 
$s\mapsto\mathrm{prob}^{\max}_G(s,\mathbf{c})$ is the least fixed-point (w.r.t $\le$) of the high-order operator 
$\mathcal{Y}^\mathrm{l}_{\mathbf{c}, G}:\left[Y^{\mathrm{l}}_G\rightarrow [0,1]\right]\rightarrow\left[Y^{\mathrm{l}}_G\rightarrow [0,1]\right]$ defined as follows:
\[
\mathcal{Y}^\mathrm{l}_{\mathbf{c}, G}(h)(s):=\max_{a\in\mathrm{En}(s)}\left[\sum_{s'\in Y^\mathrm{l}_G}\mathbf{P}(s,a,s')\cdot h(s')+\sum_{s'\in L-Y^\mathrm{l}_G}\mathbf{P}(s,a,s')\cdot\mathrm{prob}_G^{\max}(s',\mathbf{c})\right]\enskip.
\]
\end{proposition}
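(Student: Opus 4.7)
The plan is to mirror the proof of Proposition~\ref{prop:approx:early:weightzero}, adapted to the $\max$ structure inherent to late schedulers. It splits into (a) checking that $s\mapsto\mathrm{prob}^{\max}_G(s,\mathbf{c})$ restricted to $Y^{\mathrm{l}}_G$ is a fixed-point of $\mathcal{Y}^{\mathrm{l}}_{\mathbf{c},G}$, and (b) showing it is the least. Part (a) is immediate: whenever $s\in Y^{\mathrm{l}}_G$ we have $\mathbf{w}(s)=\vec{0}$, so the integrand in Theorem~\ref{thm:minlatef} is independent of $\tau$ and $f_{\mathbf{E}(s)}$ integrates to $1$; splitting the remaining sum $\sum_{s'\in L}\mathbf{P}(s,a,s')\cdot\mathrm{prob}^{\max}_G(s',\mathbf{c})$ into successors inside and outside $Y^{\mathrm{l}}_G$ reproduces exactly $\mathcal{Y}^{\mathrm{l}}_{\mathbf{c},G}(\mathrm{prob}^{\max}_G(\centerdot,\mathbf{c}))(s)$.

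For (b), I would argue by contradiction. Suppose $h^{*}$ is a strictly smaller fixed-point; set $\delta := \max\{\mathrm{prob}^{\max}_G(s,\mathbf{c}) - h^{*}(s) \mid s\in Y^{\mathrm{l}}_G\} > 0$ and $Y':=\{s\in Y^{\mathrm{l}}_G\mid \mathrm{prob}^{\max}_G(s,\mathbf{c})-h^{*}(s)=\delta\}$. Fix $s\in Y'$ and choose an action $a^{*}$ attaining the max in the equation for $\mathrm{prob}^{\max}_G(s,\mathbf{c})$. Using $a^{*}$ only as a lower bound for $h^{*}(s)=\mathcal{Y}^{\mathrm{l}}_{\mathbf{c},G}(h^{*})(s)$ and subtracting gives
\[
\delta \;\le\; \sum_{s'\in Y^{\mathrm{l}}_G}\mathbf{P}(s,a^{*},s')\cdot(\mathrm{prob}^{\max}_G(s',\mathbf{c})-h^{*}(s'))\;\le\;\delta\cdot\sum_{s'\in Y^{\mathrm{l}}_G}\mathbf{P}(s,a^{*},s')\enskip.
\]
Since $\delta>0$, both inequalities are equalities: every successor in the support of $\mathbf{P}(s,a^{*},\centerdot)$ must lie in $Y'$, and no probability leaks out of $Y^{\mathrm{l}}_G$ under $a^{*}$.

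Finally, for a small $\delta'\in(0,\delta)$ to be specified, define $h:L\times\mathbb{R}^k\to[0,1]$ by $h(s,\mathbf{c}'):=\mathrm{prob}^{\max}_G(s,\mathbf{c}')-\delta'$ when $(s,\mathbf{c}')\in Y'\times\{\mathbf{c}\}$, and $h:=\mathrm{prob}^{\max}_G$ elsewhere. For each $s\in Y'$, every max-achieving action at $s$ keeps all of its mass inside $Y'$ and thus contributes exactly $\mathrm{prob}^{\max}_G(s,\mathbf{c})-\delta'$ to $\mathcal{T}^{\mathrm{l}}_G(h)(s,\mathbf{c})$; any non-optimal action $a$ already leaves a strictly positive slack $\mathrm{prob}^{\max}_G(s,\mathbf{c})-\sum_{s'}\mathbf{P}(s,a,s')\cdot\mathrm{prob}^{\max}_G(s',\mathbf{c})>0$, and picking $\delta'$ below the minimum of these finitely many slacks (appropriately rescaled by $1-\sum_{s'\in Y'}\mathbf{P}(s,a,s')$ when the latter is positive) keeps such an $a$'s contribution $\le\mathrm{prob}^{\max}_G(s,\mathbf{c})-\delta'$ as well. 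At all other $(s,\mathbf{c}')$, monotonicity of $\mathcal{T}^{\mathrm{l}}_G$ together with $h\le\mathrm{prob}^{\max}_G$ and the fact that $h$ and $\mathrm{prob}^{\max}_G$ differ only on a set the integral hits at a measure-zero set of $t$ give $\mathcal{T}^{\mathrm{l}}_G(h)\le\mathcal{T}^{\mathrm{l}}_G(\mathrm{prob}^{\max}_G)=\mathrm{prob}^{\max}_G=h$. Hence $h\lneq\mathrm{prob}^{\max}_G$ is a pre-fixed-point of $\mathcal{T}^{\mathrm{l}}_G$, contradicting Theorem~\ref{thm:minlatef}.

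The main obstacle will be the bookkeeping around the $\max$: establishing that \emph{every} max-achieving action at a state in $Y'$ pushes its entire mass into $Y'$ (crucially ruling out leakage into $L-Y^{\mathrm{l}}_G$, where a reduction in $h$ would be invisible to $\mathcal{Y}^{\mathrm{l}}_{\mathbf{c},G}$), and then selecting a single $\delta'$ that simultaneously handles the finitely many non-optimal actions across all $s\in Y'$ without letting some formerly non-optimal action overtake the reduced value of an optimal one.
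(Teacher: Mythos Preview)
Your proposal is correct and follows essentially the same approach as the paper's proof: both verify the fixed-point property directly from Theorem~\ref{thm:minlatef} (using $\mathbf{w}(s)=\vec{0}$), then argue by contradiction via the maximal-gap set $Y'$, show that every max-achieving action from a state in $Y'$ keeps all its mass inside $Y'$, and finally subtract a small $\delta'$ on $Y'\times\{\mathbf{c}\}$ to produce a strictly smaller pre-fixed-point of $\mathcal{T}^{\mathrm{l}}_G$, contradicting Theorem~\ref{thm:minlatef}. The only superfluous ingredient in your write-up is the measure-zero remark: for $(s,\mathbf{c}')\notin Y'\times\{\mathbf{c}\}$ the chain $\mathcal{T}^{\mathrm{l}}_G(h)\le\mathcal{T}^{\mathrm{l}}_G(\mathrm{prob}^{\max}_G)=\mathrm{prob}^{\max}_G=h$ already follows from monotonicity and $h=\mathrm{prob}^{\max}_G$ there, so no integral argument is needed.
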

\begin{proof}
By Theorem~\ref{thm:minlatef}, one can easily see that $\{\mathrm{prob}^{\max}_G(s,\mathbf{c})\}_{s\in Y^\mathrm{l}_G}$ is a fixed point 
of $\mathcal{Y}^\mathrm{l}_{\mathbf{c}, G}$. Suppose that it is not the least fixed-point of $\mathcal{Y}^\mathrm{l}_{\mathbf{c}, G}$. Let the least 
fixed-point be $h^*$. Define 
\begin{itemize}\itemsep1pt \parskip0pt \parsep0pt
\item $\delta:=\max\{\mathrm{prob}^{\max}_G(s,\mathbf{c})-h^*(s)\mid s\in Y^\mathrm{l}_G\}$, and 
\item $Y':=\{s\in Y^\mathrm{l}_G\mid \mathrm{prob}^{\max}_G(s,\mathbf{c})-h^*(s)=\delta\}$\enskip.
\end{itemize}
Since $s\mapsto\mathrm{prob}^{\max}_G(s,\mathbf{c})\ne h^*$, we have $\delta>0$. By the maximal choice of $\delta$ and $Y'$, one can
obtain that for all $s\in Y'$, $s'\in Y'$ whenever there exists an $a\in\mathrm{En}(s)$ such that 
$\mathrm{prob}_G^{\max}(s,\mathbf{c})=\sum_{s''\in L}\mathbf{P}(s,a,s'')\cdot \mathrm{prob}_G^{\max}(s'',\mathbf{c})$ and $\mathbf{P}(s,a,s')>0$\enskip.
Intuitively, one can decrease every coordinate in $Y'$ on $\mathrm{prob}^{\max}_G$ by a same amount so that the ``balance'' still holds. 
Choose $\delta'\in (0,\delta)$ such that 
\[
\mathrm{prob}^{\max}_G(s,\mathbf{c})-\delta'\ge\max_{a\in\mathrm{En}(s)}\left[\sum_{s'\in L}\mathbf{P}(s,a,s')\cdot (\mathrm{prob}^{\max}_G(s',\mathbf{c})-\delta'\cdot\mathbf{1}_{Y'}(s'))\right]
\]
for all $s\in Y'$. Define $h:L\times\mathbb{R}^k\rightarrow [0,1]$ by: 
$h(s,\mathbf{c}')=\mathrm{prob}^{\max}_G(s,\mathbf{c}')-\delta'$ if $\mathbf{c}'=\mathbf{c}$ and $s\in Y'$, and 
$h(s,\mathbf{c}')=\mathrm{prob}^{\max}_G(s,\mathbf{c}')$ otherwise. Then $h$ is a pre-fixed-point of $\mathcal{T}^\mathrm{l}_G$ which 
satisfies that $h\lneq\mathrm{prob}^{\max}_G$. Contradiction to Theorem~\ref{thm:minlatef}.
\end{proof}

\section{Numerical Approximation Algorithms}

In this section, we develop approximation algorithms to compute the maximal cost-bounded reachability probability under  
both early and late schedulers. In the following we fix a CTMDP $\mathcal{M}=\left(L,Act,\mathbf{R},\{\mathbf{w}_i\}_{1\le i\le k}\right)$. 
Our numerical algorithms will achieve the following tasks:

~\\
\noindent\textbf{Input:} a set $G\subseteq L$, a state $s\in L$, a vector $\mathbf{c}\in\mathbb{N}^k_{0}$ and an error bound 
$\epsilon>0$;

\noindent\textbf{Output:} a value $x\in [0,1]$ such that $|\mathrm{prob}^{\mathrm{o},\max}_G(s,\mathbf{c})-x|\le\epsilon$ for 
$\mathrm{o}\in\{\mathrm{e},\mathrm{l}\}$\enskip.

~\\
For computational purposes, we assume that each $\mathbf{w}_i(s,a)$ is an integer; rational numbers (and simultaneously the input
cost bound vector) can adjusted to integers by multiplying a common multiplier of the denominators, without changing the maximal 
probability value to be approximated.  

\subsection{Early Schedulers}

In this part, we demonstrate the approximation algorithm under early schedulers. We will omit `$\mathrm{e}$' on the superscript of 
$\mathrm{prob}_G^{\mathrm{e},\max}$, $Y_G^\mathrm{e}$ and $\mathcal{Y}_{c,G}^\mathrm{e}$, etc. Below we fix a set $G\subseteq L$. 

Based on Theorem~\ref{thm:approx:early:diderivative} and Proposition~\ref{prop:approx:early:weightzero}, we design our approximation
scheme as follows. First we introduce our discretization for a given $\mathbf{c}\in\mathbb{N}_{0}^k$ and a discretization step
$\frac{1}{N}$ ($N\in\mathbb{N}$). Note that $\mathrm{prob}^{\max}_G(s,\mathbf{c})=1$ whenever $s\in L$ and $\mathbf{c}\ge \vec{0}$.
Thus we do not need to incorporate those points into discretization. 

\begin{definition}\label{def:numerical:early:disc}
Let $\mathbf{c}\in\mathbb{N}_{0}^k$ and $N\in\mathbb{N}$. Define 
\[
\mathrm{Disc}(\mathbf{c},N):=\{\mathbf{d}\in\mathbb{R}^k\mid \vec{0}\le\mathbf{d}\le\mathbf{c}\mbox{ and }N\cdot\mathbf{d}_i\in\mathbb{N}_0\mbox{ for all }1\le i\le k\}\enskip.
\]
The set $\mathsf{D}^\mathbf{c}_N$ of \emph{discretized grid points} is defined as follows:
\[
\mathsf{D}^\mathbf{c}_N:=((L-G)\cup Z_G)\times \mathrm{Disc}(\mathbf{c},N)\enskip.
\]
\end{definition}
Thus $\mathsf{D}^\mathbf{c}_N$ is the set of ``grid points'' that are within the scope of $\mathbf{c}$ and are discretized with
discretization step $\frac{1}{N}$. The following definition presents the approximation scheme on $\mathsf{D}^\mathbf{c}_N$.
Intuitively, the approximation scheme describes how those ``points'' are related.

\begin{definition}
Define $X_G:=((L-G)\cup Z_G)-Y_G$. The \emph{approximation scheme} $\Upsilon^{G}_{\mathbf{c},N}$ on
$\mathsf{D}^\mathbf{c}_N$ consists of the following items:
\begin{itemize}\itemsep1pt\parskip0pt\parsep0pt
\item exactly one \emph{rounding argument} for each element of $\mathsf{D}^\mathbf{c}_N$\enskip;
\item a system of equations for elements in $X_G\times\mathrm{Disc}(\mathbf{c},N)$\enskip;
\item a linear program on $Y_G$ for each $\mathbf{d}\in\mathrm{Disc}(\mathbf{c},N)$ \enskip.
\end{itemize}
\noindent\textbf{Rounding Arguments:} For each element $y\in\mathsf{D}^\mathbf{c}_N$, the rounding argument for $y$ is as follows:
\[
\overline{\mathrm{prob}}_G(y):=\frac{K}{N^2}\mbox{ if }\mathrm{prob}(y)\in \left[\frac{K}{N^2},\frac{K+1}{N^2}\right)\mbox{ for some integer }0\le K\le N^2\enskip.
\]

~\\
\noindent\textbf{Equations:} The system of equations is described as follows. For all 
$((s,a),\mathbf{d})\in\mathsf{D}^\mathbf{c}_N$ with $\mathbf{w}(s,a)\ne\vec{0}$ and 
$\mathbf{d}-\frac{1}{N}\cdot\mathbf{w}(s,a)\ge \vec{0}$, there is a linear equation
\begin{align*}
& \frac{\mathrm{prob}_G((s,a),\mathbf{d})-\overline{\mathrm{prob}}_G((s,a),\mathrm{pre}(\mathbf{d},(s,a)))}{\frac{1}{N}}=\tag{E1}\\
& \quad\sum_{s'\in L}\mathbf{R}(s,a,s')\cdot\left(\overline{\mathrm{prob}}_G(s',\mathrm{pre}(\mathbf{d},(s,a)))-\overline{\mathrm{prob}}_G((s,a),\mathrm{pre}(\mathbf{d},(s,a)))\right)
\end{align*}
where $\mathrm{pre}(\mathbf{d},z):=\mathbf{d}-\frac{1}{N}\cdot\mathbf{w}(z)$. For all  
$((s,a),\mathbf{d})\in\mathsf{D}^\mathbf{c}_N$ with $\mathbf{w}(s,a)\ne\vec{0}$ and 
$\mathbf{d}-\frac{1}{N}\cdot\mathbf{w}(s,a)\not\ge \vec{0}$, there is a linear equation
\begin{equation*}
\mathrm{prob}_G((s,a),\mathbf{d})=0\enskip.\tag{E2}
\end{equation*}
For all $(s,\mathbf{d})\in\mathsf{D}^\mathbf{c}_N$ such that $\mathbf{w}(s,a)\ne\vec{0}$ for all $a\in\mathrm{En}(s)$, there is an equation
\begin{equation*}
\mathrm{prob}_G(s,\mathbf{d})=\max_{a\in\mathrm{En}(s)}\mathrm{prob}_G((s,a),\mathbf{d})\enskip.\tag{E3}
\end{equation*}

~\\
\noindent\textbf{Linear Programs:} For each $\mathbf{d}\in\mathrm{Disc}(\mathbf{c},N)$, the collection 
$\{\mathrm{prob}_G(y,\mathbf{d})\}_{y\in Y_G}$ of values is the unique optimum solution of the linear program as follows:
\begin{itemize}\itemsep1pt \parskip0pt \parsep0pt
\item[] $\min\sum_{y\in Y_G}\mathrm{prob}_G(y,\mathbf{d})$, subject to:
\item   $\mathrm{prob}_G((s,a),\mathbf{d})\ge\sum_{s'\in L}\mathbf{P}(s,a,s')\cdot\mathrm{prob}_G(s',\mathbf{d})$ for all $(s,a)\in Y_G$;
\item   $\mathrm{prob}_G((s,a),\mathbf{d})\le \mathrm{prob}_G(s,\mathbf{d})$ for all $(s,a)\in Y_G$;
\item   $\mathrm{prob}_G(y,\mathbf{d})\in [0,1]$ for all $y\in Y_G$;
\end{itemize}
where the values $\{\mathrm{prob}_G(y,\mathbf{d})\}_{y\in X_G}$ are assumed to be known. 

In all of the statements above, both $\mathrm{prob}_G(s,\mathbf{d})$ and $\overline{\mathrm{prob}}_G(s,\mathbf{d})$ represents $1$ for $s\in G$. 
\end{definition}
Generally, $\mathrm{prob}_G(y,\mathbf{d})$ approximates $\mathrm{prob}^{\max}_G(y,\mathbf{d})$ and 
$\overline{\mathrm{prob}}_G(y,\mathbf{d})$ approximates the same value with a rounding operation. A detailed computational sequence of the 
approximation scheme is described in Algorithm~\ref{algo:approx:early}.

\begin{algorithm}                      
\caption{The Computation of $\Upsilon_{\mathbf{c},N}^G$ (for early schedulers)}          
\label{algo:approx:early}                           
\begin{algorithmic}[1]                    
    \STATE Set all grid points in $\{((s,a),\mathbf{d})\in\mathsf{D}^\mathbf{c}_N\mid\mathbf{d}-\frac{1}{N}\cdot\mathbf{w}(s,a)\not\ge 0\}$ to zero by (E2);
    \STATE Compute all $\mathrm{prob}_G(s,\mathbf{d})$ that can be directly obtained through (E3);
    \STATE Compute all $\mathrm{prob}_G(y,\mathbf{d})$ that can be directly obtained through the linear program;
    \STATE Compute all $\overline{\mathrm{prob}}_G(y,\mathbf{d})$ that can be directly obtained by the rounding argument;
    \STATE Compute all $\mathrm{prob}_G((s,a),\mathbf{d})$ that can be directly obtained through (E1), and back to Step 2. until all grid points in $\mathsf{D}^\mathbf{c}_N$ are computed; 
\end{algorithmic}
\end{algorithm}
In principle, we compute the ``higher'' grid point $\mathrm{prob}_G((s,a),\mathbf{d}+\frac{1}{N}\cdot\mathbf{w}(s,a))$ by 
$\mathrm{prob}_G((s,a),\mathbf{d})$ and (E1), and then update other ``higher'' points through (E3) and the linear program. The rounding 
argument is incorporated to avoid precision explosion caused by linear programming. The following proposition shows that Algorithm~\ref{algo:approx:early} indeed terminates after a finite number of steps. 

\begin{proposition}\label{prop:numerical:early:termination}
Algorithm~\ref{algo:approx:early} terminates after a finite number of steps for all $\mathbf{c}\in\mathbb{N}^k_{0}$ and $N\in\mathbb{N}$.
\end{proposition}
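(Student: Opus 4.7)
The plan is to establish termination via two facts: $\mathsf{D}^\mathbf{c}_N$ is finite, and the approximation scheme admits no cyclic dependency, so each pass of Steps 2--5 in Algorithm~\ref{algo:approx:early} makes strict progress. First I would bound $|\mathsf{D}^\mathbf{c}_N|\le (|L|+|L|\cdot|Act|)\cdot\prod_{i=1}^k(N\cdot\mathbf{c}_i+1)$, which is finite.

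Next I would fix a linear extension $\mathbf{d}^{(1)}\prec\cdots\prec\mathbf{d}^{(M)}$ of the coordinatewise partial order on $\mathrm{Disc}(\mathbf{c},N)$ (for instance, order by $\sum_i\mathbf{d}_i$ with arbitrary tie-breaking, so that $\mathbf{d}^{(i)}<\mathbf{d}^{(j)}$ coordinatewise forces $i<j$). Within each level $\mathbf{d}^{(j)}$, I would further order the grid points so that every $((s,a),\mathbf{d}^{(j)})$ with $(s,a)\in X_G$ precedes every $(s,\mathbf{d}^{(j)})$ with $s\in X_G$, which in turn precedes every grid point of the form $(y,\mathbf{d}^{(j)})$ with $y\in Y_G$. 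This yields a strict linear order $\lhd$ on $\mathsf{D}^\mathbf{c}_N$. I would then argue by induction along $\lhd$ that in each pass of Steps 2--5, at least one previously uncomputed grid point becomes computed: take the $\lhd$-minimum uncomputed grid point $g=(y,\mathbf{d})$ and analyse cases. If $y=(s,a)\in X_G$, then $g$ was already computed in Step 1 by (E2) when $\mathrm{pre}(\mathbf{d},(s,a))\not\ge\vec{0}$; otherwise Step 5 applies (E1), using the rounded values at $\mathrm{pre}(\mathbf{d},(s,a))\prec\mathbf{d}$, which are available by $\lhd$-minimality together with the Step 4 of earlier passes. If $y=s\in X_G$, Step 2 applies (E3) since all $((s,a),\mathbf{d})$ with $a\in\mathrm{En}(s)$ are $\lhd$-smaller than $g$. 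If $y\in Y_G$, Step 3 invokes the LP, whose coefficients involve only $\mathrm{prob}_G(y',\mathbf{d})$ for $y'\in X_G$, each $\lhd$-smaller than $g$ and hence already known.

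The main obstacle is checking that the LP in Step 3 actually admits an optimum, so that Step 3 produces a well-defined output. I would verify this by exhibiting the constant assignment $\mathrm{prob}_G(y,\mathbf{d})=1$ for $y\in Y_G$ as a feasible point: the constraint $\mathrm{prob}_G((s,a),\mathbf{d})\le\mathrm{prob}_G(s,\mathbf{d})$ becomes $1\le 1$, while $\mathrm{prob}_G((s,a),\mathbf{d})\ge\sum_{s'\in L}\mathbf{P}(s,a,s')\cdot\mathrm{prob}_G(s',\mathbf{d})$ holds because the right-hand side is a convex combination of values in $[0,1]$. Since the linear objective $\sum_{y\in Y_G}\mathrm{prob}_G(y,\mathbf{d})$ is bounded below by $0$ on the feasible polyhedron, an optimum exists (uniqueness is a separate correctness question, not needed for termination). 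Combining these observations, each pass of Steps 2--5 strictly enlarges the computed portion of $\mathsf{D}^\mathbf{c}_N$, so the algorithm terminates in at most $|\mathsf{D}^\mathbf{c}_N|$ passes.
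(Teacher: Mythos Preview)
Your proposal is correct and follows essentially the same approach as the paper: both argue by induction along the partial order on $\mathrm{Disc}(\mathbf{c},N)$ given by $\sum_i\mathbf{d}_i$, showing that within each level the values can be obtained in the order (E1)/(E2) $\to$ (E3) $\to$ LP $\to$ rounding. Your version is more explicit---you spell out the within-level ordering of $X_G$ pairs, $X_G$ states, and $Y_G$ elements, and you add a feasibility check for the linear program that the paper omits---but the underlying termination argument is the same well-founded induction.
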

\begin{proof}
Let $\mathbf{c}\ge\vec{0}$ and $N\in\mathbb{N}$. The proposition can be proved through a straightforward induction on 
$\sum_{i=1}^k\mathbf{d}_i$ that both $\mathrm{prob}_G(y,\mathbf{d})$ and $\overline{\mathrm{prob}}_G(y,\mathbf{d})$ can be computed after a finite number of steps for all 
$(y,\mathbf{d})\in\mathsf{D}_N^\mathbf{c}$. The base step where $\mathrm{prob}_G(y,\mathbf{d})$ is computed through (E2) is easy. For the 
inductive step, suppose that for all $(y,\mathbf{d}')\in\mathsf{D}_N^\mathbf{c}$ with 
$\sum_{i=1}^k\mathbf{d}'_i<\sum_{i=1}^k\mathbf{d}_i$, both $\mathrm{prob}_G(y,\mathbf{d})$ and $\overline{\mathrm{prob}}_G(y,\mathbf{d})$
can be computed after a finite number of steps by Algorithm~\ref{algo:approx:early}. Then the inductive step can be sequentially justified 
by (E1), (E3), the linear program and the rounding argument. 
\end{proof}

Below we prove that the approximation scheme really approximates $\mathrm{prob}^{\max}_G$. To ease the notation, we shall use 
$\mathrm{prob}_G(y,\mathbf{d})$ or $\overline{\mathrm{prob}}_G(y,\mathbf{d})$ to denote both the variable at the grid point and the value 
it holds under the approximation scheme. 

\begin{theorem}\label{thm:numerical:early:approxscheme}
Let $\mathbf{c}\in\mathbb{N}^k_{0}$ and $N\in\mathbb{N}$ with $N\ge\mathbf{E}_{\max}$. For each $(y,\mathbf{d})\in\mathsf{D}_N^\mathbf{c}$,
\[
\left|\mathrm{prob}_G(y,\mathbf{d})-\mathrm{prob}_G^{\max}(y,\mathbf{d})\right|\le \left(\frac{2\cdot\mathbf{E}^2_{\max}\cdot\mathbf{w}_{\max}}{N\cdot\mathbf{w}_{\min}}+\frac{1}{N}\right)\cdot\left[\sum_{i=1}^k\mathbf{d}_i\right]+\frac{\mathbf{E}_{\max}}{N}
\]
and 
\begin{align*}
& \left|\overline{\mathrm{prob}}_G(y,\mathbf{d})-\mathrm{prob}_G^{\max}(y,\mathbf{d})\right|\le \\
& \qquad \left(\frac{2\cdot\mathbf{E}^2_{\max}\cdot\mathbf{w}_{\max}}{N\cdot\mathbf{w}_{\min}}+\frac{1}{N}\right)\cdot\left[\sum_{i=1}^k\mathbf{d}_i\right]+\frac{\mathbf{E}_{\max}}{N}+\frac{1}{N^2}\enskip.
\end{align*}
\end{theorem}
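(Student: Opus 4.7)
The plan is to prove both estimates simultaneously by induction on the integer $N\sum_{i=1}^k \mathbf{d}_i$, following the layered computation order of Algorithm~\ref{algo:approx:early}. Since $|\overline{\mathrm{prob}}_G(y,\mathbf{d}) - \mathrm{prob}_G(y,\mathbf{d})| \le 1/N^2$ by definition of the rounding argument, the second (rounded) estimate follows immediately from the first. Write $\Phi(m) := \bigl(\tfrac{2\mathbf{E}^2_{\max}\mathbf{w}_{\max}}{N\mathbf{w}_{\min}}+\tfrac{1}{N}\bigr) m + \tfrac{\mathbf{E}_{\max}}{N}$ for the target bound, $e(y,\mathbf{d}):=|\mathrm{prob}_G(y,\mathbf{d})-\mathrm{prob}^{\max}_G(y,\mathbf{d})|$, and $\overline{e}(y,\mathbf{d}):=|\overline{\mathrm{prob}}_G(y,\mathbf{d})-\mathrm{prob}^{\max}_G(y,\mathbf{d})|$. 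The base case covers points where (E2) applies: some coordinate $\mathbf{d}_i$ with $\mathbf{w}_i(s,a)>0$ satisfies $\mathbf{d}_i<\mathbf{w}_i(s,a)/N$, so the Lipschitz bound of Theorem~\ref{thm:minearlyf} combined with $\mathrm{prob}^{\max}_G((s,a),\mathbf{c}')=0$ whenever $\mathbf{c}'_i<0$ gives $\mathrm{prob}^{\max}_G((s,a),\mathbf{d}) \le (\mathbf{E}_{\max}/\mathbf{w}_{\min})\mathbf{d}_i \le \mathbf{E}_{\max}/N$, matching the constant term of $\Phi$.

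For the main inductive step, consider $((s,a),\mathbf{d})$ to which (E1) applies, and let $\mathrm{pre}:=\mathbf{d}-\tfrac{1}{N}\mathbf{w}(s,a)$. Applying the fundamental theorem of calculus to $\tau\mapsto\mathrm{prob}^{\max}_G((s,a),\mathrm{pre}+\tau\mathbf{w}(s,a))$ together with Theorem~\ref{thm:approx:early:diderivative} yields
\begin{align*}
\mathrm{prob}^{\max}_G((s,a),\mathbf{d}) = {}& \bigl(1-\tfrac{\mathbf{E}(s,a)}{N}\bigr)\mathrm{prob}^{\max}_G((s,a),\mathrm{pre}) \\
& {}+ \tfrac{1}{N}\sum_{s'\in L}\mathbf{R}(s,a,s')\,\mathrm{prob}^{\max}_G(s',\mathrm{pre}) + R,
\end{align*}
where $R$ captures the variation of $\nabla\mathrm{prob}^{\max}_G$ along the segment $[\mathrm{pre},\mathbf{d}]$; the Lipschitz estimate from Theorem~\ref{thm:minearlyf} plugged into the differential formula of Theorem~\ref{thm:approx:early:diderivative} yields $|R|\le \mathbf{E}^2_{\max}\mathbf{w}_{\max}/(\mathbf{w}_{\min}N^2)$. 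Subtracting (E1) and noting $1-\mathbf{E}(s,a)/N\ge 0$ (from $N\ge\mathbf{E}_{\max}$) bounds $e((s,a),\mathbf{d})$ by a convex combination of $\overline{e}((s,a),\mathrm{pre})$ and $\{\overline{e}(s',\mathrm{pre})\}_{s'\in L}$, plus $|R|$. By the inductive hypothesis this is at most $\Phi(\sum\mathbf{d}_i-\tfrac{1}{N}\sum_i\mathbf{w}_i(s,a)) + 1/N^2 + \mathbf{E}^2_{\max}\mathbf{w}_{\max}/(\mathbf{w}_{\min}N^2)$, and a direct arithmetic check exploiting $\sum_i\mathbf{w}_i(s,a)\ge\mathbf{w}_{\min}$ together with the integrality assumption $\mathbf{w}_{\min}\ge 1$ shows this quantity is at most $\Phi(\sum\mathbf{d}_i)$.

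The remaining within-layer cases transfer the bound without additional loss. For (E3), the identity $\mathrm{prob}^{\max}_G(s,\mathbf{d})=\max_{a\in\mathrm{En}(s)}\mathrm{prob}^{\max}_G((s,a),\mathbf{d})$ from Theorem~\ref{thm:minearlyf} combined with $|\max f - \max g|\le\max|f-g|$ suffices. For the $Y_G$ linear program, Proposition~\ref{prop:approx:early:weightzero} identifies $\{\mathrm{prob}^{\max}_G(y,\mathbf{d})\}_{y\in Y_G}$ as the least fixed point of $\mathcal{Y}^\mathrm{e}_{\mathbf{d},G}$, whose $X_G$-inputs are controlled within the same layer by the previous cases; by monotonicity of $\mathcal{Y}^\mathrm{e}_{\mathbf{d},G}$ in those inputs, a uniform $\eta$-perturbation of them moves the least fixed point by at most $\eta$ (since $\mathrm{prob}^{\max}_G\pm\eta\vec{1}$ remain post- and pre-fixed-points of the perturbed operator), and the LP objective together with the uniqueness argument behind Proposition~\ref{prop:approx:early:weightzero} selects precisely this minimal feasible solution on $Y_G$.

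The main obstacle is the arithmetic bookkeeping in the (E1) recurrence: one must verify that the linear-in-$\sum\mathbf{d}_i$ growth of $\Phi$ exactly absorbs, at each of the $O(N\sum\mathbf{d}_i)$ layers, both the one-shot rounding error $1/N^2$ and the $O(1/N^2)$ Taylor remainder, without any multiplicative amplification. The crucial saving is that the coefficients in the recurrence form a genuine convex combination rather than a factor $(1+O(1/N))$, so a telescoping argument replaces the exponential blowup that a naive Gronwall estimate would produce with the desired linear bound; the hypothesis $N\ge\mathbf{E}_{\max}$ is exactly what guarantees the nonnegativity of $1-\mathbf{E}(s,a)/N$ and hence the convex-combination structure.
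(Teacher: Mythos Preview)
Your inductive structure, case split, and the (E1) recurrence analysis match the paper's proof essentially step for step; your use of the fundamental theorem of calculus in place of Lagrange's mean-value theorem even yields a remainder $|R|\le \mathbf{E}^2_{\max}\mathbf{w}_{\max}/(\mathbf{w}_{\min}N^2)$ that is sharper by a factor of~$2$ than the paper's bound, and both versions close the induction under the same~$\Phi$.

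There are, however, two concrete gaps you should repair.

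\textbf{Base case.} From $\mathbf{d}_i<\mathbf{w}_i(s,a)/N$ you conclude $(\mathbf{E}_{\max}/\mathbf{w}_{\min})\,\mathbf{d}_i\le \mathbf{E}_{\max}/N$, but this requires $\mathbf{d}_i\le \mathbf{w}_{\min}/N$, which fails whenever $\mathbf{w}_i(s,a)>\mathbf{w}_{\min}$ (take e.g.\ $\mathbf{w}_{\min}=1$, $\mathbf{w}_i(s,a)=3$, $N=10$, $\mathbf{d}_i=2/10$). Moreover, the Lipschitz bound of Theorem~\ref{thm:minearlyf} is stated only for $\mathrm{prob}^{\mathrm{e},\max}_G(s,\cdot)$ with $s\in L$, not for the extension to $(s,a)\in Z_G$. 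The paper avoids both issues by moving along the direction $\mathbf{w}(s,a)$ a distance $x\le 1/N$ to the boundary and invoking the directional-derivative formula of Theorem~\ref{thm:approx:early:diderivative}, which gives $|\nabla\mathrm{prob}^{\max}_G((s,a),\cdot)|\le \mathbf{E}(s,a)$ directly and hence $\mathrm{prob}^{\max}_G((s,a),\mathbf{d})\le x\cdot\mathbf{E}(s,a)\le \mathbf{E}_{\max}/N$.

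\textbf{Linear-program lower bound.} That $\mathrm{prob}^{\max}_G-\eta\vec{1}$ is a post-fixed-point of the perturbed operator does not, by itself, force it below the \emph{least} fixed point of that operator (post-fixed-points bound the greatest fixed point from below, not the least). The paper instead shows by induction on $n$ that the $n$-th iterates of $\mathcal{Y}_{\mathbf{d},G}$ and of the perturbed operator, both started from $\vec{0}$, stay within $\eta$ of each other, and then lets $n\to\infty$; this gives both the upper and the lower estimate cleanly.
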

\begin{proof}
We proceed by induction on the number of computation steps illustrated by Algorithm~\ref{algo:approx:early}.

~\\
\noindent\textbf{Base Step:} $(y,\mathbf{d})=((s,a),\mathbf{d})$ with $\mathbf{d}-\frac{1}{N}\cdot\mathbf{w}(s,a)\not\ge\vec{0}$\enskip. Then 
we know that $\mathrm{prob}^{\max}_G((s,a),\mathbf{d}-x\cdot\mathbf{w}(s,a))=0$ where $x\in [0, \frac{1}{N}]$ is the largest real number 
such that $\mathbf{d}-x\cdot\mathbf{w}(s,a)\ge\vec{0}$. Then by Theorem~\ref{thm:approx:early:diderivative} and Lagrange's Mean-Value Theorem, 
we obtain that 
\begin{align*}
& \mathrm{prob}^{\max}_G((s,a),\mathbf{d})-\mathrm{prob}^{\max}_G((s,a),\mathbf{d}-x\cdot\mathbf{w}(s,a))=\\
& \qquad x\cdot\nabla\mathrm{prob}_G^{\max}((s,a),\mathbf{d}-x'\cdot\mathbf{w}(s,a))
\end{align*}
for some $x'\in (0,x)$. It follows that $\mathrm{prob}^{\max}_G((s,a),\mathbf{d})\le\frac{1}{N}\cdot\mathbf{E}(s,a)$. Thus we have,
\[
\left|\mathrm{prob}^{\max}_G((s,a),\mathbf{d})-\mathrm{prob}_G((s,a),\mathbf{d})\right|\le\frac{1}{N}\cdot\mathbf{E}_{\max}\enskip.
\]
\noindent\textbf{Inductive Step.} The inductive step can be classified into the following  cases:

~\\
\noindent{Case 1:} $(y,\mathbf{d})=(s,\mathbf{d})$ and $\mathrm{prob}_G(s,\mathbf{d})$ is computed through (E3). Then the result follows 
directly from the fact that 
\[
\left|\mathrm{prob}^{\max}_G(s,\mathbf{d})-\mathrm{prob}_G(s,\mathbf{d})\right|\le\max_{a\in\mathrm{En}(s)}\left|\mathrm{prob}^{\max}_G((s,a),\mathbf{d})-\mathrm{prob}_G((s,a),\mathbf{d})\right|~~.
\]
\noindent{Case 2:} $(y,\mathbf{d})$ is computed through the linear program for $\mathbf{d}$. Note that the linear program indeed computes 
the least fixed-point of $\mathcal{Y}_\mathbf{d,G}$. By induction hypothesis, for all $y'\in X_G$,
\begin{align*}
& \left|\mathrm{prob}_G(y',\mathbf{d})-\mathrm{prob}_G^{\max}(y',\mathbf{d})\right|\le \\
&\qquad \left(\frac{2\cdot\mathbf{E}^2_{\max}\cdot\mathbf{w}_{\max}}{N\cdot\mathbf{w}_{\min}}+\frac{1}{N}\right)\cdot\left[\sum_{i=1}^k\mathbf{d}_i\right]+\frac{\mathbf{E}_{\max}}{N}\enskip.
\end{align*}
Thus by induction on $n$, one can prove that 
\[
\left|\mathcal{Y}_{\mathbf{d},n}(\vec{0})-\mathcal{Y}'_{\mathbf{d},n}(\vec{0})\right|\le\left(\frac{2\cdot\mathbf{E}^2_{\max}\cdot\mathbf{w}_{\max}}{N\cdot\mathbf{w}_{\min}}+\frac{1}{N}\right)\cdot\left[\sum_{i=1}^k\mathbf{d}_i\right]+\frac{\mathbf{E}_{\max}}{N}
\]
where $\mathcal{Y}'_{\mathbf{d}}$ is the high order operator obtained by replacing $\{\mathrm{prob}^{\max}_G(y,\mathbf{c})\}_{y\in X_G}$ 
in the definition of $\mathcal{Y}_{\mathbf{d},G}$ with $\{\mathrm{prob}_G(y,\mathbf{c})\}_{y\in X_G}$, and $\mathcal{Y}_{\mathbf{d},n}$ 
(resp. $\mathcal{Y}'_{\mathbf{d},n}$) is the $n$-th iteration of $\mathcal{Y}_{\mathbf{d},G}$ (resp. $\mathcal{Y}'_{\mathbf{d}}$). It 
follows that 
\[
\left|\mathrm{prob}_G(y,\mathbf{d})-\mathrm{prob}_G^{\max}(y,\mathbf{d})\right|\le \left(\frac{2\cdot\mathbf{E}^2_{\max}\cdot\mathbf{w}_{\max}}{N\cdot\mathbf{w}_{\min}}+\frac{1}{N}\right)\cdot\left[\sum_{i=1}^k\mathbf{d}_i\right]+\frac{\mathbf{E}_{\max}}{N}\enskip.
\]
\noindent{Case 3:} $(y,\mathbf{d})=((s,a),\mathbf{d})$ and $\mathrm{prob}(y,\mathbf{d})$ is computed through (E1). 
By Lagrange's Mean-Value Theorem and Theorem~\ref{thm:approx:early:diderivative}, we have 
\[
\mathrm{prob}^{\max}_G(y,\mathbf{d})-\mathrm{prob}_G^{\max}(y,\mathrm{pre}(\mathbf{d},y))=\frac{1}{N}\cdot\nabla\mathrm{prob}^{\max}_G(y,\mathbf{d}-x\cdot\mathbf{w}(y))
\]
for some $x\in (0,\frac{1}{N})$. By Theorem~\ref{thm:approx:early:diderivative} and Theorem~\ref{thm:minearlyf}, we can obtain that
\begin{align*}
& \mathrm{prob}^{\max}_G((s,a),\mathbf{d})=\mathrm{prob}^{\max}_G((s,a),\mathrm{pre}(\mathbf{d},(s,a)))+\delta+{} \tag{*}\\
& \quad\frac{1}{N}\cdot\sum_{s'\in L}\mathbf{R}(s,a,s')\cdot\left[\mathrm{prob}^{\max}_G(s',\mathrm{pre}(\mathbf{d},y))-\mathrm{prob}^{\max}_G(y,\mathrm{pre}(\mathbf{d},y))\right]
\end{align*}
for some 
$\delta\in [-\frac{2}{N^2}\cdot\frac{\mathbf{E}^2_{\max}\cdot\mathbf{w}_{\max}}{\mathbf{w}_{\min}}, \frac{2}{N^2}\cdot\frac{\mathbf{E}^2_{\max}\cdot\mathbf{w}_{\max}}{\mathbf{w}_{\min}}]$\enskip. 
Rewriting (*) and (E1), we obtain that 
\begin{align*}
& \mathrm{prob}^{\max}_G(y,\mathbf{d})=\frac{1}{N}\cdot\sum_{s'\in L}\mathbf{R}(s,a,s')\cdot\mathrm{prob}^{\max}_G(s',\mathrm{pre}(\mathbf{d},y))+{}\\
& \qquad\left(1-\frac{\mathbf{E}(s,a)}{N}\right)\cdot \mathrm{prob}^{\max}_G(y,\mathrm{pre}(\mathbf{d},y))+\delta
\end{align*}
and 
\begin{align*}
& \mathrm{prob}_G(y,\mathbf{d})=\frac{1}{N}\cdot\sum_{s'\in L}\mathbf{R}(s,a,s')\cdot\overline{\mathrm{prob}}_G(s',\mathrm{pre}(\mathbf{d},y))+{}\\
& \qquad\left(1-\frac{\mathbf{E}(s,a)}{N}\right)\cdot \overline{\mathrm{prob}}_G(y,\mathrm{pre}(\mathbf{d},y))\enskip.
\end{align*}
By induction hypothesis, we have 
\begin{align*}
& \left|\mathrm{prob}^{\max}_G(y,\mathbf{d})-\mathrm{prob}_G(y,\mathbf{d})\right|\le \\
& \qquad\left(\frac{2\cdot\mathbf{E}^2_{\max}\cdot\mathbf{w}_{\max}}{N\cdot\mathbf{w}_{\min}}+\frac{1}{N}\right)\cdot\left[\sum_{i=1}^k\left(\mathrm{pre}(\mathrm{d},y)\right)_i\right]+\delta+\frac{1}{N^2}+\frac{\mathbf{E}_{\max}}{N}
\end{align*}
from which the induction step can be obtained. 

~\\
\noindent{Case 4:} $\overline{\mathrm{prob}}(y,\mathbf{d})$ is computed through rounding. The induction step for this case is straightforward. 
\end{proof}
From Theorem~\ref{thm:numerical:early:approxscheme}, we derive our approximation algorithm for early schedulers as follows.

\begin{corollary}\label{crlly:approx:algorithm:final:early}
There exists an algorithm such that given any $\epsilon>0$, $s\in L$, $G\subseteq L$ and $\mathbf{c}\in\mathbb{N}_{0}^k$, the algorithm 
outputs a $d\in [0,1]$ which satisfies that $\left|d-\mathrm{prob}_G^{\max}(s,\mathbf{c})\right|\le\epsilon$. Moreover, the algorithm runs 
in 
\[
\mathcal{O}((\max\{\mathbf{E}_{\max},\frac{M}{\epsilon}\})^k\cdot(\Pi_{i=1}^k\mathbf{c}_i)\cdot (|\mathcal{M}|+\log\frac{M}{\epsilon})^8) 
\]
time, where 
$M:=(2\cdot\mathbf{E}_{\max}^2\cdot\frac{\mathbf{w}_{\max}}{\mathbf{w}_{\min}}+1)\cdot\left[\sum_{i=1}^k\mathbf{c}_i\right]+\mathbf{E}_{\max}+1$ and $|\mathcal{M}|$ is the size of $\mathcal{M}$\enskip.
\end{corollary}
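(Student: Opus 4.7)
The plan is to extract $N$ directly from the error bound in Theorem~\ref{thm:numerical:early:approxscheme} and then run Algorithm~\ref{algo:approx:early} on the grid $\mathsf{D}^{\mathbf{c}}_N$. Specifically, since for any $(y,\mathbf{d})\in\mathsf{D}^{\mathbf{c}}_N$ with $\mathbf{d}\le\mathbf{c}$ the error satisfies
\[
\left|\overline{\mathrm{prob}}_G(y,\mathbf{d})-\mathrm{prob}^{\max}_G(y,\mathbf{d})\right|\le \left(\frac{2\mathbf{E}^2_{\max}\mathbf{w}_{\max}}{N\mathbf{w}_{\min}}+\frac{1}{N}\right)\sum_{i=1}^k\mathbf{c}_i+\frac{\mathbf{E}_{\max}}{N}+\frac{1}{N^2},
\]
it suffices to pick $N:=\lceil\max\{\mathbf{E}_{\max},M/\epsilon\}\rceil$, because then each of the three summands is bounded by a constant fraction of $\epsilon$. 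With this $N$, the algorithm outputs $d:=\overline{\mathrm{prob}}_G(s,\mathbf{c})$ (or $\mathrm{prob}_G(s,\mathbf{c})$) and correctness is immediate from Theorem~\ref{thm:numerical:early:approxscheme}.

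For the complexity, I first count the grid points. The set $\mathrm{Disc}(\mathbf{c},N)$ has exactly $\prod_{i=1}^k(N\mathbf{c}_i+1)$ elements, so $|\mathsf{D}^{\mathbf{c}}_N|=\mathcal{O}(|\mathcal{M}|\cdot N^k\cdot\prod_{i=1}^k\mathbf{c}_i)$, which with the choice of $N$ above matches the $(\max\{\mathbf{E}_{\max},M/\epsilon\})^k\cdot\prod_i\mathbf{c}_i$ factor in the claim. For each $\mathbf{d}\in\mathrm{Disc}(\mathbf{c},N)$, Algorithm~\ref{algo:approx:early} performs a bounded number of applications of (E1)--(E3) (each an $\mathcal{O}(|\mathcal{M}|)$ arithmetic operation), one rounding step per grid point, and one linear program on the $Y_G$-fragment. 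The LP has $\mathcal{O}(|\mathcal{M}|)$ variables and constraints, and by a standard polynomial-time LP algorithm (e.g.\ interior-point) can be solved in time $\mathcal{O}((|\mathcal{M}|+B)^{\mathcal{O}(1)})$ where $B$ is the bit-size of coefficients; the exponent $8$ in the claim is a conservative bound for such an LP solver.

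The main subtlety, and the reason the rounding step $\overline{\mathrm{prob}}_G$ is baked into the scheme, is bit-length control. Without rounding, the values produced by successive LP solves plus substitution into (E1) would accumulate denominators exponentially in the number of computation steps, which is $\mathcal{O}(N\sum_i\mathbf{c}_i)$, blowing up the running time. By rounding each computed value to the nearest multiple of $1/N^2$ we guarantee that at every stage all coefficients fed into the next LP have bit-length $\mathcal{O}(\log N+\log|\mathcal{M}|)=\mathcal{O}(|\mathcal{M}|+\log(M/\epsilon))$; the extra rounding error is exactly the $1/N^2$ term already absorbed into Theorem~\ref{thm:numerical:early:approxscheme}. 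Multiplying the per-point cost $(|\mathcal{M}|+\log(M/\epsilon))^{\mathcal{O}(1)}$ by the grid size yields the stated bound. The principal obstacle to verify carefully is therefore not correctness (which is handed to us by Theorem~\ref{thm:numerical:early:approxscheme} and Proposition~\ref{prop:numerical:early:termination}) but the tight bit-complexity bookkeeping for the interleaved LP/equation solves, which is precisely what the rounding device is designed to handle.
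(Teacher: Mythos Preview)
Your proposal is correct and follows essentially the same route as the paper: choose $N$ of order $\max\{\mathbf{E}_{\max},M/\epsilon\}$, invoke Theorem~\ref{thm:numerical:early:approxscheme} for the error bound, and multiply the grid size $\mathcal{O}(N^k\prod_i\mathbf{c}_i)$ by the per-point LP cost. Your discussion of bit-length control and the role of the rounding step is more explicit than the paper's (which simply cites the $\mathcal{O}((|\mathcal{M}|+\log N)^8)$ LP bound), but the underlying argument is the same; the only cosmetic differences are your use of $\lceil\cdot\rceil$ versus the paper's $\lfloor\cdot\rfloor+1$ and that you omit the trivial case $s\in G$.
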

\begin{proof}
The algorithm is a simple application of Theorem~\ref{thm:numerical:early:approxscheme}. If $s\in G$, the algorithm just returns $1$. 
Otherwise, the algorithm just calls Algorithm~\ref{algo:approx:early} with 
$N:=\lfloor\max\{\mathbf{E}_{\max},\frac{M}{\epsilon}\}\rfloor+1$ and set $d=\mathrm{prob}_G(s,\mathbf{c})$; 
By Theorem~\ref{thm:numerical:early:approxscheme}, we directly obtain that 
$\left|d-\mathrm{prob}_G^{\max}(s,\mathbf{c})\right|\le M\cdot\frac{1}{N}$\enskip. For each 
$\mathbf{d}\in\mathrm{Disc}(\mathbf{c},N)$, the total computation of $\{\mathrm{prob}_G(y,\mathbf{d})\}_{y\in X_G\cup Y_G}$ and 
$\{\overline{\mathrm{prob}}_G(y,\mathbf{d})\}_{y\in X_G\cup Y_G}$ takes $\mathcal{O}((|\mathcal{M}|+\log\frac{M}{\epsilon})^8)$ time since the most 
time consuming part is the linear program which takes $\mathcal{O}((|\mathcal{M}|+\log N)^8)$ time (cf. \cite{Schrijver:1986:TLI:17634}). Thus the total 
running time of the 
algorithm is $\mathcal{O}((\max\{\mathbf{E}_{\max},\frac{M}{\epsilon}\})^k\cdot(\Pi_{i=1}^k\mathbf{c}_i)\cdot (|\mathcal{M}|+\log\frac{M}{\epsilon})^8)$ since the size of 
$\mathrm{Disc}(\mathbf{c},N)$ is $\mathcal{O}(N^k\cdot(\Pi_{i=1}^k\mathbf{c}_i))$\enskip. 
\end{proof}

\subsection{Late Schedulers}

The development of a numerical approximation algorithm for late schedulers is much the same as the one for early schedulers. We base our
approximation scheme on Theorem~\ref{thm:approx:late:diderivative} and Proposition~\ref{prop:approx:late:weightzero}. 
In the following we fix a set $G\subseteq L$.  We will omit `$\mathrm{l}$' on the superscript of 
$\mathrm{prob}_G^{\mathrm{l},\max}$, $Y_G^\mathrm{l}$ and $\mathcal{Y}_{c,G}^\mathrm{l}$, etc.

Below we illustrate the discretization and the approximation scheme for late schedulers. Note that
$\mathrm{prob}^{\max}_G(s,\mathbf{c})=1$ whenever $s\in L$ and $\mathbf{c}\ge\vec{0}$. 
Thus we do not need to incorporate those points into the discretization. 

\begin{definition}
Let $\mathbf{c}\ge\vec{0}$ and $N\in\mathbb{N}$. 
The set $\mathsf{D}^\mathbf{c}_N$ of \emph{discretized grid points} is defined as follows:
\[
\mathsf{D}^\mathbf{c}_N:=(L-G)\times \mathrm{Disc}(\mathbf{c},N)\enskip,
\]
where $\mathrm{Disc}(\mathbf{c},N)$ is defined in Definition~\ref{def:numerical:early:disc}.
\end{definition}
The following definition presents the approximation scheme on $\mathsf{D}^\mathbf{c}_N$.

\begin{definition}
Define $X_G:=(L-G)-Y_G$. The \emph{approximation scheme} $\Upsilon^{G}_{\mathbf{c},N}$ on
$\mathsf{D}^\mathbf{c}_N$ consists of the following items:
\begin{itemize}\itemsep1pt\parskip0pt\parsep0pt
\item exactly one \emph{rounding argument} for each element of $\mathsf{D}^\mathbf{c}_N$\enskip;
\item a system of equations for elements in $X_G\times\mathrm{Disc}(\mathbf{c},N)$\enskip;
\item a linear program on $Y_G$ for each $\mathbf{d}\in\mathrm{Disc}(\mathbf{c},N)$ \enskip.
\end{itemize}
\noindent\textbf{Rounding Arguments:} For each element $y\in\mathsf{D}^\mathbf{c}_N$, the rounding argument for $y$ is as follows:
\[
\overline{\mathrm{prob}}_G(y):=\frac{K}{N^2}\mbox{ if }\mathrm{prob}_G(y)\in \left[\frac{K}{N^2},\frac{K+1}{N^2}\right)\mbox{ for some integer }0\le K\le N^2\enskip.
\]

~\\
\noindent\textbf{Equations:} The system of equations is described as follows. For all 
$(s,\mathbf{d})\in\mathsf{D}^\mathbf{c}_N$ with $\mathbf{w}(s)\ne\vec{0}$ and 
$\mathbf{d}-\frac{1}{N}\cdot\mathbf{w}(s)\ge\vec{0}$, there is a linear equation
\begin{align*}
& \frac{\mathrm{prob}_G(s,\mathbf{d})-\overline{\mathrm{prob}}_G(s,\mathrm{pre}(\mathbf{d},s))}{\frac{1}{N}}=\tag{E4}\\
& \quad\max_{a\in\mathrm{En}(s)}\sum_{s'\in L}\mathbf{R}(s,a,s')\cdot\left(\overline{\mathrm{prob}}_G(s',\mathrm{pre}(\mathbf{d},s))-\overline{\mathrm{prob}}_G(s,\mathrm{pre}(\mathbf{d},s))\right)
\end{align*}
where $\mathrm{pre}(\mathbf{d},s):=\mathbf{d}-\frac{1}{N}\cdot\mathbf{w}(s)$. For all  
$(s,\mathbf{d})\in\mathsf{D}^\mathbf{c}_N$ with $\mathbf{w}(s)\ne\vec{0}$ and 
$\mathbf{d}-\frac{1}{N}\cdot\mathbf{w}(s)\not\ge \vec{0}$, there is a linear equation
\begin{equation*}
\mathrm{prob}_G(s,\mathbf{d})=0\enskip.\tag{E5}
\end{equation*}

\noindent\textbf{Linear Programs:} For each $\mathbf{d}\in\mathrm{Disc}(\mathbf{c},N)$, the collection 
$\{\mathrm{prob}_G(s,\mathbf{d})\}_{s\in Y_G}$ of values is the unique optimum solution of the linear program as follows:
\begin{itemize}\itemsep1pt \parskip0pt \parsep0pt
\item[] $\min\sum_{s\in Y_G}\mathrm{prob}_G(s,\mathbf{d})$, subject to:
\item   $\mathrm{prob}_G(s,\mathbf{d})\ge\sum_{s'\in L}\mathbf{P}(s,a,s')\cdot\mathrm{prob}_G(s',\mathbf{d})$ for all $s\in Y_G$ and $a\in\mathrm{En}(s)$;
\item $\mathrm{prob}_G(s,\mathbf{d})\in [0,1]$ for all $s\in Y_G$;
\end{itemize}
where the values $\{\mathrm{prob}_G(s,\mathbf{d})\}_{s\in X_G}$ are assumed to be known. 

In all of the statements above, both $\mathrm{prob}_G(s,\mathbf{d})$ and $\overline{\mathrm{prob}}_G(s,\mathbf{d})$ represents $1$ for $s\in G$. 
\end{definition}
Similar to the case of early schedulers, $\mathrm{prob}_G(s,\mathbf{d})$ approximates $\mathrm{prob}^{\max}_G(s,\mathbf{d})$ and 
$\overline{\mathrm{prob}}_G(s,\mathbf{d})$ approximates the same value with a rounding operation. The detailed  computation  
of the approximation scheme is described in Algorithm~\ref{algo:approx:late}.

By a proof similar to the one of Proposition~\ref{prop:numerical:early:termination}, we can obtain the following proposition.

\begin{proposition}\label{prop:numerical:late:termination}
Algorithm~\ref{algo:approx:late} terminates after a finite number of steps for all $\mathbf{c}\in\mathbb{N}^k_{0}$ and $N\in\mathbb{N}$.
\end{proposition}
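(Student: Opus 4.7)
The plan is to mirror the inductive argument used for Proposition~\ref{prop:numerical:early:termination}, while taking advantage of the fact that the late-scheduler approximation scheme is structurally simpler: the $(s,a)$-indexed variables and equations (E1)--(E3) are replaced by a single pair of equations (E4)--(E5) indexed only by states, and there is no outer ``max'' step needed to combine action-level values into state-level values. Concretely, I will induct on the nonnegative integer $\sigma(\mathbf{d}):=\sum_{i=1}^k N\cdot\mathbf{d}_i$ and show that, for every $\mathbf{d}\in\mathrm{Disc}(\mathbf{c},N)$, Algorithm~\ref{algo:approx:late} computes both $\mathrm{prob}_G(s,\mathbf{d})$ and $\overline{\mathrm{prob}}_G(s,\mathbf{d})$ for all $s\in L-G$ in finitely many steps.

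For the base case ($\sigma(\mathbf{d})=0$, i.e.\ $\mathbf{d}=\vec{0}$) and more generally whenever $\mathbf{d}-\frac{1}{N}\mathbf{w}(s)\not\ge\vec{0}$ for every $s\in X_G$, equation (E5) immediately assigns the value $0$ to each $\mathrm{prob}_G(s,\mathbf{d})$ with $s\in X_G$. The linear program at $\mathbf{d}$ then determines $\{\mathrm{prob}_G(s,\mathbf{d})\}_{s\in Y_G}$ as its unique optimum, and the rounding argument produces $\overline{\mathrm{prob}}_G(\cdot,\mathbf{d})$. This is clearly a finite amount of work.

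For the inductive step, assume that for every $\mathbf{d}'\in\mathrm{Disc}(\mathbf{c},N)$ with $\sigma(\mathbf{d}')<\sigma(\mathbf{d})$, all values $\mathrm{prob}_G(s,\mathbf{d}')$ and $\overline{\mathrm{prob}}_G(s,\mathbf{d}')$ have already been computed. Then at $\mathbf{d}$ I would proceed sequentially, following the four-step order of Algorithm~\ref{algo:approx:late}: first apply (E5) to the states $s\in X_G$ with $\mathbf{d}-\frac{1}{N}\mathbf{w}(s)\not\ge\vec{0}$; next apply (E4) to the remaining $s\in X_G$, which only refers to rounded values at $\mathrm{pre}(\mathbf{d},s)$ (whose $\sigma$ is strictly smaller, hence already available by the induction hypothesis); then solve the linear program for $\{\mathrm{prob}_G(s,\mathbf{d})\}_{s\in Y_G}$ using the now-known $X_G$ values at $\mathbf{d}$ as constants; finally apply the rounding argument to obtain $\overline{\mathrm{prob}}_G(\cdot,\mathbf{d})$. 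Since $\mathrm{Disc}(\mathbf{c},N)$ is finite and every stage completes in finitely many steps (the linear program runs in polynomial time), the algorithm terminates.

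The only point requiring a moment of care is that the dependency structure at a single $\mathbf{d}$ is acyclic: (E4) and (E5) reference only values at strictly smaller grid points, and the linear program for $Y_G$ at $\mathbf{d}$ sees the $X_G$ values at the same $\mathbf{d}$ as given constants, so no self-referential loop arises among the $Y_G$ variables beyond what the linear program's unique optimum resolves. I do not anticipate a genuine obstacle — the reasoning is essentially identical to the early-scheduler case and is in fact cleaner, since the auxiliary equation (E3) and the $(s,a)$-level variables are absent here.
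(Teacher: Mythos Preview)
Your proposal is correct and takes essentially the same approach as the paper: the paper simply states that the proof is similar to that of Proposition~\ref{prop:numerical:early:termination}, which is an induction on $\sum_{i=1}^k\mathbf{d}_i$ followed by sequentially applying the equations, the linear program, and the rounding argument. Your choice of the integer-valued induction variable $\sigma(\mathbf{d})=\sum_i N\mathbf{d}_i$ is a harmless (and slightly cleaner) reformulation of the same quantity, and your argument is in fact more detailed than what the paper provides.
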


\begin{algorithm}                      
\caption{The Computation of $\Upsilon_{\mathbf{c},N}^G$ (for late schedulers)}          
\label{algo:approx:late}                           
\begin{algorithmic}[1]                    
    \STATE Set all grid points in $\{(s,\mathbf{d})\in\mathsf{D}^\mathbf{c}_N\mid\mathbf{d}-\frac{1}{N}\cdot\mathbf{w}(s)\not\ge \vec{0}\}$ to zero by (E5);
    \STATE Compute all $\mathrm{prob}_G(s,\mathbf{d})$ that can be directly obtained through the linear program;
    \STATE Compute all $\overline{\mathrm{prob}}_G(s,\mathbf{d})$ that can be directly obtained by the rounding argument;
    \STATE Compute all $\mathrm{prob}_G((s,a),\mathbf{d})$ that can be directly obtained through (E4), and back to Step 2. until all grid points in $\mathsf{D}^\mathbf{c}_N$ are computed; 
\end{algorithmic}
\end{algorithm}
Below we prove that the approximation scheme really approximates $\mathrm{prob}^{\max}_G$. To ease the notation, we shall use 
$\mathrm{prob}_G(s,\mathbf{d})$ or $\overline{\mathrm{prob}}_G(s,\mathbf{d})$ to denote both the variable at the grid point and the value it holds under the approximation scheme. 

\begin{theorem}\label{thm:numerical:late:algorithm}
Let $\mathbf{c}\in\mathbb{N}^k_{0}$ and $N\in\mathbb{N}$ with $N\ge\mathbf{E}_{\max}$. For each $(s,\mathbf{d})\in\mathsf{D}_N^\mathbf{c}$,
\[
\left|\mathrm{prob}_G(s,\mathbf{d})-\mathrm{prob}_G^{\max}(s,\mathbf{d})\right|\le \left(\frac{2\cdot\mathbf{E}^2_{\max}\cdot\mathbf{w}_{\max}}{N\cdot\mathbf{w}_{\min}}+\frac{1}{N}\right)\cdot\left[\sum_{i=1}^k\mathbf{d}_i\right]+\frac{\mathbf{E}_{\max}}{N}
\]
and 
\begin{align*}
& \left|\overline{\mathrm{prob}}_G(s,\mathbf{d})-\mathrm{prob}_G^{\max}(s,\mathbf{d})\right|\le \\ 
& \qquad\left(\frac{2\cdot\mathbf{E}^2_{\max}\cdot\mathbf{w}_{\max}}{N\cdot\mathbf{w}_{\min}}+\frac{1}{N}\right)\cdot\left[\sum_{i=1}^k\mathbf{d}_i\right]+\frac{\mathbf{E}_{\max}}{N}+\frac{1}{N^2}
\end{align*}
\end{theorem}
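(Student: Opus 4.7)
The plan is to mirror the strategy used for the early-scheduler version in Theorem~\ref{thm:numerical:early:approxscheme}, proceeding by induction on the order in which the grid points $(s,\mathbf{d})\in\mathsf{D}^{\mathbf{c}}_{N}$ are produced by Algorithm~\ref{algo:approx:late}. The induction will carry both bounds (the one for $\mathrm{prob}_G$ and the one for $\overline{\mathrm{prob}}_G$) simultaneously, with the $\frac{1}{N^{2}}$ gap between them accounted for by the rounding step. Throughout, I would abbreviate $e_{N}:=\left(\frac{2\cdot\mathbf{E}^{2}_{\max}\cdot\mathbf{w}_{\max}}{N\cdot\mathbf{w}_{\min}}+\frac{1}{N}\right)\cdot\bigl[\sum_{i=1}^{k}\mathbf{d}_{i}\bigr]+\frac{\mathbf{E}_{\max}}{N}$.

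For the base step, which corresponds to pairs $(s,\mathbf{d})$ set to $0$ via (E5), I would argue exactly as in Case 0 of the early proof: there exists $x\in[0,1/N]$ with $\mathbf{d}-x\cdot\mathbf{w}(s)\ge\vec{0}$ and $\mathrm{prob}^{\max}_{G}(s,\mathbf{d}-x\cdot\mathbf{w}(s))=0$, whence by Lagrange's mean-value theorem applied along the direction $\mathbf{w}(s)$ together with the derivative bound implicit in Theorem~\ref{thm:approx:late:diderivative} (which yields $|\nabla\mathrm{prob}^{\max}_{G}(s,\cdot)|\le \mathbf{E}(s)\le\mathbf{E}_{\max}$), we obtain $\mathrm{prob}^{\max}_{G}(s,\mathbf{d})\le\mathbf{E}_{\max}/N$.

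For the inductive step I would split into the three cases corresponding to (E4), the linear program, and rounding. The linear-program case is essentially identical to Case 2 of the early proof, invoking Proposition~\ref{prop:approx:late:weightzero} in place of Proposition~\ref{prop:approx:early:weightzero}: the values $\{\mathrm{prob}_{G}(s,\mathbf{d})\}_{s\in Y_{G}}$ coincide with the least fixed point of the perturbed operator $\mathcal{Y}'_{\mathbf{d}}$, and since $\mathcal{Y}^{\mathrm{l}}_{\mathbf{d},G}$ and $\mathcal{Y}'_{\mathbf{d}}$ differ only through the values on $X_{G}$ (which by induction are within $e_{N}$ of their true values), a straightforward induction on iteration count (using that $\mathcal{Y}^{\mathrm{l}}_{\mathbf{d},G}$ is non-expansive in the $\ell_{\infty}$ norm because each row of $\mathbf{P}$ sums to at most one) shows that the two least fixed points also differ by at most $e_{N}$. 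The rounding case contributes a clean additional $\frac{1}{N^{2}}$.

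The heart of the argument, as in the early proof, is the case where $\mathrm{prob}_{G}(s,\mathbf{d})$ is computed by (E4). Here I would use Lagrange's mean-value theorem together with the second-order estimate derived from Theorem~\ref{thm:approx:late:diderivative} and the Lipschitz bound of Theorem~\ref{thm:minlatef} to write
\begin{align*}
\mathrm{prob}^{\max}_{G}(s,\mathbf{d}) & = \mathrm{prob}^{\max}_{G}(s,\mathrm{pre}(\mathbf{d},s))+\delta \\
& \quad +\tfrac{1}{N}\max_{a\in\mathrm{En}(s)}\sum_{s'\in L}\mathbf{R}(s,a,s')\cdot\bigl(\mathrm{prob}^{\max}_{G}(s',\mathrm{pre}(\mathbf{d},s))-\mathrm{prob}^{\max}_{G}(s,\mathrm{pre}(\mathbf{d},s))\bigr)
\end{align*}
for some $|\delta|\le\frac{2\mathbf{E}^{2}_{\max}\mathbf{w}_{\max}}{N^{2}\mathbf{w}_{\min}}$, exactly as in equation~(*) of the early proof but using the late-scheduler derivative. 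The crucial extra subtlety relative to the early proof is that the maximum over $a$ sits inside the formula on both sides of the comparison; I would absorb this by the elementary inequality $|\max_{a}f(a)-\max_{a}g(a)|\le\max_{a}|f(a)-g(a)|$, so that the max-over-$a$ in (E4) versus in the displayed identity contributes only via the inductive error on the summands. Rearranging both (E4) and the displayed identity into affine combinations with weights $\tfrac{\mathbf{R}(s,a,s')}{N}$ and $1-\tfrac{\mathbf{E}(s,a)}{N}$ (all non-negative thanks to $N\ge\mathbf{E}_{\max}$), the induction hypothesis applied at $\mathrm{pre}(\mathbf{d},s)$, together with the $\delta$ and the $\frac{1}{N^{2}}$ slack from rounding, gives exactly the claimed bound $e_{N}$ for $\mathrm{prob}_{G}(s,\mathbf{d})$, after which the rounding increment yields the $\overline{\mathrm{prob}}_{G}$ bound. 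The step I expect to require the most care is verifying that the $\ell_{\infty}$ non-expansiveness of $\max_{a}(\cdot)$ together with the convex-combination structure propagates the inductive error without blow-up, since the max in the late-scheduler recursion interacts with both terms on the right-hand side of (E4).
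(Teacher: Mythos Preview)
Your proposal is correct and follows essentially the same route as the paper's own proof: induction on the computation order of Algorithm~\ref{algo:approx:late}, with the base step handled via (E5) and Lagrange's mean-value theorem, and the inductive step split into the linear-program case (handled by non-expansiveness of $\mathcal{Y}^{\mathrm{l}}_{\mathbf{d},G}$), the (E4) case (handled by the derivative identity of Theorem~\ref{thm:approx:late:diderivative} plus the Lipschitz bound of Theorem~\ref{thm:minlatef} to control $\delta$, then a convex-combination rewriting), and the rounding case. Your explicit use of $|\max_a f(a)-\max_a g(a)|\le\max_a|f(a)-g(a)|$ is exactly what the paper's rewriting of (**) and (E4) amounts to; note only that in the late setting $\mathbf{E}(s,a)=\mathbf{E}(s)$ by local uniformity, which is why the $(1-\mathbf{E}(s)/N)$ term factors cleanly out of the maximum.
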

\begin{proof}
We proceed by induction on the number of computation steps illustrated by Algorithm~\ref{algo:approx:late}.

~\\
\noindent\textbf{Base Step:} $(s,\mathbf{d})$ satisfies that $\mathbf{d}-\frac{1}{N}\cdot\mathbf{w}(s)\not\ge \vec{0}$\enskip. Then we know 
that $\mathrm{prob}^{\max}_G(s,\mathbf{d}-x\cdot\mathbf{w}(s))=0$ where $x\in [0, \frac{1}{N}]$ is the largest real number such that 
$\mathbf{d}-x\cdot\mathbf{w}(s)\ge\vec{0}$. Then by Theorem~\ref{thm:approx:late:diderivative} and Lagrange's Mean-Value Theorem, we obtain that 
\begin{align*}
& \mathrm{prob}^{\max}_G(s,\mathbf{d})-\mathrm{prob}^{\max}_G(s,\mathbf{d}-x\cdot\mathbf{w}(s))=\\
& \qquad x\cdot\nabla\mathrm{prob}_G^{\max}(s,\mathbf{d}-x'\cdot\mathbf{w}(s))
\end{align*}
for some $x'\in (0,x)$. It follows that $\mathrm{prob}^{\max}_G(s,\mathbf{d})\le\frac{1}{N}\cdot\mathbf{E}(s)$. Thus we have,
\[
\left|\mathrm{prob}^{\max}_G(s,\mathbf{d})-\mathrm{prob}_G(s,\mathbf{d})\right|\le\frac{1}{N}\cdot\mathbf{E}_{\max}\enskip.
\]
\noindent\textbf{Inductive Step.} The inductive step can be classified into the following  cases:

~\\
\noindent{Case 1:} $(s,\mathbf{d})$ is computed through the linear program for $\mathbf{d}$. Note that the linear program indeed computes 
the least fixed point of $\mathcal{Y}_\mathbf{d}$. By induction hypothesis, for all $y'\in X_G$,
\[
\left|\mathrm{prob}_G(y',\mathbf{d})-\mathrm{prob}_G^{\max}(y',\mathbf{d})\right|\le \left(\frac{2\cdot\mathbf{E}^2_{\max}\cdot\mathbf{w}_{\max}}{N\cdot\mathbf{w}_{\min}}+\frac{1}{N}\right)\cdot\left[\sum_{i=1}^k\mathbf{d}_i\right]+\frac{\mathbf{E}_{\max}}{N}\enskip.
\]
Thus by induction on $n$, one can prove that 
\[
\left|\mathcal{Y}_{\mathbf{d},n}(\vec{0})-\mathcal{Y}'_{\mathbf{d},n}(\vec{0})\right|\le \left(\frac{2\cdot\mathbf{E}^2_{\max}\cdot\mathbf{w}_{\max}}{N\cdot\mathbf{w}_{\min}}+\frac{1}{N}\right)\cdot\left[\sum_{i=1}^k\mathbf{d}_i\right]+\frac{\mathbf{E}_{\max}}{N}
\]
where $\mathcal{Y}'_{\mathbf{d}}$ is the high-order operator obtained by replacing $\{\mathrm{prob}^{\max}_G(y,\mathbf{c})\}_{y\in X_G}$ 
in the definition of $\mathcal{Y}_{\mathbf{d},G}$ with $\{\mathrm{prob}_G(y,\mathbf{c})\}_{y\in X_G}$, and $\mathcal{Y}_{\mathbf{d},n}$ 
(resp. $\mathcal{Y}'_{\mathbf{d},n}$) is the $n$-th iteration of $\mathcal{Y}_{\mathbf{d},G}$ (resp. $\mathcal{Y}'_{\mathbf{d}}$). It 
follows that 
\[
\left|\mathrm{prob}_G(y,\mathbf{d})-\mathrm{prob}_G^{\max}(y,\mathbf{d})\right|\le \left(\frac{2\cdot\mathbf{E}^2_{\max}\cdot\mathbf{w}_{\max}}{N\cdot\mathbf{w}_{\min}}+\frac{1}{N}\right)\cdot\left[\sum_{i=1}^k\mathbf{d}_i\right]+\frac{\mathbf{E}_{\max}}{N}\enskip.
\]
\noindent{Case 2:} $\mathrm{prob}_G(s,\mathbf{d})$ is computed through (E4). 
By Lagrange's Mean-Value Theorem and Theorem~\ref{thm:approx:early:diderivative}, we have 
\[
\mathrm{prob}^{\max}_G(s,\mathbf{d})-\mathrm{prob}_G^{\max}(s,\mathrm{pre}(\mathbf{d},y))=\frac{1}{N}\cdot\nabla\mathrm{prob}^{\max}_G(s,\mathbf{d}-x\cdot\mathbf{w}(s))
\]
for some $x\in (0,\frac{1}{N})$. By Theorem~\ref{thm:approx:late:diderivative} and Theorem~\ref{thm:minlatef}, we can obtain that
\begin{align*}
& \mathrm{prob}^{\max}_G(s,\mathbf{d})=\mathrm{prob}^{\max}_G(s,\mathrm{pre}(\mathbf{d},s))+\delta+{} \tag{**}\\
& \quad\frac{1}{N}\cdot\max_{a\in\mathrm{En}(s)}\sum_{s'\in L}\mathbf{R}(s,a,s')\cdot\left[\mathrm{prob}^{\max}_G(s',\mathrm{pre}(\mathbf{d},s))-\mathrm{prob}^{\max}_G(s,\mathrm{pre}(\mathbf{d},s))\right]
\end{align*}
for some $\delta\in [-\frac{2\cdot\mathbf{E}^2_{\max}\cdot\mathbf{w}_{\max}}{N^2\cdot\mathbf{w}_{\min}}, \frac{2\cdot\mathbf{E}^2_{\max}\cdot\mathbf{w}_{\max}}{N^2\cdot\mathbf{w}_{\min}}]$\enskip. Rewriting (**) and (E4), we obtain 
\begin{align*}
& \mathrm{prob}^{\max}_G(s,\mathbf{d})=\frac{1}{N}\cdot\max_{a\in\mathrm{En}(s)}\left[\sum_{s'\in L}\mathbf{R}(s,a,s')\cdot\mathrm{prob}^{\max}_G(s',\mathrm{pre}(\mathbf{d},s))\right]+{}\\
& \qquad\left(1-\frac{\mathbf{E}(s)}{N}\right)\cdot \mathrm{prob}^{\max}_G(y,\mathrm{pre}(\mathbf{d},y))+\delta
\end{align*}
and 
\begin{align*}
& \mathrm{prob}_G(s,\mathbf{d})=\frac{1}{N}\cdot\max_{a\in\mathrm{En}(s)}\sum_{s'\in L}\mathbf{R}(s,a,s')\cdot\overline{\mathrm{prob}}_G(s',\mathrm{pre}(\mathbf{d},s))+{}\\
& \qquad\left(1-\frac{\mathbf{E}(s)}{N}\right)\cdot \overline{\mathrm{prob}}_G(s,\mathrm{pre}(\mathbf{d},s))\enskip.
\end{align*}
By induction hypothesis, we have 
\begin{align*}
& \left|\mathrm{prob}^{\max}_G(s,\mathbf{d})-\mathrm{prob}_G(s,\mathbf{d})\right|\le \\ 
& \qquad\left(\frac{2\cdot\mathbf{E}^2_{\max}\cdot\mathbf{w}_{\max}}{N\cdot\mathbf{w}_{\min}}+\frac{1}{N}\right)\cdot\left[\sum_{i=1}^k\left(\mathrm{pre}(\mathbf{d},s)\right)_i\right]+\frac{\mathbf{E}_{\max}}{N}+\delta+\frac{1}{N^2}
\end{align*}
from which the induction step can be obtained. 

~\\
\noindent{Case 3:} $\overline{\mathrm{prob}}(y,\mathbf{d})$ is computed through rounding. The induction step for this case is straightforward. 
\end{proof}

\begin{corollary}
There exists an algorithm such that given any $\epsilon>0$, $s\in L$, $G\subseteq L$ and $\mathbf{c}\in\mathbb{N}_{0}^k$, the algorithm 
outputs a $d\in [0,1]$ which satisfies that $\left|d-\mathrm{prob}_G^{\max}(s,\mathbf{c})\right|\le\epsilon$. Moreover, the algorithm runs 
in 
\[
\mathcal{O}((\max\{\mathbf{E}_{\max},\frac{M}{\epsilon}\})^k\cdot(\Pi_{i=1}^k\mathbf{c}_i)\cdot (|\mathcal{M}|+\log\frac{M}{\epsilon})^8) 
\]
time, where 
$M$ is defined as in Corollary~\ref{crlly:approx:algorithm:final:early}\enskip.
\end{corollary}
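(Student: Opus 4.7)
The plan is to mirror the proof of Corollary~\ref{crlly:approx:algorithm:final:early} almost verbatim, substituting the late-scheduler counterparts at every step. First I would dispatch the trivial case $s\in G$ by returning $1$ (this is correct since $\mathrm{prob}^{\mathrm{l},\max}_G(s,\mathbf{c})=1$ whenever $s\in G$ and $\mathbf{c}\ge\vec{0}$). Otherwise, I would set $N:=\lfloor\max\{\mathbf{E}_{\max},M/\epsilon\}\rfloor+1$, invoke Algorithm~\ref{algo:approx:late} on input $(\mathbf{c},N)$, and return $d:=\mathrm{prob}_G(s,\mathbf{c})$ from the computed grid. Termination in finite time is guaranteed by Proposition~\ref{prop:numerical:late:termination}.

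The correctness bound is immediate from Theorem~\ref{thm:numerical:late:algorithm}: the approximation error at $(s,\mathbf{c})$ is bounded by
\[
\left(\frac{2\cdot\mathbf{E}_{\max}^2\cdot\mathbf{w}_{\max}}{N\cdot\mathbf{w}_{\min}}+\frac{1}{N}\right)\cdot\left[\sum_{i=1}^k\mathbf{c}_i\right]+\frac{\mathbf{E}_{\max}}{N}\le\frac{M}{N}\le\epsilon
\]
by the choice of $N$ and the definition of $M$. The condition $N\ge\mathbf{E}_{\max}$ required by the theorem is also satisfied because of the $\max$ in the definition of $N$.

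For the running time, I would count the size of $\mathsf{D}^{\mathbf{c}}_N$, which is $\mathcal{O}(|L|\cdot N^k\cdot\Pi_{i=1}^k\mathbf{c}_i)$. At each discretization level $\mathbf{d}\in\mathrm{Disc}(\mathbf{c},N)$, the dominant cost is the single linear program on $Y_G$ (its constraints are parameterized only by already-computed values on $X_G$ and on the preceding grid points), which can be solved in $\mathcal{O}((|\mathcal{M}|+\log N)^8)$ time by standard polynomial-time LP algorithms \cite{Schrijver:1986:TLI:17634}. Equation (E4) and the rounding operation each contribute only polynomially-logarithmic additional work per grid point. Since $\log N=\mathcal{O}(\log(M/\epsilon))$ and the number of LPs solved is bounded by $|\mathrm{Disc}(\mathbf{c},N)|=\mathcal{O}(N^k\cdot\Pi_{i=1}^k\mathbf{c}_i)$, the total running time is $\mathcal{O}((\max\{\mathbf{E}_{\max},M/\epsilon\})^k\cdot(\Pi_{i=1}^k\mathbf{c}_i)\cdot (|\mathcal{M}|+\log\frac{M}{\epsilon})^8)$.

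I do not anticipate any real obstacle: every ingredient—the fixed-point discretization, the error bound, the termination, and the LP complexity—has been set up by Theorem~\ref{thm:numerical:late:algorithm}, Algorithm~\ref{algo:approx:late} and Proposition~\ref{prop:numerical:late:termination}. The only minor point of care is verifying that the late-scheduler approximation scheme has fewer intermediate ``state--action'' grid points than in the early-scheduler case (since the max over actions is absorbed into (E4) rather than appearing as a separate equation (E3)), so the LP at each level is in fact slightly smaller; this only improves the complexity estimate and is absorbed into the $\mathcal{O}(\cdot)$.
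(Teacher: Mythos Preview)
Your proposal is correct and follows essentially the same argument as the paper: handle $s\in G$ trivially, otherwise run Algorithm~\ref{algo:approx:late} with $N:=\lfloor\max\{\mathbf{E}_{\max},M/\epsilon\}\rfloor+1$, invoke Theorem~\ref{thm:numerical:late:algorithm} for the error bound $M/N\le\epsilon$, and derive the running time from the LP cost per discretization level times $|\mathrm{Disc}(\mathbf{c},N)|=\mathcal{O}(N^k\cdot\Pi_{i=1}^k\mathbf{c}_i)$. Your additional remarks (invoking Proposition~\ref{prop:numerical:late:termination} explicitly and noting that the late-scheduler grid is slightly smaller) are sound refinements that the paper's proof leaves implicit.
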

\begin{proof}
The algorithm is an simple application of Theorem~\ref{thm:numerical:late:algorithm}. If $s\in G$, the algorithm just returns $1$. 
Otherwise, the algorithm just calls Algorithm~\ref{algo:approx:late} with 
$N:=\lfloor\max\{\mathbf{E}_{\max},\frac{M}{\epsilon}\}\rfloor+1$ and set $d=\mathrm{prob}_G(s,\mathbf{c})$; 
By Theorem~\ref{thm:numerical:late:algorithm}, we directly obtain that $\left|d-\mathrm{prob}_G^{\max}(s,\mathbf{c})\right|\le M\cdot\frac{1}{N}$\enskip. 
For each $\mathbf{d}\in\mathrm{Disc}(\mathbf{c},N)$, the total computation of $\{\mathrm{prob}_G(s,\mathbf{d})\}_{s\in X_G\cup Y_G}$ and 
$\{\overline{\mathrm{prob}}_G(s,\mathbf{d})\}_{s\in X_G\cup Y_G}$ takes $\mathcal{O}(|\mathcal{M}|+\log\frac{M}{\epsilon})^8)$ time since the most 
time consuming part is the linear program which takes $\mathcal{O}((|\mathcal{M}|+\log N)^8)$ time (cf. \cite{Schrijver:1986:TLI:17634}). Thus the total 
running time of the 
algorithm is $\mathcal{O}((\max\{\mathbf{E}_{\max},\frac{M}{\epsilon}\})^k\cdot(\Pi_{i=1}^k\mathbf{c}_i)\cdot (|\mathcal{M}|+\log\frac{M}{\epsilon})^8)$ since the size of 
$\mathrm{Disc}(\mathbf{c},N)$ is $\mathcal{O}(N^k\cdot(\Pi_{i=1}^k\mathbf{c}_i))$\enskip. 
\end{proof}

\section{Conclusion}

In this paper, we established an integral characterization for multi-dimensional maximal cost-bounded reachability probabilities 
in continuous-time Markov decision processes, the existence of deterministic cost-positional optimal scheduler and an algorithm to approximate the cost-bounded 
reachability probability with an error bound, under the setting of both early and late schedulers. The approximation algorithm is 
based on a differential characterization of cost-bounded reachability probability which in turn is derived from the integral 
characterization. The error bound is obtained through the differential characterization and the Lipschitz property described. 
Moreover, the approximation algorithm runs in polynomial time in the size of the CTMDP and the reciprocal of the error bound, and exponential in
the dimension of the unit-cost vector. 
An important missing part is the generation of an $\epsilon$-optimal scheduler. However, we conjecture that an $\epsilon$-optimal 
scheduler is not difficult to obtain given that the approximation scheme has been established. 

A future direction is to determine an $\epsilon$-optimal scheduler, under both early and late schedulers. Besides, we believe that the 
paradigms developed in this paper can also be applied to minimum cost-bounded reachability probability and even stochastic games~\cite{controlledCTMG} with 
multi-dimensional cost-bounded reachability objective. 

\subsubsection*{Acknowledgement}

I thank Prof. Joost-Pieter Katoen for his valuable advices on the writing of the paper, especially for the Introduction part. 

\bibliographystyle{plain}
\bibliography{references}

\appendix

\section{Proofs for Section~\ref{sect:scheduler}}

\noindent\textbf{Proposition~\ref{prop:init}.}~~For each early (late) scheduler $D$ and each initial distribution $\alpha$, 
$\mathrm{Pr}_{\mathcal{M},D,\alpha}(\Pi)=\sum_{s\in L}\alpha(s)\cdot\mathrm{Pr}_{\mathcal{M},D,\mathcal{D}[s]}(\Pi)$
for all $\Pi\in\mathcal{S}_\mathcal{M}$. 
\begin{proof}
Define $\mathrm{Pr}'(\Pi):=\sum_{s\in L}\alpha(s)\cdot\mathrm{Pr}_{\mathcal{M},D,\mathcal{D}[s]}(\Pi)$ for $\Pi\in\mathcal{S}_\mathcal{M}$. We prove that $\mathrm{Pr'}$ coincides with $\mathrm{Pr}_{\mathcal{M},D,\alpha}$. It is clear that $\mathrm{Pr}'$ is a probability measure since each $\mathrm{Pr}_{\mathcal{M},D,\mathcal{D}[s]}$ is a probability measure. So it suffices to prove that $\mathrm{Pr}'$ coincides with $\mathrm{Pr}_{\mathcal{M},D,\alpha}$ on $\bigcup_{n\ge 0}\{\mathrm{Cyl}(\Xi)\mid\Xi\in\mathcal{S}^n_\mathcal{M}\}$. To this end, we proceed by induction on $n$. 

~\\
\noindent\textbf{Base Step:} $n=0$ and $\Xi\in\mathcal{S}^n_\mathcal{M}$. By definition, we have
\[
\mathrm{Pr}_{\mathcal{M},D,\alpha}(\mathrm{Cyl}(\Xi))=\mathrm{Pr}^{0}_{\mathcal{M},D,\alpha}(\Xi)=\sum_{s\in L}\alpha(s)\cdot\mathbf{1}_\Xi(s)=\mathrm{Pr}'(\mathrm{Cyl}(\Xi))\enskip. 
\]
\textbf{Inductive Step:} Suppose $\Xi\in\mathcal{S}^{n+1}_\mathcal{M}$. Define
\[
g(\xi):=\int_{\Gamma_{\mathcal{M}}}\mathbf{1}_\Xi(\xi\circ\gamma)~\mu^D_\mathcal{M}(\xi,\mathrm{d}\gamma)\enskip.
\]
Let $\{g_m\}_{m\ge 0}$ be a sequence of simple functions that converges to $g$, which are denoted by $g_m=\sum_{i=1}^{l_m}d_m^i\cdot\mathbf{1}_{\Xi_i}$ for which $l_m\ge 1$, $d_m^i\ge 0$ and $\Xi_i\in\mathcal{S}^n_\mathcal{M}$ for all $1\le i\le l_m$. 
Then
\[
\mathrm{Pr}^{n+1}_{\mathcal{M},D,\alpha}(\Xi)=\int_{\Omega^n_\mathcal{M}} g(\xi)~\mathrm{Pr}^n_{\mathcal{M},\rho,\alpha}(\mathrm{d}\xi)=\lim\limits_{m\rightarrow\infty}\sum_{i=1}^{l_m}d_m^i\cdot\mathrm{Pr}^n_{\mathcal{M},D,\alpha}(\Xi_i)\enskip.
\]
By induction hypothesis, 
\[
\mathrm{Pr}_{\mathcal{M},D,\alpha}\left(\mathrm{Cyl}(\Xi_i)\right)=\mathrm{Pr}^n_{\mathcal{M},D,\alpha}(\Xi_i)=\sum_{s\in L}\alpha(s)\cdot\mathrm{Pr}_{\mathcal{M},D,\mathcal{D}[s]}\left(\mathrm{Cyl}(\Xi_i)\right)\enskip.
\]
Thus we have
\begin{eqnarray*}
& &\mathrm{Pr}_{\mathcal{M},D,\alpha}\left(\mathrm{Cyl}(\Xi)\right)\\
&=&\mathrm{Pr}^{n+1}_{\mathcal{M},D,\alpha}(\Xi)\\
&=&\lim\limits_{m\rightarrow\infty}\sum_{i=1}^{l_m}d_i\cdot \sum_{s\in L}\alpha(s)\cdot\mathrm{Pr}_{\mathcal{M},D,\mathcal{D}[s]}\left(\mathrm{Cyl}(\Xi_i)\right)\\
&=&\lim\limits_{m\rightarrow\infty}\sum_{s\in L}\alpha(s)\cdot\sum_{i=1}^{l_m}d_i\cdot\mathrm{Pr}_{\mathcal{M},D,\mathcal{D}[s]}\left(\mathrm{Cyl}(\Xi_i)\right)\\
&=& \sum_{s\in L}\alpha(s)\cdot\lim\limits_{m\rightarrow\infty}\sum_{i=1}^{l_m}d_i\cdot\mathrm{Pr}^n_{\mathcal{M},D,\mathcal{D}[s]}(\Xi_i))\\
&=& \sum_{s\in L}\alpha(s)\cdot\int_{\Omega^n_\mathcal{M}} g(\xi)~\mathrm{Pr}^n_{\mathcal{M},D,\mathcal{D}[s]}(\mathrm{d}\xi)\\
&=& \sum_{s\in L}\alpha(s)\cdot\mathrm{Pr}^{n+1}_{\mathcal{M},D,\mathcal{D}[s]}(\Xi)\\
&=& \sum_{s\in L}\alpha(s)\cdot\mathrm{Pr}_{\mathcal{M},D,\mathcal{D}[s]}\left(\mathrm{Cyl}(\Xi)\right)\enskip,
\end{eqnarray*}
which justifies the induction hypothesis.
\end{proof}

\section{Proofs for Section~\ref{sect:fixpoint}}

\noindent\textbf{Lemma~\ref{lemm:shift:set:measurability}.}
$P^{s,a}_\Pi(t)\in\mathcal{S}_\mathcal{M}$ for all $\Pi\in\mathcal{S}_\mathcal{M}$, $(s,a)\in L\times Act$ and 
$t\in\mathbb{R}_{\ge 0}$\enskip. Analogously, $H^{s,a}_\Xi(t)\in\mathcal{S}^{n-1}_\mathcal{M}$ for all $n\ge 1$, 
$\Xi\in\mathcal{S}^n_\mathcal{M}$ , $(s,a)\in L\times Act$ and $t\in\mathbb{R}_{\ge 0}$\enskip.
\begin{proof}
We first prove the case for paths. Fix some $s\in L$, $a\in Act$ and $t\ge 0$. Define the set $\mathcal{S}'$ by:
$\mathcal{S}':=\{\Pi\in\mathcal{S}_\mathcal{M}\mid P^{s,a}_{\Pi}(t)\in\mathcal{S}_\mathcal{M}\}$\enskip. We prove that 
$\mathcal{S'}=\mathcal{S}_\mathcal{M}$. By Remark~\ref{rmk:msp}, it suffices to prove that 
\begin{enumerate}\itemsep1pt \parskip0pt \parsep0pt
\item $\{\mathrm{Cyl}(\mathrm{Hists}(\theta))\mid \theta\mbox{ is a template}\}\subseteq \mathcal{S}'$, and
\item $\mathcal{S}'$ is a $\sigma$-algebra.
\end{enumerate}
The first point follows directly from the definition of templates. To see the second point, one can verify that (i) 
$P^{s,a}_{\Omega_\mathcal{M}}(t)=\Omega_\mathcal{M}$, (ii) $P^{s,a}_{\Pi^\mathrm{c}}(t)=\left[P^{s,a}_\Pi(t)\right]^\mathrm{c}$ and (iii) 
$P^{s,a}_{\bigcup_{n\ge 0}\Pi_n}(t)=\bigcup_{n\ge 0}P^{s,a}_{\Pi_n}(t)$\enskip.

The proof for histories can be similarly obtained by proving the following fact:
$\{\Xi\in\mathcal{S}^n_\mathcal{M}\mid H^{s,a}_\Xi(t)\in\mathcal{S}^{n-1}_\mathcal{M}\}=\mathcal{S}^n_\mathcal{M}$ for $n\ge 1$.
\end{proof}

\noindent\textbf{Lemma~\ref{lemm:shift:schd:measurability}.}
Let $s\in L$, $a\in Act$ and $t\in\mathbb{R}_{\ge 0}$. For each measurable early (late) scheduler $D$, $D[s\xrightarrow{a,t}]$ is a 
measurable early (late) scheduler.
\begin{proof}
Fix some arbitrary $b\in Act$, $\epsilon\in [0,1]$ and $n\ge 0$. If $D$ is a measurable early scheduler, then we have
\[
\{\xi\in Hists^n_{\mathcal{M}}\mid D[s\xrightarrow{a,t}](\xi,b)\le\epsilon\}=P^{s,a}_H(t)
\]
where $H:=\{\xi\in Hists^{n+1}(\mathcal{M})\mid D(\xi,b)\le\epsilon\}$\enskip. By Lemma~\ref{lemm:shift:set:measurability}, we obtain the 
desired result. Suppose now that $D$ is a measurable late scheduler. Define 
\[
\mathrm{shift}(X):=\{(\xi,\tau)\in Hists^n(\mathcal{M})\times\mathbb{R}_{\ge 0}\mid (s\xrightarrow{a,t}\xi,\tau)\in X\} 
\]
for each  $X\in\mathcal{S}^{n+1}_\mathcal{M}(\mathcal{M})\otimes\mathcal{B}(\mathbb{R}_{\ge 0})$\enskip. Let 
\[
\mathcal{X}':=\{X\in\mathcal{S}^{n+1}_\mathcal{M}\otimes\mathcal{B}(\mathbb{R}_{\ge 0})\mid \mathrm{shift}(X)\in \mathcal{S}^{n}_\mathcal{M}\otimes\mathcal{B}(\mathbb{R}_{\ge 0})\}\enskip. 
\]
Then one can prove $\mathcal{X}'=\mathcal{S}^{n+1}\otimes\mathcal{B}(\mathbb{R}_{\ge 0})$ similar to the proof of 
Lemma~\ref{lemm:shift:set:measurability}. In detail, one can proceed by proving that
\begin{enumerate}\itemsep1pt \parskip0pt \parsep0pt
\item $\{\mathrm{Hists}(\theta)\times I\mid|\theta|=n+1\mbox{ and }I\mbox{ is an interval of }\mathbb{R}_{\ge 0}\}\subseteq\mathcal{X}'$, 
and
\item $\mathcal{X}'$ is a $\sigma$-algebra.
\end{enumerate}
From
\begin{align*}
& \{(\xi,\tau)\in Hists^{n}(\mathcal{M})\times\mathbb{R}_{\ge 0}\mid D[s\xrightarrow{a,t}](\xi,\tau,b)\le\epsilon\}=\\
& \qquad\mathrm{shift}\left(\{(\xi,\tau)\in Hists^{n+1}(\mathcal{M})\times\mathbb{R}_{\ge 0}\mid D(\xi,\tau,b)\le\epsilon\}\right)\enskip,
\end{align*}
we obtain $\{(\xi,\tau)\in Hists^{n}(\mathcal{M})\times\mathbb{R}_{\ge 0}\mid D[s\xrightarrow{a,t}](\xi,\tau,b)\le\epsilon\}$ is 
measurable.
\end{proof}

\noindent\textbf{Proposition~\ref{prop:shiftfunc:measurability}.}~~
$p_{\Pi,D}^{s,a}$ is a measurable function w.r.t $(\mathbb{R}_{\ge 0}, \mathcal{B}(\mathbb{R}_{\ge 0}))$ given any 
$\Pi\in\mathcal{S}_\mathcal{M}$, $s\in L$, $a\in Act$ and measurable early (late) scheduler $D$.
\begin{proof}
Fix some $s\in L$, $a\in Act$ and measurable scheduler $D$. Define the set 
\[
\mathcal{S}':=\{\Pi\in\mathcal{S}_\mathcal{M}\mid p^{s,a}_{\Pi,D}\mbox{ is a measurable function}\}\enskip.
\] 
We show that 
\begin{enumerate}\itemsep1pt \parskip0pt \parsep0pt
\item $\mathcal{S}'$ is a $\lambda$-system (Dynkin system), and
\item the $\pi$-system $\{\mathrm{Cyl}(\mathrm{Hists}(\theta))\mid \theta\mbox{ is a template}\}\subseteq\mathcal{S}'$\enskip.
\end{enumerate}
The second point follows from the fact that $p_{\mathrm{Cyl}(\mathrm{Hists}(\theta)),D}^{s,a}=\sum_{s\in L}d_s\cdot\mathbf{1}_{X_s}$ where each $d_s$ is 
some real number in $[0,1]$ and each $X_s$ is some Borel set on $\mathbb{R}_{\ge 0}$, which can be observed from the definition of 
templates. The first point follows from the following facts:
\begin{itemize} \itemsep1pt \parskip0pt \parsep0pt
\item $p^{s,a}_{\Omega_\mathcal{M},D}$ is measurable since $p^{s,a}_{\Omega_\mathcal{M},D}(t)=1$ for all $t\ge 0$;
\item Whenever $\Pi_1,\Pi_2\in\mathcal{S}'$ and $\Pi_1\subseteq\Pi_2$, we have $\Pi_2\backslash\Pi_1\in\mathcal{S}'$ since 
$p_{\Pi_2\backslash\Pi_1,D}^{s,a}=p_{\Pi_2,D}^{s,a}-p_{\Pi_1,D}^{s,a}$; 
\item For any sequence $\{\Pi_n\}_{n\ge 0}$ such that $\Pi_n\in\mathcal{S}'$ and $\Pi_n\subseteq\Pi_{n+1}$ for all $n\ge 0$, we have 
$\bigcup_{n\ge 0}\Pi_n\in\mathcal{S'}$ since $p_{\bigcup_{n\ge 0}\Pi_n,D}^{s,a}=\lim\limits_{n\rightarrow\infty}p_{\Pi_n,D}^{s,a}$\enskip.
\end{itemize}
By applying Dynkin's $\pi$-$\lambda$ Theorem, we obtain that  $\mathcal{S}_\mathcal{M}\subseteq\mathcal{S}'$, which implies the result. 
\end{proof}

\noindent\textbf{Theorem~\ref{thm:fix-early}}.
Let $D$ be a measurable early scheduler. For each $\Pi\in\mathcal{S}_\mathcal{M}$ and $s\in L$, we have
\[
\mathrm{Pr}_{D,\mathcal{D}[s]}(\Pi)=\sum_{a\in\mathrm{En}(s)}D(s,a)\cdot\int_{0}^\infty f_{\mathbf{E}(s,a)}(t)\cdot p^{s,a}_{\Pi,D}(t)\,\mathrm{d}t\enskip.
\]
\begin{proof}
Define $\mathrm{Pr}':\mathcal{S}_\mathcal{M}\rightarrow [0,1]$ by:
\[
\mathrm{Pr}'(\Pi)=\sum_{a\in\mathrm{En}(s)}D(s,a)\cdot\int_{0}^\infty f_{\mathbf{E}(s,a)}(t)\cdot p^{s,a}_{\Pi,D}(t)\,\mathrm{d}t\enskip.
\]
We prove that $\mathrm{Pr}'$ coincides with $\mathrm{Pr}_{D,\mathcal{D}[s]}$\enskip. It suffices to prove that $\mathrm{Pr}'$ and 
$\mathrm{Pr}_{D,\mathcal{D}[s]}$ coincide on $\bigcup_{n\ge 0}\{\mathrm{Paths}(\Xi)\mid \Xi\in\Omega^n_\mathcal{M}\}$. The proof proceeds 
through induction on $n$.

\textbf{Base Step:} $n\in\{0,1\}$ and $\Xi\in\Omega^n_\mathcal{M}$. If $n=0$, we have 
\[
\mathrm{Pr}_{D,\mathcal{D}[s]}(\mathrm{Cyl}(\Xi))=\mathrm{Pr}_{D,\mathcal{D}[s]}^0(\Xi)=\mathbf{1}_{\Xi}(s)=\mathrm{Pr}'(\mathrm{Cyl}(\Xi))\enskip.
\]
Otherwise (i.e., $n=1$), we have
\begin{eqnarray*}
& & \mathrm{Pr}_{D,\mathcal{D}[s]}(\mathrm{Cyl}(\Xi)) \\
&=& \mathrm{Pr}_{D,\mathcal{D}[s]}^1(\Xi) \\
&=& \int_{\Omega^0_\mathcal{M}}\left[\int_{\Gamma_\mathcal{M}}\mathbf{1}_\Xi(\xi\circ\gamma)~\mu^D_\mathcal{M}(\xi,\mathrm{d}\gamma)\right]\mathrm{Pr}^0_{D,\mathcal{D}[s]}(\mathrm{d}\xi)\\
&=& \int_{\Gamma_\mathcal{M}}\mathbf{1}_\Xi(s\circ\gamma)~\mu^D_\mathcal{M}(s,\mathrm{d}\gamma)
\end{eqnarray*}
Let $U:=\{\gamma\in\Gamma_\mathcal{M}\mid s\circ\gamma\in\Xi\}$\enskip. Then $U\in\mathcal{U}_\mathcal{M}$. Thus, 
\begin{eqnarray*}
& & \int_{\Gamma_\mathcal{M}}\mathbf{1}_\Xi(s\circ\gamma)~\mu^D_\mathcal{M}(s,\mathrm{d}\gamma) \\
&=& \mu^D_\mathcal{M}(s, U) \\
&=& \sum_{a\in\mathrm{En}(s)}D(s,a)\cdot\int_{\mathbb{R}_{\ge 0}} {f}_{\mathbf{E}(s,a)}(t)\cdot\left[\sum_{s'\in L}\mathbf{1}_U(a,t,s')\cdot\mathbf{P}(s,a,s')\right]\,\mathrm{d}t\\
&=& \sum_{a\in\mathrm{En}(s)}D(s,a)\cdot\int_{\mathbb{R}_{\ge 0}} {f}_{\mathbf{E}(s,a)}(t)\cdot\left[\sum_{s'\in L}\mathbf{1}_\Xi(s
\xrightarrow{a,t}s')\cdot\mathbf{P}(s,a,s')\right]\,\mathrm{d}t\\
&=& \sum_{a\in\mathrm{En}(s)}D(s,a)\cdot\int_{\mathbb{R}_{\ge 0}} {f}_{\mathbf{E}(s,a)}(t)\cdot p^{s,a}_{\mathrm{Cyl}(\Xi),D}(t)\,\mathrm{d}t
\end{eqnarray*}
where the last equality follows from Proposition~\ref{prop:init}.

\textbf{Inductive Step:} Suppose $\Xi\in\mathcal{S}^{n+1}_\mathcal{M}$ with $n\ge 1$. Denote
\[ 
g_{\Xi,D}(\xi):=\int_{\Gamma_{\mathcal{M}}}\mathbf{1}_\Xi(\xi\circ\gamma)~\mu^D_\mathcal{M}(\xi,\mathrm{d}\gamma)\enskip.
\]
Let $\{g_m:\Omega^n_{\mathcal{M}}\rightarrow [0,1]\}_{m\ge 0}$ be a sequence of simple functions that converges to $g_{\Xi,D}$. Denote $g_m=\sum_{i=1}^{l_m}d_m^i\cdot\mathbf{1}_{\Xi_m^i}$\enskip.   Then, we have:
\begin{eqnarray*}
& & \mathrm{Pr}_{D,\mathcal{D}[s]}\left(\mathrm{Cyl}(\Xi)\right)\\
&=& \mathrm{Pr}^{n+1}_{D,\mathcal{D}[s]}\left(\Xi\right)\\
&=& \int_{\Omega^n_\mathcal{M}}g_{\Xi,D}(\xi)~\mathrm{Pr}^{n}_{D,\mathcal{D}[s]}(\mathrm{d}\xi)  \\
&=& \lim\limits_{m\rightarrow\infty}\sum_{i=1}^{l_m} d_m^i\cdot\mathrm{Pr}^n_{D,\mathcal{D}[s]}\left(\Xi_m^i\right) \\
&=& \lim\limits_{m\rightarrow\infty}\sum_{i=1}^{l_m} d_m^i\cdot\left[\sum_{a\in Act}D(s,a)\cdot\int_0^\infty f_{\mathbf{E}(s,a)}(t)\cdot p^{s,a}_{\mathrm{Cyl}(\Xi_m^i),D}(t)\,\mathrm{d}t\right] \\
&=& \lim\limits_{m\rightarrow\infty}\sum_{a\in\mathrm{En}(s)}D(s,a)\cdot\left[\int_0^\infty f_{\mathbf{E}(s,a)}(t)\cdot\sum_{i=1}^{l_m} d_m^i\cdot\cdot p^{s,a}_{\mathrm{Cyl}(\Xi_m^i),D}(t)\,\mathrm{d}t\right]  \\
&=& \sum_{a\in\mathrm{En}(s)}D(s,a)\cdot\left[\int_0^\infty f_{\mathbf{E}(s,a)}(t)\cdot\lim\limits_{m\rightarrow\infty}\sum_{i=1}^{l_m} d_m^i\cdot p^{s,a}_{\mathrm{Cyl}(\Xi_m^i),D}(t)\,\mathrm{d}t\right]  
\end{eqnarray*}
where the fourth equality follows from the induction hypothesis. Note that 
\begin{eqnarray*}
& & \lim\limits_{m\rightarrow\infty}\sum_{i=1}^{l_m} d_m^i\cdot p^{s,a}_{\mathrm{Cyl}(\Xi_m^i),D}(t) \\
&=& \lim\limits_{m\rightarrow\infty}\sum_{i=1}^{l_m} d_m^i\cdot \mathrm{Pr}_{D[s\xrightarrow{a,t}],\mathbf{P}(s,a,\centerdot)}\left(P_{\mathrm{Cyl}(\Xi_m^i)}^{s,a}(t)\right)\\
&=& \lim\limits_{m\rightarrow\infty}\sum_{i=1}^{l_m}d_m^i\cdot\mathrm{Pr}^{n-1}_{D[s\xrightarrow{a,t}],\mathbf{P}(s,a,\centerdot)}\left(H^{s,a}_{\Xi^i_m}(t)\right)\enskip.
\end{eqnarray*}
Denote $g'_m(\xi):=\sum_{i=1}^{l_m}d_m^i\cdot\mathbf{1}_{H^{s,a}_{\Xi^i_m}(t)}(\xi)$\enskip. 
Then  
$g'_m(\xi)=g_m(s\xrightarrow{a,t}\xi)$. It follows that $\lim\limits_{m\rightarrow\infty} g'_m(\xi)= g_{\Xi,D}(s\xrightarrow{a,t}\xi)$. By definition, 
\begin{enumerate} \itemsep1pt \parskip0pt \parsep0pt
\item $\mathbf{1}_{\Xi}\left((s\xrightarrow{a,t}\xi)\circ\gamma\right)=\mathbf{1}_{H^{s,a}_{\Xi}(t)}\left(\xi\circ\gamma\right)$ for all combined actions $\gamma$;
\item $\mu^D_\mathcal{M}(s\xrightarrow{a,t}\xi,U)=\mu^{D[s\xrightarrow{a,t}]}_\mathcal{M}(\xi,U)$ for all $U\in\Gamma_\mathcal{M}$.
\end{enumerate}
Thus $g_{\Xi,D}(s\xrightarrow{a,t}\xi)=g_{H^{s,a}_{\Xi}(t),~D[s\xrightarrow{a,t}]}(\xi)=\lim\limits_{m\rightarrow\infty} g'_m(\xi)$. It follows that 
\begin{eqnarray*}
& &\lim\limits_{m\rightarrow\infty}\sum_{i=1}^{l_m}d_m^i\cdot\mathrm{Pr}^{n-1}_{D[s\xrightarrow{a,t}],\mathbf{P}(s,a,\centerdot)}\left(H^{s,a}_{\Xi^i_m}(t)\right)\\
&=&\int_{\Omega_\mathcal{M}^{n-1}}g_{H^{s,a}_{\Xi}(t),D[s\xrightarrow{a,t}]}(\xi)~\mathrm{Pr}^{n-1}_{D[s\xrightarrow{a,t}],\mathbf{P}(s,a,\centerdot)}(\mathrm{d}\xi) \\
&=&\mathrm{Pr}^n_{D[s\xrightarrow{a,t}],\mathbf{P}(s,a,\centerdot)}\left(H^{s,a}_{\Xi}(t)\right)\enskip.
\end{eqnarray*}
Then we have
\begin{eqnarray*}
& &\mathrm{Pr}_{D,\mathcal{D}[s]}\left(\mathrm{Cyl}(\Xi)\right) \\
&=&\sum_{a\in Act}D(s,a)\cdot\left[\int_0^\infty f_{\mathbf{E}(s,a)}(t)\cdot\mathrm{Pr}^n_{D[s\xrightarrow{a,t}],\mathbf{P}(s,a,\centerdot)}\left(H^{s,a}_{\Xi}(t)\right)\,\mathrm{d}t\right] \\
&=&\sum_{a\in Act}D(s,a)\cdot\left[\int_0^\infty f_{\mathbf{E}(s,a)}(t)\cdot p_{\mathrm{Cyl}(\Xi),D}^{s,a}(t)\,\mathrm{d}t\right]
\end{eqnarray*}
which completes the inductive step.
\end{proof}

\noindent\textbf{Theorem~\ref{thm:fix-late}.}
Let $D$ be a measurable late scheduler. For each $\Pi\in\mathcal{S}_\mathcal{M}$ and $s\in L$, we have
\[
\mathrm{Pr}_{D,\mathcal{D}[s]}(\Pi)=\int_{0}^\infty f_{\mathbf{E}(s)}(t)\cdot\left[\sum_{a\in\mathrm{En}(s)} D(s,t,a)\cdot p^{s,a}_{\Pi,D}(t)\right]\,\mathrm{d}t\enskip.
\]
\begin{proof}
The proof follows the lines of Theorem~\ref{thm:fix-early}. Define $\mathrm{Pr}':\mathcal{S}_\mathcal{M}\rightarrow [0,1]$ by:
\[
\mathrm{Pr}'(\Pi)=\int_{0}^\infty f_{\mathbf{E}(s)}(t)\cdot\left[\sum_{a\in\mathrm{En}(s)} D(s,t,a)\cdot p^{s,a}_{\Pi,D}(t)\right]\,\mathrm{d}t\enskip.
\]
We prove that $\mathrm{Pr}'$ coincides with $\mathrm{Pr}_{D,\mathcal{D}[s]}$\enskip. It suffices to prove that $\mathrm{Pr}'$ and 
$\mathrm{Pr}_{D,\mathcal{D}[s]}$ coincide on $\bigcup_{n\ge 0}\{\mathrm{Cyl}(\Xi)\mid \Xi\in\Omega^n_\mathcal{M}\}$. The proof proceeds by induction on $n$.

\textbf{Base Step:} $n\in\{0,1\}$ and $\Xi\in\Omega^n_\mathcal{M}$. If $n=0$, we have 
\[
\mathrm{Pr}_{D,\mathcal{D}[s]}(\mathrm{Cyl}(\Xi))=\mathrm{Pr}_{D,\mathcal{D}[s]}^0(\Xi)=\mathbf{1}_{\Xi}(s)=\mathrm{Pr}'(\mathrm{Cyl}(\Xi))\enskip.
\]
Otherwise (i.e.  $n=1$), we have
\begin{eqnarray*}
& & \mathrm{Pr}_{D,\mathcal{D}[s]}(\mathrm{Cyl}(\Xi)) \\
&=& \mathrm{Pr}_{D,\mathcal{D}[s]}^1(\Xi) \\
&=& \int_{\Omega^0_\mathcal{M}}\left[\int_{\Gamma_\mathcal{M}}\mathbf{1}_\Xi(\xi\circ\gamma)~\mu^D_\mathcal{M}(\xi,\mathrm{d}\gamma)\right]\mathrm{Pr}^0_{D,\mathcal{D}[s]}(\mathrm{d}\xi)\\
&=& \int_{\Gamma_\mathcal{M}}\mathbf{1}_\Xi(s\circ\gamma)~\mu^D_\mathcal{M}(s,\mathrm{d}\gamma)\enskip.
\end{eqnarray*}
Let $U:=\{\gamma\in\Gamma_\mathcal{M}\mid s\circ\gamma\in\Xi\}$\enskip. Then $U\in\mathcal{U}_\mathcal{M}$. Thus, 
\begin{eqnarray*}
& & \int_{\Gamma_\mathcal{M}}\mathbf{1}_\Xi(s\circ\gamma)~\mu^D_\mathcal{M}(s,\mathrm{d}\gamma) \\
&=& \mu^D_\mathcal{M}(s, U) \\
&=& \int_{\mathbb{R}_{\ge 0}}f_{\mathbf{E}(s)}(t)\cdot\left\{\sum_{a\in \mathrm{En}(s)}D(s,t,a)\cdot\left[\sum_{s'\in L}\mathbf{1}_{U}(a,t,s')\cdot\mathbf{P}(s,a,s')\right]\right\}\,\mathrm{d}t\\
&=& \int_{\mathbb{R}_{\ge 0}}f_{\mathbf{E}(s)}(t)\cdot\left\{\sum_{a\in \mathrm{En}(s)}D(s,t,a)\cdot\left[\sum_{s'\in L}\mathbf{1}_{\Xi}(s\xrightarrow{a,t}s')\cdot\mathbf{P}(s,a,s')\right]\right\}\,\mathrm{d}t\\
&=& \int_{0}^\infty f_{\mathbf{E}(s)}(t)\cdot\left[\sum_{a\in\mathrm{En}(s)} D(s,t,a)\cdot p^{s,a}_{\Pi,D}(t)\right]\,\mathrm{d}t\enskip.
\end{eqnarray*}
where the last step follows from Propostion~\ref{prop:init}.

\textbf{Inductive Step:} Suppose $\Xi\in\mathcal{S}^{n+1}_\mathcal{M}$ with $n\ge 1$. Denote
\[ 
g_{\Xi,D}(\xi):=\int_{\Gamma_{\mathcal{M}}}\mathbf{1}_\Xi(\xi\circ\gamma)~\mu^D_\mathcal{M}(\xi,\mathrm{d}\gamma)\enskip.
\]
Let $\{g_m:\Omega^n_{\mathcal{M}}\rightarrow [0,1]\}_{m\ge 0}$ be a sequence of simple functions that converges to $g_{\Xi,D}$. Denote $g_m=\sum_{i=1}^{l_m}d_m^i\cdot\mathbf{1}_{\Xi_m^i}$\enskip.   Then, we have:
\begin{eqnarray*}
& & \mathrm{Pr}_{D,\mathcal{D}[s]}\left(\mathrm{Cyl}(\Xi)\right)\\
&=& \mathrm{Pr}^{n+1}_{D,\mathcal{D}[s]}\left(\Xi\right)\\
&=& \int_{\Omega^n_\mathcal{M}}g_{\Xi,D}(\xi)~\mathrm{Pr}^{n}_{D,\mathcal{D}[s]}(\mathrm{d}\xi)  \\
&=& \lim\limits_{m\rightarrow\infty}\sum_{i=1}^{l_m} d_m^i\cdot\mathrm{Pr}^n_{D,\mathcal{D}[s]}\left(\Xi_m^i\right) \\
&=& \lim\limits_{m\rightarrow\infty}\sum_{i=1}^{l_m} d_m^i\cdot\left\{\int_0^\infty f_{\mathbf{E}(s)}(t)\cdot\left[\sum_{a\in\mathrm{En}(s) }D(s,t,a)\cdot p^{s,a}_{\mathrm{Cyl}(\Xi_m^i),D}(t)\right]\,\mathrm{d}t\right\} \\
&=& \lim\limits_{m\rightarrow\infty}\int_0^\infty f_{\mathbf{E}(s)}(t)\cdot\left\{\sum_{a\in\mathrm{En}(s) }D(s,t,a)\cdot\left[\sum_{i=1}^{l_m} d_m^i\cdot p^{s,a}_{\mathrm{Cyl}(\Xi_m^i),D}(t)\right]\right\}\,\mathrm{d}t \\
&=& \int_0^\infty f_{\mathbf{E}(s)}(t)\cdot\left\{\sum_{a\in\mathrm{En}(s) }D(s,t,a)\cdot\left[\lim\limits_{m\rightarrow\infty}\sum_{i=1}^{l_m} d_m^i\cdot p^{s,a}_{\mathrm{Cyl}(\Xi_m^i),D}(t)\right]\right\}\,\mathrm{d}t
\end{eqnarray*}
where the fourth equality is from the induction hypothesis. Note that 
\begin{eqnarray*}
& & \lim\limits_{m\rightarrow\infty}\sum_{i=1}^{l_m} d_m^i\cdot p^{s,a}_{\mathrm{Cyl}(\Xi_m^i),D}(t) \\
&=& \lim\limits_{m\rightarrow\infty}\sum_{i=1}^{l_m} d_m^i\cdot \mathrm{Pr}_{D[s\xrightarrow{a,t}],\mathbf{P}(s,a,\centerdot)}\left(P_{\mathrm{Cyl}(\Xi_m^i)}^{s,a}(t)\right) \\
&=&\lim\limits_{m\rightarrow\infty}\sum_{i=1}^{l_m}d_m^i\cdot\mathrm{Pr}^{n-1}_{D[s\xrightarrow{a,t}],\mathbf{P}(s,a,\centerdot)}\left(H^{s,a}_{\Xi^i_m}(t)\right)
\end{eqnarray*}
Denote $g'_m(\xi):=\sum_{i=1}^{l_m}d_m^i\cdot\mathbf{1}_{H^{s,a}_{\Xi^i_m}(t)}(\xi)$\enskip. 
Then  
$g'_m(\xi)=g_m(s\xrightarrow{a,t}\xi)$. It follows that $\lim\limits_{m\rightarrow\infty} g'_m(\xi)= g_{\Xi,D}(s\xrightarrow{a,t}\xi)$. By definition, 
\begin{enumerate} \itemsep1pt \parskip0pt \parsep0pt
\item $\mathbf{1}_{\Xi}\left((s\xrightarrow{a,t}\xi)\circ\gamma\right)=\mathbf{1}_{H^{s,a}_{\Xi}(t)}\left(\xi\circ\gamma\right)$ for all combined action $\gamma$;
\item $\mu^D_\mathcal{M}(s\xrightarrow{a,t}\xi,U)=\mu^{D[s\xrightarrow{a,t}]}_\mathcal{M}(\xi,U)$ for all $U\in\Gamma_\mathcal{M}$.
\end{enumerate}
Thus $g_{\Xi,D}(s\xrightarrow{a,t}\xi)=g_{H^{s,a}_{\Xi}(t),~D[s\xrightarrow{a,t}]}(\xi)=\lim\limits_{m\rightarrow\infty} g'_m(\xi)$. It follows that 
\begin{eqnarray*}
& &\lim\limits_{m\rightarrow\infty}\sum_{i=1}^{l_m}d_m^i\cdot\mathrm{Pr}^{n-1}_{D[s\xrightarrow{a,t}],\mathbf{P}(s,a,\centerdot)}\left(H^{s,a}_{\Xi^i_m}(t)\right)\\
&=&\int_{\Omega_\mathcal{M}^{n-1}}g_{H^{s,a}_{\Xi}(t),D[s\xrightarrow{a,t}]}(\xi)~\mathrm{Pr}^{n-1}_{D[s\xrightarrow{a,t}],\mathbf{P}(s,a,\centerdot)}(\mathrm{d}\xi) \\
&=&\mathrm{Pr}^n_{D[s\xrightarrow{a,t}],\mathbf{P}(s,a,\centerdot)}\left(H^{s,a}_{\Xi}(t)\right)\enskip.
\end{eqnarray*}
Then we have
\begin{eqnarray*}
& &\mathrm{Pr}_{D,\mathcal{D}[s]}\left(\mathrm{Cyl}(\Xi)\right) \\
&=&\int_0^\infty f_{\mathbf{E}(s)}(t)\cdot\left[\sum_{a\in\mathrm{En}(s)}D(s,t,a)\cdot\mathrm{Pr}^n_{D[s\xrightarrow{a,t}],\mathbf{P}(s,a,\centerdot)}\left(H^{s,a}_{\Xi}(t)\right)\right]\,\mathrm{d}t \\
&=&\int_0^\infty f_{\mathbf{E}(s)}(t)\cdot\left[\sum_{a\in\mathrm{En}(s)}D(s,t,a)\cdot p_{\mathrm{Cyl}(\Xi),D}^{s,a}(t)\right]\,\mathrm{d}t
\end{eqnarray*}
which completes the inductive step.
\end{proof}

\section{Proofs for Section~\ref{sec:mcbrp}}

\noindent\textbf{Theorem~\ref{thm:minearlyf}.}
The function $\mathrm{prob}^{\mathrm{e},\max}_G(\centerdot,\centerdot)$ is the least fixed-point (w.r.t $\le$) of the high-order operator 
$\mathcal{T}^{\mathrm{e}}_G:\left[L\times\mathbb{R}_{\ge 0}\rightarrow [0,1]\right]\rightarrow\left[L\times\mathbb{R}_{\ge 0}\rightarrow [0,1]\right]$ defined by:  
\begin{itemize}\itemsep1pt \parskip0pt \parsep0pt
\item $\mathcal{T}^{\mathrm{e}}_G(h)(s,\mathbf{c}):=\mathbf{1}_{\mathbb{R}^k_{\ge 0}}(\mathbf{c})$ for all $s\in G$ and $\mathbf{c}\in\mathbb{R}^k$;
\item for all $s\in L-G$ and $\mathbf{c}\in\mathbb{R}^k$, 
\begin{align*}
& \mathcal{T}^\mathrm{e}_G(h)(s,\mathbf{c}):=\\
& \quad\max_{a\in\mathrm{En}(s)}\int_{0}^\infty f_{\mathbf{E}(s,a)}(t)\cdot\left[\sum_{s'\in L}\mathbf{P}(s,a,s')\cdot h(s',\mathbf{c}-t\cdot\mathbf{w}(s,a))\right]\,\mathrm{d}t\enskip.
\end{align*}
\end{itemize}
Moreover, 
\[
\left|\mathrm{prob}^{\mathrm{e},\max}_G(s,\mathbf{c})-\mathrm{prob}^{\mathrm{e},\max}_G(s,\mathbf{c}')\right|\le \frac{\mathbf{E}_{\max}}{\mathbf{w}_{\min}}\cdot{\parallel}{\mathbf{c}-\mathbf{c}'}{\parallel}_\infty
\]
given any $\mathbf{c},\mathbf{c}'\ge \vec{0}$ and $s\in L$\enskip.
\begin{proof}
Define the function $\mathrm{prob}^{\mathrm{e},\max}_{n,G}:L\times\mathbb{R}^k_{\ge 0}\rightarrow [0,1]$ by 
\[
\mathrm{prob}^{\mathrm{e},\max}_{n,G}(s,\mathbf{c}):=\sup_{D\in\mathcal{E}(\mathcal{M})}\mathrm{Pr}_{D,\mathcal{D}[s]}\left(\Pi_{n,G}^\mathbf{c}\right)
\]
where 
\[
\Pi_{n,G}^\mathbf{c}:=\{\pi\in\mathrm{Paths}(\mathcal{M})\mid\mathbf{C}(\pi,G)\le\mathbf{c}\mbox{ and }\pi[m]\in G\mbox{ for some }0\le m\le n\}~~.
\]
Intuitively, $\mathrm{prob}^{\mathrm{e},\max}_{n,G}$ is the maximal cost-bounded reachability probability function within $n$ steps. 
For $n\in\mathbb{N}_{\ge 0}$ and $\delta>0$, define
\begin{align*}
& \epsilon(n,\delta):=\sup\bigg\{\left|\mathrm{prob}^{\mathrm{e},\max}_{n,G}(s,\mathbf{c})-\mathrm{prob}^{\mathrm{e},\max}_{n,G}(s,\mathbf{c}')\right|\mid \\
& \qquad\qquad s\in L, \mathbf{c},\mathbf{c}'\ge \vec{0}\mbox{ and }\parallel\mathbf{c}-\mathbf{c'}\parallel_\infty\le\delta\bigg\}\enskip.
\end{align*}
Note that $\epsilon(0,\delta)=0$ for all $\delta$. Firstly, we prove by induction on $n\ge 0$ that the following assertions hold: 
\begin{enumerate}\itemsep1pt \parskip0pt \parsep0pt
\item[(a)] $\mathrm{prob}^{\mathrm{e},\max}_{n,G}(s,\mathbf{c})=\mathbf{1}_{\mathbb{R}^k_{\ge 0}}(\mathbf{c})$ if $s\in G$\enskip;
\item[(b)] Given any $\delta>0$, $\epsilon(n,\delta)\le \frac{\mathbf{E}_{\max}}{\mathbf{w}_{\min}}\cdot\delta$\enskip;
\item[(c)] If $n>0$ and $s\in L-G$ then $\mathrm{prob}^{\mathrm{e},\max}_{n+1,G}=\mathcal{T}^\mathrm{e}_G(\mathrm{prob}^{\mathrm{e},\max}_{n,G})$\enskip.
\end{enumerate}
The base step when $n=0$ is easy. It is clear that 
$\mathrm{Pr}_{D, \mathcal{D}[s]}(\Pi^\mathbf{c}_{0,G})=\mathbf{1}_G(s)\cdot\mathbf{1}_{\mathbb{R}^k_{\ge 0}}(\mathbf{c})$ for all 
measurable early scheduler $D$. Thus (a), (b) holds and (c) is vacuum true. For the inductive step, let $n=m+1$ with $m\ge 0$. It is easy 
to see that (a) holds. Below we prove (b) and (c). Let $\mathbf{c}\ge \vec{0}$ and $s\in L-G$. By Proposition~\ref{thm:fix-early} and 
Corollary~\ref{crlly:int_early}, for all measurable early scheduler $D$, 
\begin{align*}
& \mathrm{Pr}_{D,\mathcal{D}[s]}(\Pi^\mathbf{c}_{m+1,G})=\sum_{a\in\mathrm{En}(s)}D(s,a)\cdot\\
& ~~~~\int_{0}^\infty f_{\mathbf{E}(s,a)}(t)\cdot\left[\sum_{s'\in L}\mathbf{P}(s,a,s')\cdot\mathrm{Pr}_{D[s\xrightarrow{a,t}],\mathcal{D}[s']}\left(\Pi_{m,G}^{\mathbf{c}-t\cdot\mathbf{w}(s,a)}\right)\right]\,\mathrm{d}t\enskip.
\end{align*}
If we modify $D$ to $D'$ by setting $D'(s,\centerdot)$ to
\[
\mathcal{D}\left[\argmax_{a\in\mathrm{En}(s)}\int_0^\infty f_{\mathbf{E}(s,a)}(t)\cdot\left(\sum_{s'\in L}\mathbf{P}(s,a,s')\cdot\mathrm{Pr}_{D[s\xrightarrow{a,t}],\mathcal{D}[s']}\left(\Pi_{m,G}^{\mathbf{c}-t\cdot\mathbf{w}(s,a)}\right)\right)\,\mathrm{d}t\right]
\]
and $D'(\xi,\centerdot)=D(\xi,\centerdot)$ for $\xi\ne s$, then $D'$ is still a measurable scheduler which satisfies that 
$\mathrm{Pr}_{D,\mathcal{D}[s]}(\Pi^\mathbf{c}_{m+1,G})\le \mathrm{Pr}_{D',\mathcal{D}[s]}(\Pi^\mathbf{c}_{m+1,G})$\enskip. Thus we have
\begin{align*}
& \mathrm{prob}^{\max}_{m+1,G}(s,\mathbf{c})=\\
& ~\sup_{D\in\mathcal{E}_\mathcal{M}}\max_{a\in\mathrm{En}(s)}\int_{0}^\infty f_{\mathbf{E}(s,a)}(t)\cdot\left[\sum_{s'\in L}\mathbf{P}(s,a,s')\cdot\mathrm{Pr}_{D[s\xrightarrow{a,t}],\mathcal{D}[s']}\left(\Pi_{m,G}^{\mathbf{c}-t\cdot\mathbf{w}(s,a)}\right)\right]\,\mathrm{d}t~.
\end{align*}
Below we prove the inductive step for (c). In detail, we prove that the value $\mathrm{prob}^{\max}_{m+1,G}(s,\mathbf{c})=\sup_{D\in\mathcal{E}_\mathcal{M}}\max_{a\in\mathrm{En}(s)}\mathcal{F}(D,a,\mathbf{c})$ with 
\[
\mathcal{F}(D,a,\mathbf{c}):=\int_{0}^\infty f_{\mathbf{E}(s,a)}(t)\cdot\left[\sum_{s'\in L}\mathbf{P}(s,a,s')\cdot\mathrm{Pr}_{D[s\xrightarrow{a,t}],\mathcal{D}[s']}\left(\Pi_{m,G}^{\mathbf{c}-t\cdot\mathbf{w}(s,a)}\right)\right]\,\mathrm{d}t
\]
equals $\max_{a\in\mathrm{En}(s)}\mathcal{G}(a,\mathbf{c})$
\[
\mathcal{G}(a,\mathbf{c}):=\int_{0}^\infty f_{\mathbf{E}(s,a)}(t)\cdot\left[\sum_{s'\in L}\mathbf{P}(s,a,s')\cdot\mathrm{prob}^{\mathrm{e},\max}_{m,G}(s',\mathbf{c}-t\cdot\mathbf{w}(s,a))\right]\,\mathrm{d}t~.
\]
It is not difficult to see that the former is no greater than the latter. Below we prove the reverse direction. Denote 
$a^*:=\argmax_{a\in\mathrm{En}(s)}\mathcal{G}(a,\mathbf{c})$\enskip. We clarity two cases below.

~\\
\noindent\textbf{Case 1: } $\mathbf{w}(s,a^*)=\vec{0}$. For all $\epsilon>0$, we can choose the measurable scheduler $D^\epsilon$ such 
that (i) $D^\epsilon(s,\centerdot)=\mathcal{D}[a^*]$ and (ii) 
$D^\epsilon(s\xrightarrow{a^*,t}\xi,\centerdot)=D_{\xi[0]}^{m,\epsilon}(\xi,\centerdot)$ for any 
$s\xrightarrow{a^*,t}\xi\in Hists(\mathcal{M})$, where $D_{\xi[0]}^{m,\epsilon}$ is a measurable scheduler such that 
\[
\mathrm{Pr}_{D_{\xi[0]}^{m,\epsilon},\mathcal{D}[\xi[0]]}\left(\Pi^\mathbf{c}_{m,G}\right)\ge\mathrm{prob}^{\mathrm{e},\max}_{m,G}(\xi[0],\mathbf{c})-\epsilon\enskip.
\]
The probability distribution $D(\xi,\centerdot)$ for all other $\xi\in Hists(\mathcal{M})$ is irrelevant and can be set to an arbitrary 
canonical distribution. It is not hard to verify that $D^\epsilon$ satisfies that 
$\max_{a\in\mathrm{En}(s)}\mathcal{F}(D^\epsilon,a,\mathbf{c})\ge \mathcal{G}(a^*)-\epsilon$. Thus 
$\mathrm{prob}^{\max}_{m+1,G}(s,\mathbf{c})=\mathcal{G}(a^*,\mathbf{c})$ by the arbitrary choice of $\epsilon$.

~\\
\noindent\textbf{Case 2: } $\mathbf{w}(s,a^*)\ne\vec{0}$. Then the integrand function of $\mathcal{G}(a^*,\mathbf{c})$ takes non-zero value only on $[0,T_{s,a^*}^\mathbf{c}]$ with 
\[
T_{s,a^*}^\mathbf{c}:=\min\left\{\frac{\mathbf{c}_i}{\mathbf{w}_i(s,a^*)}\mid 1\le i\le k, \mathbf{w}_i(s,a^*)>0)\right\}\enskip. 
\]
By induction hypothesis (b), the integrand of $\mathcal{G}(a^*,\mathbf{c})$ is Lipschitz continuous on $[0,T_{s,a^*}^\mathbf{c}]$. For 
each $N\in\mathbb{N}$, we divide the interval $[0,T_{s,a^*}^\mathbf{c})$ into $N$ equal pieces $I_1,\dots, I_N$ with 
$I_j=[\frac{j-1}{N}\cdot T_{s,a^*}^\mathbf{c}, \frac{j}{N}\cdot T_{s,a^*}^\mathbf{c})$. Moreover, we choose some $t_j\in I_j$ arbitrarily. 
Then we construct the measurable scheduler $D^{N,\epsilon}$ as follows: 
\[
D^{N,\epsilon}(s,\centerdot)=\mathcal{D}[a^*]\mbox{ and }D^{N,\epsilon}(s\xrightarrow{a,t}\xi,\centerdot)=D^{m,\epsilon}_{\xi[0],j}(\xi,\centerdot)\mbox{ when }t\in I_j\enskip, 
\]
where $D^{m,\epsilon}_{\xi[0],j}$ is a measurable early scheduler such that 
\[
\mathrm{Pr}_{D^{m,\epsilon}_{\xi[0],j},\mathcal{D}[\xi[0]]}\left(\Pi^{\mathbf{c}-t_j\cdot\mathbf{w}(s,a^*)}_{m,G}\right)\ge \mathrm{prob}^{\mathrm{e},\max}_{m,G}(\xi[0],\mathbf{c}-t_j\cdot\mathbf{w}(s,a^*))-\epsilon\enskip.
\]
$D^{N,\epsilon}(\xi,\centerdot)$ for all other $\xi$ is irrelevant. By induction hypothesis (b), we have 
$\lim_{N\rightarrow\infty,\epsilon\rightarrow 0^+}\mathcal{F}(D^{N,\epsilon},a^*,\mathbf{c})=G(a^*,\mathbf{c})$. Then 
$\mathrm{prob}^{\max}_{m+1,G}(s,\mathbf{c})=\mathcal{G}(a^*,\mathbf{c})$ since $\epsilon$ can be arbitrarily chosen.

It remains to show that the inductive step for (b) holds. Let $\mathbf{c},\mathbf{c}'\ge\vec{0}$ and $s\in L-G$. Denote 
$\delta:={\parallel}{\mathbf{c}-\mathbf{c}'}{\parallel}_\infty$\enskip. Consider an arbitrary $a\in\mathrm{En}(s)$. If 
$\mathbf{w}(s,a)=\vec{0}$, then clearly $\left|\mathcal{G}(a,\mathbf{c})-\mathcal{G}(a,\mathbf{c}')\right|\le\epsilon(m,\delta)$\enskip.
Otherwise, we have 
\begin{eqnarray*}
& & \left|\mathcal{G}(a,\mathbf{c})-\mathcal{G}(a,\mathbf{c}')\right|\\
& \le & \int_0^{T}f_{\mathbf{E}(s,a)}(t)\cdot\epsilon(m,\delta)\,\mathrm{d}t+f_{\mathbf{E}(s,a)}(T)\cdot\left|T_{s,a}^{\mathbf{c}}-T_{s,a}^{\mathbf{c}'}\right|\\
& \le & (1-e^{-\mathbf{E}(s,a)\cdot T})\cdot \epsilon(m,\delta)+ e^{-\mathbf{E}(s,a)\cdot T}\cdot \frac{\mathbf{E}_{\max}}{\mathbf{w}_{\min}}\cdot\delta\\
&\le &\frac{\mathbf{E}_{\max}}{\mathbf{w}_{\min}}\cdot\delta\enskip
\end{eqnarray*}
where $T:=\min\{T_{s,a}^{\mathbf{c}},T_{s,a}^{\mathbf{c}'}\}$ and the last step is obtained through induction hypothesis. It follows that
\begin{eqnarray*}
& & \left|\mathrm{prob}^{\mathrm{e},\max}_{m+1,G}(s,\mathbf{c})-\mathrm{prob}^{\mathrm{e},\max}_{m+1,G}(s,\mathbf{c}')\right| \\
& \le & \max_{a\in\mathrm{En}(s)}\left|\mathcal{G}(a,\mathbf{c}-\mathcal{G}(a,\mathbf{c}')\right|\\
& \le & \frac{\mathbf{E}_{\max}}{\mathbf{w}_{\min}}\cdot\delta\enskip.
\end{eqnarray*}
Then the inductive step for (b) is completed. 

Secondly, we prove that $\lim\limits_{n\rightarrow\infty}\mathrm{prob}^{\mathrm{e},\max}_{n,G}=\mathrm{prob}^{\mathrm{e},\max}_G$. Let 
$s\in L$ and $\mathbf{c}\in\mathbb{R}^k$. By definition, $\mathrm{prob}^{\mathrm{e},\max}_{n,G}\le\mathrm{prob}^{\mathrm{e},\max}_{n+1,G}$ 
and $\mathrm{prob}^{\mathrm{e},\max}_{n,G}\le\mathrm{prob}^{\mathrm{e},\max}_G$ for all $n\ge 0$. Thus 
$\lim\limits_{n\rightarrow\infty}\mathrm{prob}^{\mathrm{e},\max}_{n,G}$ exists and is no greater than $\mathrm{prob}^{\mathrm{e},\max}_G$. 
For the reverse direction, fix an arbitrary $\epsilon>0$. Let $D$ be a measurable early scheduler such that 
$\mathrm{Pr}_{D,\mathcal{D}[s]}(\Pi_{G}^\mathbf{c})\ge\mathrm{prob}^{\mathrm{e},\max}_G(s,\mathbf{c})-\epsilon$. By definition, 
$\mathrm{prob}_{n,G}^{\mathrm{e},\max}(s,\mathbf{c})\ge\mathrm{Pr}_{D,\mathcal{D}[s]}(\Pi_{n,G}^\mathbf{c})$. It follows that
\[ 
\lim\limits_{n\rightarrow\infty}\mathrm{prob}_{n,G}^{\mathrm{e},\max}(s,\mathbf{c})\ge\lim\limits_{n\rightarrow\infty}\mathrm{Pr}_{D,\mathcal{D}[s]}(\Pi_{n,G}^\mathbf{c})=\mathrm{Pr}_{D,\mathcal{D}[s]}(\Pi_{G}^\mathbf{c})\ge \mathrm{prob}^{\mathrm{e},\max}_G(s,\mathbf{c})-\epsilon\enskip.
\]
Thus $\lim\limits_{n\rightarrow\infty}\mathrm{prob}_{n,G}^{\mathrm{e},\max}(s,\mathbf{c})=\mathrm{prob}^{\mathrm{e},\max}_G$ since 
$\epsilon$ is arbitrarily chosen.

Thirdly, we prove that $\mathrm{prob}^{\mathrm{e},\max}_G$ is the least fixed point of $\mathcal{T}^{\mathrm{e}}_G$. It is clear that 
$\mathrm{prob}^{\mathrm{e},\max}_G(s,\mathbf{c})=\mathbf{1}_{\mathbb{R}_{\ge 0}^k}(\mathbf{c})$ if $s\in G$. By applying Monotone 
Convergence Theorem on (c), we obtain 
$\mathrm{prob}^{\mathrm{e},\max}_G(s,\mathbf{c})=\mathcal{T}^\mathrm{e}_G(\mathrm{prob}^{\mathrm{e},\max}_G)(s,\mathbf{c})$. Thus, 
$\mathrm{prob}^{\mathrm{e},\max}_G$ is a fixed-point of $\mathcal{T}^\mathrm{e}_G$. To see that it is the least fixed-point of 
$\mathcal{T}^\mathrm{e}_G$, one can proceed by induction on $n\ge 0$ that given any fixed-point $h$ of $\mathcal{T}^\mathrm{e}_G$, 
$\mathrm{prob}^{\mathrm{e},\max}_{n,G}\le h$ for all $n\ge 0$ by the facts that $\mathrm{prob}^{\mathrm{e},\max}_{0,G}\le h$ and 
$\mathrm{prob}^{\mathrm{e},\max}_{n+1,G}=\mathcal{T}^\mathrm{e}_G(\mathrm{prob}^{\mathrm{e},\max}_{n,G})$. It follows that 
$\mathrm{prob}^{\mathrm{e},\max}_{G}\le h$ for any fixed-point $h$ of $\mathcal{T}^\mathrm{e}_G$\enskip.

Finally, by taking the limit from (b) we can obtain that
\[
\left|\mathrm{prob}^{\mathrm{e},\max}_G(s,\mathbf{c})-\mathrm{prob}^{\mathrm{e},\max}_G(s,\mathbf{c}')\right|\le \frac{\mathbf{E}_{\max}}{\mathbf{w}_{\min}}\cdot {\parallel}{\mathbf{c}-\mathbf{c}'}{\parallel}_\infty
\]
for all $\mathbf{c},\mathbf{c}'\ge \vec{0}$ and $s\in L$\enskip.
\end{proof}

\noindent\textbf{Theorem~\ref{thm:minlatef}.}
The function $\mathrm{prob}^{\mathrm{l},\max}_G$ is the least fixed point of the high-order operator $\mathcal{T}^{\mathrm{l}}_G:\left[L\times\mathbb{R}^k\rightarrow [0,1]\right]\rightarrow\left[L\times\mathbb{R}^k\rightarrow[0,1]\right]$ defined as follows:  
\begin{itemize}\itemsep1pt \parskip0pt \parsep0pt
\item $\mathcal{T}^\mathrm{l}_G(h)(s,\mathbf{c}):=\mathbf{1}_{\mathbb{R}^k_{\ge 0}}(\mathbf{c})$ if $s\in G$;
\item If $s\not\in G$ then 
\begin{align*}
& \mathcal{T}^\mathrm{l}_G(h)(s,\mathbf{c}):=\\
& \quad\int_{0}^\infty f_{\mathbf{E}(s)}(t)\cdot\max_{a\in\mathrm{En}(s)}\left[\sum_{s'\in L}\mathbf{P}(s,a,s')\cdot h(s',\mathbf{c}-t\cdot\mathbf{w}(s))\right]\,\mathrm{d}t\enskip.
\end{align*}
\end{itemize}
Moreover, 
\[
\left|\mathrm{prob}^{\mathrm{l},\max}_G(s,\mathbf{c})-\mathrm{prob}^{\mathrm{l},\max}_G(s,\mathbf{c}')\right|\le\frac{\mathbf{E}_{\max}}{\mathbf{w}_{\min}}\cdot{\parallel}{\mathbf{c}-\mathbf{c}'}{\parallel}_\infty
\]
for all $\mathbf{c},\mathbf{c}'\ge\vec{0}$ and $s\in L$\enskip.
\begin{proof}
Define the function $\mathrm{prob}^{\mathrm{l},\max}_{n,G}:L\times\mathbb{R}^k_{\ge 0}\rightarrow [0,1]$ by
\[
\mathrm{prob}^{\mathrm{l},\max}_{n,G}(s,\mathbf{c}):=\sup_{D\in\mathcal{L}_\mathcal{M}}\mathrm{Pr}_{D,\mathcal{D}[s]}\left(\Pi_{n,G}^\mathbf{c}\right)
\]
where 
\[
\Pi_{n,G}^\mathbf{c}:=\{\pi\in\mathrm{Paths}(\mathcal{M})\mid\mathbf{C}(\pi,G)\le\mathbf{c}\mbox{ and }\pi[m]\in G\mbox{ for some }0\le m\le n\}~~.
\]
For $n\in\mathbb{N}_{\ge 0}$ and $\delta>0$, define
\begin{align*}
\epsilon(n,\delta):=\sup\bigg\{\left|\mathrm{prob}^{\mathrm{l},\max}_{n,G}(s,\mathbf{c})-\mathrm{prob}^{\mathrm{l},\max}_{n,G}(s,\mathbf{c}')\right|\mid \\
s\in L, \mathbf{c},\mathbf{c}'\ge\vec{0}\mbox{ and }\parallel\mathbf{c}-\mathbf{c'}\parallel_\infty\le\delta\bigg\}\enskip.
\end{align*}
Firstly, we prove by induction on $n\ge 0$ that the following conditions hold: 
\begin{enumerate}\itemsep1pt \parskip0pt \parsep0pt
\item[(a)] $\mathrm{prob}^{\mathrm{l},\max}_{n,G}(s,\mathbf{c})=\mathbf{1}_{\mathbb{R}_{\ge 0}^k}(\mathbf{c})$ if $s\in G$;
\item[(b)] Given any $c\in\mathbb{R}_{\ge 0}$ and $\delta>0$, $\epsilon(n,\delta)\le \frac{\mathbf{E}_{\max}}{\mathbf{w}_{\min}}\cdot\delta$\enskip;
\item[(c)] If $n>0$ and $s\in L-G$ then $\mathrm{prob}^{\mathrm{l},\max}_{n+1,G}=\mathcal{T}^\mathrm{l}_G(\mathrm{prob}^{\mathrm{l},\max}_{n,G})$\enskip.
\end{enumerate}
The base step when $n=0$ is easy. It is clear that $\mathrm{Pr}_{D, \mathcal{D}[s]}(\Pi^\mathbf{c}_{n,G})=\mathbf{1}_G(s)$ for all measurable late scheduler $D$ and $\mathbf{c}\ge\vec{0}$. Thus (a), (b) holds and (c) is vacuum true. For the inductive step, let $n=m+1$ with $m\ge 0$. It is easy to see that (a) holds. Below we prove (b) and (c). Let $\mathbf{c}\ge\vec{0}$ and $s\in L-G$. Firstly, we prove (c). By Proposition~\ref{thm:fix-late},  
\begin{align*}
& \mathrm{Pr}_{D,\mathcal{D}[s]}(\Pi^\mathbf{c}_{m+1,G})=\\
& \int_{0}^\infty f_{\mathbf{E}(s)}(t)\cdot\sum_{a\in\mathrm{En}(s)}D(s,t,a)\cdot\left[\sum_{s'\in L}\mathbf{P}(s,a,s')\cdot\mathrm{Pr}_{D[s\xrightarrow{a,t}],\mathcal{D}[s']}\left(\Pi_{m,G}^{\mathbf{c}-t\cdot\mathbf{w}(s)}\right)\right]\,\mathrm{d}t
\end{align*}
for all measurable late sheduler $D$. Note that for each $a\in\mathrm{En}(s)$, the function
\[
t\mapsto\sum_{s'\in L}\mathbf{P}(s,a,s')\cdot\mathrm{Pr}_{D[s\xrightarrow{a,t}],\mathcal{D}[s']}\left(\Pi_{m,G}^{\mathbf{c}-t\cdot\mathbf{w}(s)}\right)
\]
is measurable w.r.t $(\mathbb{R}_{\ge 0},\mathcal{B}(\mathbb{R}_{\ge 0}))$ due to Proposition~\ref{prop:init} and Proposition~\ref{prop:shiftfunc:measurability}. Thus, if we modify $D$ to $D'$ by setting (i) $D'(s,t,\centerdot)$ to
\[
\mathcal{D}\left[\argmax_{a\in\mathrm{En}(s)}\sum_{s'\in L}\mathbf{P}(s,a,s')\cdot\mathrm{Pr}_{D[s\xrightarrow{a,t}],\mathcal{D}[s']}\left(\Pi_{m,G}^{\mathbf{c}-t\cdot\mathbf{w}(s)}\right)\right]
\]
for each $t\ge 0$ and (ii) $D'(\xi,\centerdot,\centerdot)=D(\xi,\centerdot,\centerdot)$ for $\xi\ne s$, then $D'$ is still a measurable late scheduler such that 
$\mathrm{Pr}_{D,\mathcal{D}[s]}(\Pi^\mathbf{c}_{m+1,G})\le \mathrm{Pr}_{D',\mathcal{D}[s]}(\Pi^\mathbf{c}_{m+1,G})$\enskip. Then we have
\begin{align*}
& \mathrm{prob}^{\mathrm{l},\max}_{m+1,G}(s,\mathbf{c})=\\
& ~\sup_{D\in\mathcal{L}_\mathcal{M}}\int_{0}^\infty f_{\mathbf{E}(s)}(t)\cdot\max_{a\in\mathrm{En}(s)}\left[\sum_{s'\in L}\mathbf{P}(s,a,s')\cdot\mathrm{Pr}_{D[s\xrightarrow{a,t}],\mathcal{D}[s']}\left(\Pi_{m,G}^{\mathbf{c}-t\cdot\mathbf{w}(s)}\right)\right]\,\mathrm{d}t~.
\end{align*}
We prove that $\sup_{D\in\mathcal{L}_\mathcal{M}}\int_0^\infty f_{\mathbf{E}(s)}(t)\cdot\max_{a\in\mathrm{En}(s)}\mathcal{F}(D,a,t,\mathbf{c})\,\mathrm{d}t$ with
\[
\mathcal{F}(D,a,t,\mathbf{c}):=\sum_{s'\in L}\mathbf{P}(s,a,s')\cdot\mathrm{Pr}_{D[s\xrightarrow{a,t}],\mathcal{D}[s']}\left(\Pi_{m,G}^{\mathbf{c}-t\cdot\mathbf{w}(s)}\right)\,\mathrm{d}t
\]
(essentially $\mathrm{prob}^{\mathrm{l},\max}_{m+1,G}(s,\mathbf{c})$) equals $\int_0^\infty f_{\mathbf{E}(s)}(t)\cdot\max_{a\in\mathrm{En}(s)}\mathcal{G}(a,t,\mathbf{c})\,\mathrm{d}t$ with
\[
\mathcal{G}(a,t,\mathbf{c}):=\sum_{s'\in L}\mathbf{P}(s,a,s')\cdot\mathrm{prob}^{\mathrm{l},\max}_{m,G}(s',\mathbf{c}-t\cdot\mathbf{w}(s))\enskip.
\]
It is not difficult to see that the former item is no greater than the latter one. Below we prove the reverse direction. Define
$a^*(t):=\argmax_{a\in\mathrm{En}(s)}\mathcal{G}(a,t,\mathbf{c})$\enskip. We consider two cases. 

~\\
\noindent\textbf{Case 1:} $\mathbf{w}(s)=\vec{0}$. Then $a^*(t)$ is independent of $t$, which we shall denote by $a^*$. For each 
$\epsilon>0$, we choose the measurable late scheduler $D^\epsilon$ such that (i) $D^\epsilon(s,t,\centerdot)=\mathcal{D}[a^*]$ for all 
$t\ge 0$, and (ii) $D^\epsilon(s\xrightarrow{a^*,t}\xi,\tau,\centerdot)=D_{\xi[0]}^{m,\epsilon}(\xi,\tau,\centerdot)$ for all 
$s\xrightarrow{a^*,t}\xi\in Hists(\mathcal{M})$ and $\tau\ge 0$, where $D_{\xi[0]}^{m,\epsilon}$ is a measurable late scheduler such that 
\[
\mathrm{Pr}_{D_{\xi[0]}^{m,\epsilon},\mathcal{D}[\xi[0]]}\left(\Pi^\mathbf{c}_{m,G}\right)\ge\mathrm{prob}^{\mathrm{l},\max}_{m,G}(\xi[0],\mathbf{c})-\epsilon\enskip.
\]
$D^\epsilon$ satisfies that $\max_{a\in\mathrm{En}(s)}\mathcal{F}(D^\epsilon,a,t,\mathbf{c})\ge \mathcal{G}(a^*,t,\mathbf{c})-\epsilon$ for all $t\ge 0$. Thus 
\[
\mathrm{prob}^{\mathrm{l},\max}_{m+1,G}(s,\mathbf{c})=\int_0^\infty f_{\mathbf{E}(s)}(t)\cdot\mathcal{G}(a^*,t,\mathbf{c})\,\mathrm{d}t 
\]
since $\epsilon$ can be arbitrarily chosen. 

~\\
\noindent\textbf{Case 2:} $\mathbf{w}(s)\ne \vec{0}$. Then for each $a\in\mathrm{En}(s)$, $\mathcal{G}(a,t,\mathbf{c})$ takes non-zero value only on $t\in[0,T_{s}^\mathbf{c}]$ with 
\[
T_{s}^\mathbf{c}:=\min\left\{\frac{\mathbf{c}_i}{\mathbf{w}_i(s)}\mid 1\le i\le k, \mathbf{w}_i(s)>0\right\}\enskip.
\]
From the induction hypothesis (b), $\max_{a\in\mathrm{En}(s)}\mathcal{G}(a,\centerdot,\mathbf{c})$ is Lipschitz continuous on 
$[0,T_{s}^\mathbf{c}]$. For each $N\in\mathbb{N}$, we divide the interval $[0,T_{s}^\mathbf{c})$ into $N$ equal pieces $I_1,\dots I_N$ 
with $I_j=[\frac{j-1}{N}\cdot T_{s}^\mathbf{c}, \frac{j}{N}\cdot T_{s}^\mathbf{c})$. And we choose some $t_j\in I_j$ arbitrarily for each 
$1\le j\le N$. Then we construct the measurable late scheduler $D^{N,\epsilon}$ for each $\epsilon>0$ as follows: 
\[
D^{N,\epsilon}(s,t,\centerdot)=\mathcal{D}[a^*(t_j)]\mbox{ and }D^{N,\epsilon}(s\xrightarrow{a(t_j),t}\xi,\centerdot,\centerdot)=D^{m,\epsilon}_{\xi[0],j}(\xi,\centerdot,\centerdot)\mbox{ when }t\in I_j, 
\]
where $D^{m,\epsilon}_{\xi[0],j}$ is a measurable late scheduler such that 
\[
\mathrm{Pr}_{D^{m,\epsilon}_{\xi[0],j},\mathcal{D}[\xi[0]]}\left(\Pi^{\mathbf{c}-t_j\cdot\mathbf{w}(s)}_{m,G}\right)\ge \mathrm{prob}^{\mathrm{l},\max}_{m,G}(\xi[0],\mathbf{c}-t_j\cdot\mathbf{w}(s))-\epsilon\enskip.
\]
From the Lipschitz continuity, we obtain
\[
\lim_{\stackrel{N\rightarrow +\infty}{\epsilon\rightarrow 0^+}}\int_0^\infty f_{\mathbf{E}(s)}(t)\cdot \max_{a\in\mathrm{En}(s)}\mathcal{F}(D^{N,\epsilon},a,t,\mathbf{c})\mathrm{d}t=\int_0^\infty f_{\mathbf{E}(s)}(t)\cdot\max_{a\in\mathrm{En}(s)} \mathcal{G}(a,t,\mathbf{c})\,\mathrm{d}t
\]
which implies the inductive step for (c). 

It remains to show that the inductive step for (b) holds. Let $s\in L-G$. Fix some $\mathbf{c}',\mathbf{c}\ge\vec{0}$ and 
$a\in\mathrm{En}(s)$. Denote $\delta:={\parallel}{\mathbf{c}-\mathbf{c}'}{\parallel}_\infty$\enskip. If $\mathbf{w}(s)=\vec{0}$, then clearly 
$\left|\mathcal{G}(a,t,\mathbf{c})-\mathcal{G}(a,t,\mathbf{c}')\right|\le\epsilon(m,\delta)$ for all $t\ge 0$. It follows that 
$\left|\mathrm{prob}^{\max}_{m+1,G}(s,\mathbf{c})-\mathrm{prob}^{\max}_{m+1,G}(s,\mathbf{c}')\right|\le \epsilon(m,\delta)$\enskip, which 
implies the result. Otherwise, by (c) and the induction hypothesis (b), we have
\begin{eqnarray*}
& & \left|\mathrm{prob}^{\mathrm{l},\max}_{m+1,G}(s,\mathbf{c})-\mathrm{prob}^{\mathrm{l},\max}_{m+1,G}(s,\mathbf{c}')\right|\\
&\le & \int_0^{T}f_{\mathbf{E}(s)}(t)\cdot\epsilon(m,\delta)\,\mathrm{d}t+f_{\mathbf{E}(s)}(T)\cdot\left|T_{s}^{\mathbf{c}}-T_{s}^{\mathbf{c}'}\right|\\
& \le & (1-e^{-\mathbf{E}(s)\cdot T})\cdot\epsilon(m,\delta)+e^{-\mathbf{E}(s)\cdot T}\cdot\frac{\mathbf{E}_{\max}}{\mathbf{w}_{\min}}\cdot\delta\\
&\le &\frac{\mathbf{E}_{\max}}{\mathbf{w}_{\min}}\cdot\delta
\end{eqnarray*}
where $T:=\min\{T_{s}^{\mathbf{c}},T_{s}^{\mathbf{c}'}\}$\enskip.  Then the inductive step for (b) is completed. 

Secondly, we prove that $\lim\limits_{n\rightarrow\infty}\mathrm{prob}^{\mathrm{l},\max}_{n,G}=\mathrm{prob}^{\mathrm{l},\max}_G$. Let 
$s\in L$ and $\mathbf{c}\in\mathbb{R}^k$. By definition, $\mathrm{prob}^{\mathrm{l},\max}_{n,G}\le\mathrm{prob}^{\mathrm{l},\max}_{n+1,G}$ 
and $\mathrm{prob}^{\mathrm{l},\max}_{n,G}\le\mathrm{prob}^{\mathrm{l},\max}_G$ for all $n\ge 0$. Thus 
$\lim\limits_{n\rightarrow\infty}\mathrm{prob}^{\mathrm{l},\max}_{n,G}$ exists and is no greater than $\mathrm{prob}^{\mathrm{l},\max}_G$. 
For the reverse direction, fix an arbitrary $\epsilon>0$. Let $D$ be a measurable late scheduler such that 
$\mathrm{Pr}_{D,\mathcal{D}[s]}(\Pi_{G}^\mathbf{c})\ge\mathrm{prob}^{\mathrm{l},\max}_G(s,\mathbf{c})-\epsilon$. By definition, 
$\mathrm{prob}_{n,G}^{\mathrm{l},\max}(s.\mathbf{c})\ge\mathrm{Pr}_{D,\mathcal{D}[s]}(\Pi_{n,G}^\mathbf{c})$. It follows that
\[ 
\lim\limits_{n\rightarrow\infty}\mathrm{prob}_{n,G}^{\mathrm{l},\max}(s,\mathbf{c})\ge\lim\limits_{n\rightarrow\infty}\mathrm{Pr}_{D,\mathcal{D}[s]}(\Pi_{n,G}^\mathbf{c})=\mathrm{Pr}_{D,\mathcal{D}[s]}(\Pi_{G}^\mathbf{c})\ge \mathrm{prob}^{\mathrm{l},\max}_G(s,\mathbf{c})-\epsilon
\]
. Thus $\lim\limits_{n\rightarrow\infty}\mathrm{prob}_{n,G}^{\mathrm{l},\max}(s,\mathbf{c})=\mathrm{prob}^{\mathrm{l},\max}_G(s,\mathbf{c})$ since $\epsilon$ is arbitrarily chosen.

Now we prove that $\mathrm{prob}^{\mathrm{l},\max}_G$ is the least fixed point of $\mathcal{T}^{\mathrm{l}}_G$. It is clear that
$\mathrm{prob}^{\mathrm{l},\max}_G(s,\mathbf{c})=\mathbf{1}_{\mathbb{R}_{\ge 0}^k}(\mathbf{c})$ if $s\in G$. By Monotone Convergence 
Theorem, we can obtain 
$\mathrm{prob}^{\mathrm{l},\max}_G=\mathcal{T}^\mathrm{l}_G(\mathrm{prob}^{\mathrm{l},\max}_G)$ from 
$\mathrm{prob}^{\mathrm{l},\max}_{n+1,G}=\mathcal{T}^{\mathrm{l}}_G(\mathrm{prob}^{\mathrm{l},\max}_{n,G})$ 
and $\lim\limits_{n\rightarrow\infty}\mathrm{prob}^{\mathrm{l},\max}_{n,G}=\mathrm{prob}^{\mathrm{l},\max}_G$. Thus, 
$\mathrm{prob}^{\mathrm{l},\max}_G$ is a fixed-point of $\mathcal{T}^\mathrm{l}_G$. To see that it is the least fixed-point of 
$\mathcal{T}^{\mathrm{l}}_G$, one can proceed by induction on $n\ge 0$ that given any fixed-point $h$ of $\mathcal{T}^{\mathrm{l}}_G$, 
$\mathrm{prob}^{\mathrm{l},\max}_{n,G}\le h$ for all $n\ge 0$ by the facts that $\mathrm{prob}^{\mathrm{l},\max}_{0,G}\le h$ and 
$\mathrm{prob}^{\mathrm{l},\max}_{n+1,G}=\mathcal{T}^\mathrm{l}_G(\mathrm{prob}^{\mathrm{l},\max}_{n,G})$. It follows that 
$\mathrm{prob}^{\mathrm{l},\max}_{G}\le h$ for any fixed point $h$ of $\mathcal{T}^{\mathrm{l}}_G$\enskip.

Finally, by taking the limit from (b) we can obtain that
\[
\left|\mathrm{prob}^{\mathrm{l},\max}_G(s,\mathbf{c})-\mathrm{prob}^{\mathrm{l},\max}_G(s,\mathbf{c}')\right|\le \frac{\mathbf{E}_{\max}}{\mathbf{w}_{\min}}\cdot{\parallel}{\mathbf{c}-\mathbf{c}'}{\parallel}_\infty
\]
for all $\mathbf{c},\mathbf{c}'\ge \vec{0}$ and $s\in L$\enskip.
\end{proof}

\end{document}